\documentclass[amsmath,amsfonts,amssymb]{sig-alternate}

\begin{document}

\title{Resolution And Datalog Rewriting \\ 
Under Value Invention And Equality Constraints
\\ 
}

\author{
Bruno Marnette\\
{\large $\mathsf{marnette.bruno}@\mathsf{gmail.com}$}
\\ \ \\
\normalsize{{\bf Disclaimer}: This preliminary report was originally written at the INRIA Saclay in 2010.}  \\
\normalsize{The list of references may therefore be incomplete at the time of this release (Dec 2012)}}

\maketitle

\newtheorem{lemma}{Lemma}
\newtheorem{definition}{Definition}
\newtheorem{theorem}{Theorem}
\newtheorem{observation}{Observation}
\newtheorem{proposition}{Proposition}

\newtheorem{fact}{Fact}

\newtheorem{corollary}{Corollary}
\newtheorem{example}{Example}

\newcommand\Var{\mathcal{V}}
\newcommand\var{\mathcal{V}}
\newcommand\entails{\models}
\newcommand\eqdef{\stackrel{_{\mathsf{def}}}{=}}
\newcommand\chase{\stackrel{_{\!\!\!\mathsf{chase}}}{\longrightarrow}\!\!} 
\renewcommand\flat[1]{\stackrel{_{ #1 }}{\longrightarrow}\!\!} 
\newcommand\parachase{\stackrel{_{\mathsf{chase}}}{\longrightarrow}\!\!} 

\newcommand\res{\stackrel{_{\!\!\mathsf{resol}}}{\longrightarrow}\!\!} 
\newcommand\parares{\stackrel{_{\mathsf{Resol}}}{\longrightarrow}\!\!}

\newcommand\starrow{\stackrel{_{\diamond}}{\longrightarrow}\!\!} 
\newcommand\parastarrow{\stackrel{ _{\diamond}} {\longrightarrow} \!\!} 

\newcommand\Fix{\mathsf{Fix}}

\newcommand\I{\mathcal I}
\newcommand\J{\mathcal J}
\newcommand\Q{\mathcal Q}
\newcommand\R{\mathcal R}
\newcommand\D{\mathcal D}
\newcommand\U{\mathcal U}
\newcommand\G{\mathcal G}

\newcommand\A{\sigma_{\!A}}

\newcommand\resmatch[2]{\mathsf{Match}^\mathsf{rslv}(#1,#2)} 
\newcommand\chasematch[2]{\mathsf{Match}^\mathsf{chs}(#1,#2)}
\newcommand\somematch[2]{\mathsf{Match}^\diamond(#1,#2)}
 
\newcommand\Chase{\mathsf{Chase}}
\newcommand\Resol{\mathsf{Resol}}
\newcommand\depth{\mathsf{Depth}}

\newcommand\Flat{\mathsf{Flat}}

\newcommand\chaseto[1]{\stackrel{{}_{#1}}{\longrightarrow}}

\newcommand\Core{\mathsf{Core}}

\newcommand\Ans{\mathsf{Ans}}

\newcommand\FlatChase{\mathsf{Ans}}

\newcommand\FlatRes{\mathsf{FRes}}

\newcommand\lw{\mathsf{lw}}
\newcommand\gw{\mathsf{gw}}

\newcommand\ang[1]{\langle #1 \rangle}

\newcommand\entailsinf{\,{\models^{\!\!\!\!\!^\infty}}\,}

\newcommand\X{{^{_X}}\!}
\newcommand\Y{{^{_Y}}\!}
\newcommand\Z{{^{_Z}}\!}

\renewcommand\rho{\mathit{pos}}
\newcommand\pos{\mathit{pos}}

\newcommand\SatRes{\mathsf{SRes}}

\newcommand\UCQ{\mathsf{UCQ}}

\newcommand\Certain{\mathsf{Certain}}

\newcommand\M{\mathcal M}
\newcommand\Source{\mathbf S}
\newcommand\Target{\mathbf T}

\newcommand{\V}[1]{{{\mathcal V}_{#1}}}
\newcommand\ok{\mathsf{ok}}

\newcommand\Dom{\mathsf{Dom}}
\newcommand\tw{\mathsf{tw}}

\newcommand\GAV{\mathsf{Gav}}

\newcommand\Ref{\mathsf{Ref}}
\newcommand\CRef{\mathsf{CRef}}

\begin{abstract}
While Datalog is a golden standard for denotational query answering, it does not support  value invention or equality constraints. The \emph{Datalog}$^\pm$ framework introduced by Gottlob faces these issues by considering rules with fresh variables in the head (known as \emph{tgds}) or equalities in the head (known as \emph{egds}). Several tractable classes have been identified, among which: (S) the class of \emph{sticky} tgds; (T) the class of tgds and egds ensuring oblivious termination; and (G) and the class of \emph{guarded} tgds. In turn, the tractability of these classes typically relies on the `chase': (S) ensures that every chase derivation is `sticky'; (T) ensures polynomial chase termination;  and (G) allows stopping the chase after a 
fixed `depth' while preserving completeness. 

This paper shows that there are alternative algorithms (instead of the chase) that can serve as a basis for the design of (larger) tractable classes. As a first contribution, we present an algorithm for \emph{resolution} which is complete for any set of tgds and egds (rather than being complete only for specific subclasses). We then show that a technique of saturation can be used to achieve completeness with respect to First-Order (FO) query rewriting. 
As an application, we generalize a few existing classes (including (S)) that ensure the existence of a finite FO-rewriting. 

We then consider a more general notion of rewriting, called \emph{Datalog rewriting}, and show that it provides a truly unifying paradigm of tractability for the family of Datalog$^\pm$ languages. While the classes (S), (T) and (G) are incomparable, we show that every set of rules in (S), (T) or (G) can be rewritten into an equivalent set of standard Datalog rules. On the negative side, this means intuitively that Datalog$^\pm$ does \emph{not} extend the expressive power of Datalog in the context of query answering. On the positive side however, one may use the flexible syntax of Datalog$^\pm$ while using (only) standard Datalog in the background, thus making use of existing optimization techniques, such as Magic-Set. 
\end{abstract}

\section{Introduction}

While the language Datalog and its extensions have been studied for decades (see e.g. \cite{CeGT90,AV91,AbHV95}), they  recently received a renewed attention. In particular Gottlob et al introduced a comprehensive and unifying framework called Datalog$^\pm$ (\cite{CGLMP10,CaGK08,CaGL09,CGP10}) which is based on a family of Datalog extensions. One of the main qualities of this framework lies in its generality. In particular, Datalog$^\pm$ is expressive enough to cover some interesting classes of ontologies, some light-weight description logics  and some fragments of F-logic (see e.g. \cite{CGLMP10}). It was also argued in \cite{CGLMP10} that Datalog$^\pm$ is useful in a variety of contexts, for example Data Exchange \cite{FKMP05} and the Semantic Web \cite{SemWeb}. The main reason for this, is that Datalog$^\pm$, unlike standard Datalog, addresses the fundamental problem of \emph{value invention} by considering rules with fresh variables in the head. Such rules are known as \emph{tuple-generating dependencies} (\emph{tgds} in short) and correspond to first-order formulas of the form 
$$
\forall \bar x,\bar y,\ \phi(\bar x,\bar y) \to \exists \bar z, \psi(\bar x,\bar z)
$$  
where $\phi$ and $\psi$ are two conjunctions of atoms (that may express \emph{joins}) while $\bar z$ is a tuple of existentially quantified variables that can be used to reason about unknown entities. In addition, Datalog$^\pm$ supports \emph{equality-generating dependencies} (\emph{egds} in short) of the form 
$$
\forall x,y,\bar z,\  \phi(x,y,\bar z) \to x=y
$$  
where $\phi$  is a conjunction of atoms. Tgds and egds can be used to encode typical schema dependencies such as \emph{inclusion dependencies} or \emph{function dependencies} (see e.g. \cite{AbHV95}) which, in turn, allow to reason about structured data. On the negative side, however, the problem of query answering under arbitrary tgds and egds is undecidable \cite{BeVa81}, and it was observed in \cite{CaGK08} that the problem remains undecidable even for a \emph{fixed} set tgds. 

To avoid undecidability, the Datalog$^\pm$ framework typically considers (in \cite{CGLMP10}) three alternative restrictions called \emph{termination}, \emph{guardedness} and \emph{stickiness}, defined as follows:\\

\noindent{\bf Termination:} There exists a chase procedure  \cite{FKMP05,ChaseRevisited,CaGK08,brunoPODS} that, for a given database, computes a universal solution in polynomial time (data complexity) which, in turn, can be used for sound and complete query answering.  \\

\noindent {\bf Guardedness}: The set of dependencies consists of \emph{guarded} tgds and \emph{separable} egds, in which case, as shown in \cite{CaGK08}, it is possible to reach tractability, while remaining complete, by stopping the oblivious chase after a fixed depth.  \\

\noindent {\bf Stickiness}: The set of dependencies is a set of tgds (and separable egds) ensuring that every chase derivation is \emph{sticky} \cite{CGP10}, meaning intuitively that the fresh variables introduced by the chase are only propagated in an harmless way.   
\\
\newpage

The three classes discussed above are unfortunately incomparable, and the only property that really unifies them in \cite{CGLMP10} is the fact that they are all based on the \emph{chase}, either in their definition or in terms of properties. This approach has several advantages and it contributes, in particular, to the simplification of the `big picture'. However, there are alternative ways of approaching the above classes. 
 For example, the guarded fragment can be decided, at least in some cases, using \emph{tableaux} algorithms or \emph{resolution} algorithms (see e.g. \cite{Nivelle03,KazNiv04, H02,GuardProc}). Similarly, the criterion of stickiness can be understood as a criterion ensuring the termination of resolution (as opposed to a specific property of the chase). In fact, a custom resolution procedure was proposed in \cite{CGP10} for the case of stickiness. However, this procedure relies on the specific properties of sticky tgds, and it is relevant only for this class of dependencies. In contrast with \cite{CGP10}, a contribution of this paper will be to consider a resolution algorithm which is defined (and complete) for arbitrary sets of tgds and egds, as to identify larger tractable classes.\\

While exploring some alternatives to the chase procedure, this paper follows a similar methodology to that of Datalog$^\pm$ in the sense that it aims at identifying a \emph{unifying} paradigm. Towards this goal, we will consider, beside resolution, several notions of \emph{rewriting}: 

\begin{itemize}
\item {\bf Data Rewriting} \emph{(a.k.a. Universal Solutions)}. \\ 
Given a database $\D$ and a set of dependencies $\Sigma$, a \emph{data rewriting} for $\D$ is a new database $\D_\Sigma$ which integrates all the information can be inferred from $\Sigma$. Given such a rewriting $\D_\Sigma$, we can then test whether a conjunctive query $\Q$ is implied by $\D$ and $\Sigma$ by testing whether $\Q$ is implied by $\D_\Sigma$. 
This technique can be used whenever the chase terminates since a \emph{universal solution} is in fact a special case of data rewriting. 
\item {\bf Query Rewriting} \emph{(a.k.a. First-Order Rewriting)}.\\
Given a query $\Q$ and a set of dependencies $\Sigma$, a \emph{query rewriting} for $(\Sigma,\Q)$ is a query $\Q_\Sigma$ (in a  first-order language) which integrates all the information  encoded in  $\Sigma$. Given such a rewriting $\Q_\Sigma$, we can test whether $\Q$ is implied by $\Sigma$ and a database $\D$ by testing whether $\Q_\Sigma$ is implied by $\D$. As already observed in \cite{CGP10}, the technique of query rewriting can be used whenever $\Sigma$ is a set of sticky tgds. 
\item {\bf Datalog Rewriting} \emph{(The Unifying Paradigm)}.\\ 
Given a conjunctive query $\Q$ and a set of dependencies $\Sigma$, a Datalog rewriting for $(\Sigma,\Q)$ consists of a new pair $(\Sigma',\Q')$ where $\Sigma'$ is a set of standard Datalog rules (without fresh variables in the head) and $\Q'$ is a query such that $(\Sigma',\Q')$ is equivalent to $(\Sigma,\Q)$ with respect to query answering. 
\end{itemize}
 
As we will show in this paper, the technique of Datalog rewriting is not only a strictly generalisation of (first-order) query rewriting, but it can also be used in the case of terminating dependencies (instead of data rewriting) and in the case of guarded dependencies (instead of relying on the chase). In this sense, Datalog rewriting is therefore a truly unifying paradigm for the family of Datalog$^\pm$ languages. 
Also, Datalog rewriting was proved useful for classes of dependencies that are not (yet) covered by the Datalog$^\pm$ framework, in particular in the context of Description Logic (see e.g. \cite{pumh08rewriting}). In this sense, heuristics for Datalog rewriting can be used to further generalize the Datalog$^\pm$ framework. 

\subsection*{Structure and Main Contributions}

  The preliminary section formalizes the problem of query answering (under constraints) as a basic implication problem. For the sake of concision and symmetry, the key notions of \emph{databases}, \emph{queries} and \emph{dependencies} are all based of sets of atoms with variables only (called \emph{instances}). Constants, null values, and non-boolean queries with equality atoms are discussed in Section 3.5.   \\

\noindent {\bf (1)} In Section 3.1, we propose a concise definition of \emph{resolution} for arbitrary tgds and egds and show that it is complete for query answering. In contrast with alternative approaches (such as \cite{CompSkol}), this definition does \emph{not} rely on skolemization and unskolemization. Instead, it relies simply on instances and renamings  (a.k.a. homomorphisms). \\

\noindent {\bf (2)} In Section 3.2, we show that a technique of saturation can be used to reach completeness with respect to finite rewritability. More precisely, the saturated chase (like the core chase of \cite{ChaseRevisited}) computes a finite data rewriting whenever there exists one. Similarly, the saturated resolution computes a finite query rewriting whenever there is one. \\

\noindent {\bf (3)} In Section 3.3, we focus on tgds and revisit the notion of stickiness. We observe that there is a reduction from the class of sticky tgds to the class of lossless tgds which, in fact, can be used to identify larger classes or rewritable queries. \\

\noindent {\bf (4)} In Section 3.4, we show that query rewriting under egds (as opposed to tgds only) is more complex. We then propose a heuristic for the integration of a set of egds into a set of tgds, thus allowing (in some cases) to rely on stickiness and query rewriting despite the presence of conflicting egds. \\

\noindent {\bf (5)} In Section 4.1, we show that data rewriting captures the class of dependencies ensuring the termination of the oblivious chase (from \cite{brunoPODS}), and as a special case, the class of weakly acyclic dependencies (from \cite{FKMP05}). \\

\noindent {\bf (6)} In Section 4.2, we finally show that data rewriting also captures the class of guarded tgds. This result is the most technical and arguably the most important contribution.\\

The proofs (or more accurately, the proof sketches) have been included to the body of the paper. Readers unfamiliar with the problem of query answering under tgds and egds are invited to consult e.g. \cite{CGLMP10} for more examples, applications and related works. Additional proof details (for some of the proofs) can be found in \cite{mythesis}.

\section{Preliminaries}

We chose the \emph{infinite model} semantic. Recall however that it coincides with the \emph{finite model} semantic under either termination (\cite{ChaseRevisited,brunoPODS}),  guardedness (\cite{BaGO10}), or  stickiness (\cite{CGP10}). \\

\noindent {\bf Instances and Dependencies.} Let $\Var$ be a countable set of variables and let $\sigma$ be a finite set of predicates. We assume that each $R \in \sigma$ comes with a fixed and finite arity $a_R$.  
An instance $I$ is a set of atoms $R(\bar v)$ where $R \in \sigma$ and $\bar v$ is a tuple of variables respecting the arity of $R$, that is, $\bar v \subseteq \Var^{a_R}$. We let $\V{I}$ be the set of variables occurring in an instance $I$. A \emph{tgd} is a rule $B \to H$ where $B$ and $H$ are two finite instances called the \emph{body} and the \emph{head} of $r$, respectively. We say that a tgd $r$ is \emph{of the form} 
$
 B(X,Y) \to H(X,Z)
$
when $X = \V B \cap \V H$,  $Y= \V B {\setminus} X$, and $Z = \V H {\setminus} X$. Such a tgd $r$ is called a \emph{Datalog rule} when $Z=\emptyset$. An \emph{egd} $r$ is a rule $B \to x=y$ where $B$ is a finite instance and $x,y \in \V B$. An egd $r$ is \emph{of the form} 
$B(x,y,Z) \to x=y$ when $Z = \V B{\setminus}\{x,y\}$.\\

\noindent {\bf Semantics.} 
A \emph{renaming} is a mapping from $\Var$ to $\Var$. Given an instance $I$ and a renaming $\theta$, we let $I[\theta]$ be the set of atoms $R(\theta(t_1),\ldots,\theta(t_n))$ where $R(t_1,\ldots,t_n) \in I$. 
Given two instances $I$ and $J$ we let $I \entails J$ when $J[\theta] \subseteq I$ for some renaming $\theta$. 
Given an instance $I$ and a set $\J$ of instances, we let $I \entails \J$ when $I \entails J$ for some $J \in \J$. Given two sets $\I$ and $\J$ of instances, we let $\I \entails \J$ when $I \entails \J$ for all $I \in \I$. Note that these definitions are consistent with respect to singletons. In particular, we have $I \entails J$ iff $\{I\} \entails \{J\}$. Given two (sets of) instances $\I$ and $\J$ we write $\I \equiv \J$ when both $\I \entails \J$ and $\J \entails \I$.

Given $X \subseteq \Var$, a renaming \emph{of $X$} is a renaming $\theta$ such that $\theta(v)=v$ for all $v \in (\Var\setminus X)$. In the following, we use the notation $\theta_X$ to indicate that $\theta_X$ is a renaming of $X$. 
 Given $Y\subseteq \Var$ disjoint from $X$, and given a renaming $\theta_Y$, we denote by $[\theta_X,\theta_Y]$ the  renaming of $X \cup Y$ that coincides with $\theta_X$ on $X$ and with $\theta_Y$ on $Y$.  

Given an instance $I$ and a tgd $r:B(X,Y) \to  H(X,Z)$, we let $I \entails r$ when for all $\theta_{X}$ and $\theta_Y$  such that $B[\theta_{X},\theta_Y] \subseteq I$, there exists $\theta_Z$ such that $H[\theta_X,\theta_Z] \subseteq I$. Given an instance $I$ and an egd $r:B(x,y,Z) \to x=y$, we write $I \entails r$ when for all $\theta_{x}$, $\theta_y$ and $\theta_Z$ such that $B[\theta_x,\theta_y,\theta_Z] \subseteq I$, it holds that $\theta_{x}(x)=\theta_{y}(y)$. Given a set $\Sigma$ of tgds and egds, we finally write $I \entails \Sigma$ when $I \entails r$ for all tgds and egds $r \in \Sigma$. \\

Note that, for two instances $I$ and $J$, the property $I \entails J$ corresponds to several equivalent intuitions:
 the \emph{boolean conjunctive query} $J$ is true in the \emph{database} $I$; the query $I$ is \emph{contained} in the query $J$; the \emph{formula} $J$ is \emph{implied} by $I$; the instance $I$ is a \emph{model} of $J$; or there is an \emph{homomorphism} from $J$ to $I$. A similar comment holds for sets  of instances.  

In the following definition, $\D$ and $\Q$ denote two sets of instances and $\Sigma$ denotes a set of dependencies. Intuitively: $\D$ corresponds to a database or a set of databases (representing a set of possible models $M$), 
$\Sigma$ corresponds to an ontology or a set of structural dependencies; and $\Q$ corresponds to a union of boolean conjunctive queries.  

\begin{definition}
We say that $\Q$ is certain in $\D$ under $\Sigma$, denoted $D \land \Sigma \entails \Q$, iff, for all instance $M$ such that $M \entails \D$ and $M \entails \Sigma$, we have $M \entails \Q$.
\end{definition}

\noindent {\bf Chase.} 
We next recall the definition of the \emph{chase} \cite{FKMP05,ChaseRevisited,CaGK08,brunoPODS}. More precisely, the following definition  coincides with \cite{CaGK08} in the case of tgds and with \cite{FKMP05} in the case of egds.

Given an instance $I$ and a set of variables $Z$, we say that $\theta_Z$ is an $I$-fresh renaming when 
 $\theta_Z$ is a renaming of $Z$, $\theta_Z$ is injective on $Z$ and its image $\theta_Z(Z)$ is disjoint from $\V{I}$. 
Given two variables $u$ and $v$ we denote by $[u{\gets}v]$ the unique renaming $\theta$ of $\{v\}$ satisfying 
 $\theta(v) = u$. Given an instance $I$, 
 and a  set $\Sigma$ of tgds and egds, we  write $I \chase_\Sigma\ J$ when either $J=I$ or one of the following rules applies:
\pagebreak 
\begin{itemize}
\item[{\small(tgd)}] There is a tgd $r:  B(X,Y) \to H(X,Z)$ in $\Sigma$ 
 and some renamings $\theta_{X}$ and $\theta_Y$ such that $B[\theta_X,\theta_Y] \subseteq I$ and $J = I \cup H[\theta_X,\theta_Z]$ for some $I$-fresh renaming $\theta_Z$. 
\item[{\small(egd)}] There is an egd $r:B(x,y,Z) \to x=y$ in $\Sigma$ and some renamings $\theta_x$, $\theta_y$, and $\theta_Z$ such that $B[\theta_x,\theta_y,\theta_Z] \subseteq I$ and 
 $J = I[\theta_y(y) \gets \theta_x(x)]$.
\end{itemize} 
Using the symbol $*$ for the transitive closure, we finally write $J \in \Chase(I,\Sigma)$ when $I \chase_\Sigma^*\ J$. 

\section{First-Order Resolution}

\subsection{Definition and Completeness}

We next propose a  definition of \emph{resolution} for tgds and egds which, unlike alternative approaches (such as \cite{CompSkol}), does not require any complex algorithm of (un)skolemization.

 Given an instance $Q$ and a set $\Sigma$ of tgds and egds, we 
write $Q \res_\Sigma\, R$ when one of the following rules applies:
\begin{itemize}
\item[{\small(ren)}] $R = Q[\theta]$ for some renaming $\theta$. 
\item[{\small(tgd)}] There is a tgd $B(X,Y) \to H(X,Z)$ in $\Sigma$ and three renamings $\theta_X$, $\theta_Y$, $\theta_Z$ such that 
$$
R = (Q \setminus H[\theta_X,\theta_Z]) \cup B[\theta_X,\theta_Y] 
$$
and the following conditions hold:\\
- there is at least one atom in $Q \cap H[\theta_X,\theta_Y]$;\\ 
- $\theta_Z$ is an injection from $Z$ to $(\Var \setminus \Var(R))$.
\item[{\small(egd)}] There is an egd $B(x,y,Z) \to x=y$ in $\Sigma$ and three substitutions $\theta_x$, $\theta_y$, $\theta_Z$, such that 
$\theta_x(x)$ occurs several times in $B$, $\theta_x(x) \not= \theta_y(y)$, and $R$ is obtained by:\\
- first renaming some occurrences of $\theta_x(x)$ into $\theta_y(y)$;\\ 
- and then adding the atoms of $B[\theta_x,\theta_y,\theta_Z]$. 
\end{itemize}
We finally write $R \in \Resol(Q,\Sigma)$ when $Q \res_\Sigma^*\ R$.

\begin{theorem}
For all set $\Sigma$ of tgds and egds, and for all instances $D$ and $Q$, the following statements are equivalent: 
\begin{itemize}
\item[(1)] $D \land \Sigma \entails Q$ 
\item[(2)] $\exists\, U \in \Chase(I,\Sigma),\ U \entails Q$
\item[(3)] $\exists\, R \in \Resol(Q,\Sigma),\ D \entails R$ 
\end{itemize}
\end{theorem}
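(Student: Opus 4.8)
The plan is to prove $(1)\Leftrightarrow(2)$ and $(2)\Leftrightarrow(3)$ separately, which together yield the three-way equivalence. Here $(2)\Rightarrow(1)$ is the soundness of the chase, $(1)\Rightarrow(2)$ is its completeness (via a universal model), and $(2)\Leftrightarrow(3)$ is a step-by-step duality between the chase and resolution. I would also record the soundness of resolution, $(3)\Rightarrow(1)$, as an independent sanity check, even though it is subsumed by the cycle. Throughout I use that $\entails$ is transitive (homomorphisms compose), so that whenever $I\entails R$ and $R\entails Q$ one gets $I\entails Q$, and dually that model-preservation composes along $\chase$- and $\res$-derivations.

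\emph{Soundness.} A single chase step preserves all models: if $I\chase_\Sigma J$, $M\entails I$ and $M\entails\Sigma$, then $M\entails J$. For the (tgd) rule one takes a homomorphism $h\colon I\to M$, notes that $h$ composed with $[\theta_X,\theta_Y]$ satisfies the body of $r$ in $M$, invokes $M\entails r$ to get a head witness $\theta_Z'$, and \emph{extends} $h$ by $\theta_Z(z)\mapsto\theta_Z'(z)$; freshness of $\theta_Z$ makes this well defined. The (egd) rule is analogous, using that $M\entails r$ forces the two variables to coincide. Dually, each resolution step is sound: if $M\entails\Sigma$ and $M\entails R$ then $M\entails Q$. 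For a (tgd) resolution step one writes $Q=(Q\setminus H[\theta_X,\theta_Z])\cup(Q\cap H[\theta_X,\theta_Z])$, maps the first part into $M$ through the homomorphism witnessing $M\entails R$, and maps the fresh head variables $\theta_Z(Z)$ through the head witness provided by $M\entails r$; the requirement that $\theta_Z$ injects into $\Var\setminus\V R$ guarantees the two partial maps agree on shared variables. Inductions on the lengths of the derivations then give $(2)\Rightarrow(1)$ and $(3)\Rightarrow(1)$.

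\emph{Duality $(2)\Leftrightarrow(3)$.} The core is that a single chase step can be mirrored by resolution and conversely, while a homomorphism is pushed across. I would prove two lemmas and iterate them. (i) If $I\chase_\Sigma J$ and there is a homomorphism $g\colon Q\to J$, then $Q\res_\Sigma^{*}R$ for some $R$ with $I\entails R$: for a (tgd) step one first performs the (ren) step $Q\res Q[g]$, so that the atoms of $Q$ landing in the newly generated head become \emph{literally} the atoms of $H[\theta_X,\theta_Z]$, and then applies the (tgd) resolution rule with the \emph{same} renamings $\theta_X,\theta_Y,\theta_Z$ used by the chase; since $B[\theta_X,\theta_Y]\subseteq I$ and the atoms of $Q[g]$ outside the head already lie in $I$, the resolvent satisfies $R\subseteq I$, whence $I\entails R$. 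The (egd) case is dual: a (ren) step aligns $Q$ with $J=I[\theta_y(y)\gets\theta_x(x)]$, and the (egd) resolution rule ``un-merges'' by renaming the appropriate occurrences of $\theta_x(x)$ back to $\theta_y(y)$ and re-adding $B[\theta_x,\theta_y,\theta_Z]$, again landing inside $I$. (ii) Symmetrically, if $Q\res_\Sigma R$ and $I\entails R$ via $f$, then $I\chase_\Sigma^{*}J$ for some $J$ with a homomorphism $Q\to J$: a (ren) step is absorbed into $f$; a (tgd) step is simulated by firing the corresponding chase (tgd) on the body image $B[f\circ\theta_X,f\circ\theta_Y]\subseteq I$ and then mapping $Q$'s head atoms through the freshly created head; and a (egd) step is simulated by the corresponding chase (egd). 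Iterating (i) from the last chase step backwards turns a witness for $(2)$ into a witness for $(3)$, and iterating (ii) from the last resolution step backwards does the converse.

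\emph{Completeness $(1)\Rightarrow(2)$.} I argue by contraposition: assuming $U\not\entails Q$ for every $U\in\Chase(D,\Sigma)$, I build a model $M$ with $M\entails D$, $M\entails\Sigma$ and $M\not\entails Q$, refuting $(1)$. The model $M$ is the limit of a \emph{fair} chase sequence, in which every applicable tgd and egd is eventually fired; fairness yields $M\entails\Sigma$, and $M\entails D$ is preserved along the sequence. Since $Q$ is finite, any homomorphism $Q\to M$ would factor through some finite stage $U_i$, giving $U_i\entails Q$ against the hypothesis, so $M\not\entails Q$. The main obstacle is exactly this completeness step in the presence of egds: unlike the monotone tgd-only case, an egd step rewrites existing atoms rather than merely adding them, so the ``limit'' is not a plain union and one must check that the successive variable-merges stabilise, leaving a well-defined $M$ that still satisfies $\Sigma$ (this is the one place where the standard results of \cite{FKMP05,CaGK08} are invoked). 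A secondary source of care is the bookkeeping in the duality, namely keeping the freshness condition on $\theta_Z$ consistent between chase and resolution and handling the occurrence-level splitting in the (egd) resolution rule.
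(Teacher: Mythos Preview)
Your proposal is correct and follows essentially the same route as the paper for the key direction $(2)\Rightarrow(3)$: align $Q$ with the last chase result via a (ren) step and then undo the chase step by the matching resolution rule, iterating backwards along the chase derivation. The paper defers $(1)\Leftrightarrow(2)$ to the literature and closes the cycle via $(3)\Rightarrow(1)$ (soundness) rather than your symmetric lemma~(ii), but these are organizational differences, not substantive ones.
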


\begin{proof}[Sketch] The equivalence between  (1) and (2) is well-known. Recall however that it holds only under the infinite model semantics (see e.g. \cite{ChaseRevisited} for more details).  
The implication (3) $\Rightarrow$ (1) means that the resolution is \emph{sound}, and it is fairly straightforward. We next prove that (2) implies (3). We proceed by induction, and show that the following property holds for all $n \ge 0$:
\begin{itemize}
\item[] If there is a chase derivation of length $n$ of the form 
$$D=I_0 \chase_{\Sigma}\ I_1 \chase_{\Sigma}\ \ldots \chase_\Sigma\ I_n \ \mbox{where} \ I_n \entails Q$$
then there exists $R \in \Resol(Q,\Sigma)$ such that $D \entails R$. 
\end{itemize}
In the case  $n=0$, the property holds with $R=Q$. Assume now the property true for $n-1$ and consider a derivation of length $n$ as above. Let $h$ be a renaming such that $Q[h] \subseteq I_n$ and let $Q' = Q[h]$. Because of the resolution rule 
(ren), we have $Q \res_\Sigma\ Q'$. We now distinguish two cases: \\

{\bf (1)} Assume that $I_n$ is obtained by chasing a tgd $$B(X,Y) \to H(X,Z)$$ and consider $\theta_X$,$\theta_Y$,$\theta_Z$  such that $B[\theta_X,\theta_Y] \subseteq I_{n-1}$ and $I_n = I_{n-1} \cup H[\theta_X,\theta_Z]$. If $Q' \subseteq I_{n-1}$, we can apply the induction hypothesis for $n-1$. Otherwise, there is at least one atom 
in $Q' \cap H[\theta_X,\theta_Z]$ and we have $Q' \res_\Sigma\ R$ for 
$$
R = (Q' \setminus H[\theta_X,\theta_Z]) \cup B[\theta_X,\theta'_Y].  
$$
Since $R \subseteq I_{n-1}$ we have $I_{n-1} \entails R$ and we can easily conclude the inductive step. \\

{\bf (2)} Assume that $I_n$ is obtained by chasing an egd
$$B(x,y,Z) \to x=y$$ 
and consider $(\theta_x,\theta_y,\theta_Z)$ such that $B[\theta_x,\theta_y,\theta_Z]\subseteq I_{n-1}$ and 
 $I_{n} = I_{n-1}[\theta']$ for $\theta'=[\theta_y(y) {\gets} \theta_x(x)]$. For each atom $A \in Q$, choose an atom 
$c(A) \in I_{n-1}$ such that $\theta'(c(A))=A$ and let $R = \{c(A), A \in Q'\} \cup B[\theta_x,\theta_y,\theta_Z]$. 
We can finally check that $Q' \res_\Sigma\ R$ and conclude the inductive step. 
\end{proof}

\subsection{Rewriting and Saturation}

We next formalize the notions of data rewriting and query rewriting before showing that completeness (with respect to finite rewritability) can be reached, in both cases, by \emph{saturating} the relation $\chase\ $ or $\res\ $ in a rather natural way.

\begin{definition}
Consider a set $\Sigma$ of dependencies and two sets $\D$ and $\Q$ of instances. 
A \emph{data rewriting} for $(\D,\Sigma)$ is a set of instances $\U$ such that, for all set $\Q'$ of instances:
$$
(\D \land \Sigma \entails \Q')  \ \ \mbox{iff} \ \ (\U \entails \Q'). 
$$  
Similarly, a \emph{query rewriting} for $(\Sigma,\Q)$ is a set of instances $\R$ such that, for all set $\D'$ of instances:
$$
(\D' \land \Sigma \entails \Q)  \ \ \mbox{iff} \ \ (\D' \entails \R). 
$$  
\end{definition}

{\bf Saturation.} Let $\starrow\ \,\in \{\chase\ ,\res \ \}$. Given two finite sets of instances $\I$ and $\J$, we write $\I \starrow \J$ when, intuitively, $\J$ is a concise representation (up to equivalence) of all the instances $J$ such that  $I \starrow J$ for some $I \in \mathcal I$. More formally, we let $\I \starrow \J$ when $\J$ can be returned by the following non-deterministic algorithm: 
\begin{center}
\begin{tabular}{l}
Start with $\J := \I$. \\
Repeat (until fixed point)\\  
\quad  If there is an instance $J$ such that: \\
\qquad (1) $I \starrow_\Sigma\ J$ for some $I \in \I$; and \\
\qquad (2.1) we are in the case $\starrow\ \ =\  \chase\ $ and\\ 
\qquad\qquad\  there is no $J' \in \J$ such that $J \entails J'$; or\\ 
\qquad (2.2) we are in the case $\starrow\ \ =\  \res\ $ and \\
\qquad\qquad\  there is no $J' \in \J$ such that $J' \entails J$\\
\quad Then let $\J:= \J \cup \{J\}$\\
Return $\J$.
\end{tabular} 
\end{center}

Note that the above definition can easily by translated into an actual algorithm since the set of instances $I$ satisfying the point (1) is finite up to isomorphism (and can be enumerated in exponential time). We define a \emph{complete derivation} for $\I$ and $\starrow$\ \  as an infinite series $(\I_i)_{i \ge 0}$ where $\I_0 = \I$ and $\I_i \starrow \I_{i+1}$ for each $i\ge 0$. We finally let $\J \in \Fix(\I,\starrow\ )$ when 
 $\J = \bigcup_i \I_i$ for some complete derivation $(\I_i)_{i \ge 0}$.

\begin{theorem} \ 
For all set $\Sigma$ of tgds and egds, and for all sets $\D$ and $\Q$ of instances: 
\begin{itemize}
\item Every $\U \in \Fix(\D,\chase_\Sigma )$ is a data rewriting of $(\D,\Sigma)$. 
\item Every $\R \in \Fix(\Q,\res_\Sigma )$ is a query rewriting of $(\Sigma,\Q)$.
\end{itemize}
\end{theorem}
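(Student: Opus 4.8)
The plan is to reduce both assertions to Theorem~1 through two lemmas, one expressing soundness and one expressing completeness, and then to lift the single-instance equivalences of Theorem~1 to the set-valued certainty appearing in the definition of rewriting.

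First I would record a \emph{soundness} lemma: every instance occurring in some $\U \in \Fix(\D,\chase_\Sigma)$ is a sound consequence, i.e.\ $\D \land \Sigma \entails U$, and dually every $R$ occurring in some $\R \in \Fix(\Q,\res_\Sigma)$ is a sound resolvent in the sense of Theorem~1, namely $D' \entails R$ implies $D' \land \Sigma \entails \Q$. This is immediate by induction along a complete derivation, since each $\starrow_\Sigma$ step is either a renaming (ren) or a single chase/resolution step whose soundness is exactly the easy directions (2)$\Rightarrow$(1) and (3)$\Rightarrow$(1) already invoked in Theorem~1. This lemma yields one direction of each rewriting equivalence.

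The heart of the argument is a \emph{completeness} lemma stating that, up to the preorder $\entails$ on sets, the fixpoint recovers the whole reachable set: $\bigcup_i \I_i \equiv \Chase(\D,\Sigma)$ in the chase case and $\bigcup_i \I_i \equiv \Resol(\Q,\Sigma)$ in the resolution case. Two points must be established. (i) \emph{Fairness}: along a complete derivation, every one-step $\starrow_\Sigma$-successor of a previously generated instance is eventually examined, because each $\starrow$ step considers all successors of its input set and the derivation is infinite. (ii) \emph{Subsumption preserves equivalence}: whenever the test discards a candidate $J$ because a retained instance $J'$ already dominates it, $J$ is redundant for the rewriting, so no certain query is lost. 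I would prove (ii) by a commutation argument: if $J'$ dominates $J$ and $J \starrow_\Sigma J_1$, then there is a derivation $J' \starrow_\Sigma^* J_1'$ with $J_1'$ dominating $J_1$; iterating shows that every instance reachable from a discarded $J$ is dominated by one reachable from a retained representative, hence already accounted for (up to $\entails$) in $\bigcup_i \I_i$. Combined with Theorem~1, which guarantees that certainty for a single query is witnessed by some element of $\Chase(\D,\Sigma)$ (resp.\ $\Resol(\Q,\Sigma)$), the completeness lemma transfers that witness into $\Fix$, supplying the remaining direction.

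It then remains to lift Theorem~1 to the set-valued certainty in the definition of rewriting. Since $M \entails \D'$ holds iff $M$ satisfies some member of $\D'$, the certainty $\D' \land \Sigma \entails \Q$ decomposes as a conjunction, over $D' \in \D'$, of the single-database certainties $D' \land \Sigma \entails \Q$; the union structure of $\Q$ (and of $\Q'$ in the data case) is carried by the existential right-hand side of the set relation $\entails$, matching the existential over $\Chase$ and $\Resol$ in Theorem~1, so matching these quantifier alternations database-by-database closes both iffs. I expect the main obstacle to be the interaction of \emph{non-termination} with egds. When the chase or resolution does not terminate, $\Fix$ is infinite, and one must argue by a compactness/monotonicity step that a \emph{finite} certain query is already detected at some finite stage $\I_i$; moreover, because egds identify variables, the relation $\chase_\Sigma$ is no longer monotone, so both the renaming bookkeeping of case (egd) in Theorem~1 and the commutation argument for subsumption must be re-established across equality collapses. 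Verifying that the discarded instances remain redundant once variables are merged is, I anticipate, the most delicate part of the proof.
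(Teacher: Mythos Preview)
Your approach is essentially the same as the paper's. The paper's proof also reduces to Theorem~1, and its sole technical ingredient is exactly your ``commutation argument'', which the paper calls \emph{monotonicity}: for the chase, if $J \entails J'$ and $J' \chase_\Sigma K'$ then $J \chase_\Sigma^* K$ with $K \entails K'$; dually for resolution, if $J' \entails J$ and $J' \res_\Sigma K'$ then $J \res_\Sigma^* K$ with $K' \entails K$. The paper simply says ``it is enough to observe'' this and then proves the resolution case atom-by-atom for (ren), (tgd), (egd); your soundness/completeness/lifting decomposition is more explicit but not different in substance.

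Your anticipated obstacle with egds is somewhat overstated. The paper treats chase monotonicity under egds as well-known, and for resolution it handles the egd case of the commutation lemma alongside the tgd case with no special difficulty beyond tracking which occurrences of $\theta_x(x)$ were renamed into $\theta_y(y)$; this is exactly the ``renaming bookkeeping'' you mention, but it is a two-line verification, not the delicate point you expect. Likewise the non-termination/compactness concern dissolves: any witnessing $R$ already lies in some finite stage $\I_i$ by construction of $\Fix$, so no genuine compactness step is needed.
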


\begin{proof}[Sketch]
The result would follow directly from Theorem 1 if we had removed the requirements (2.1) and (2.2) in the above definition of saturation. To prove that the result still holds with (2.1) and (2.2), it is enough to observe that the chase and the resolution are both monotone: 
\begin{itemize}
\item If $J \entails J'$\  and $J'\chase_\Sigma\  K'$, then 
there exists an instance $K$ such that $J \chase^*_\Sigma\ K$ and $K \entails K'$.
\item If $J' \entails J$ and $J' \res_\Sigma\ K'$, then there exists
an instance $K$ such that $J \res^*_\Sigma\ K$ and $K' \entails K$.
\end{itemize}
The monotonicity of the chase is well-known (see e.g. \cite{brunoPODS}). For the monotonicity of the resolution, consider $J' \entails J$ and $J' \res_\Sigma\ K'$. The case where $K'$ is obtained from $J'$ with the resolution rule (ren) is straightforward. 
Consider now the case (tgd) and unfold the definition of resolution so that  
$$     
K' = (J' \setminus H[\theta_X,\theta_Z]) \cup B[\theta_X,\theta_Y]. 
$$
Let $h$ be a renaming of $\V J$ such that $J[h] \subseteq J'$. If $J[h]$ does not intersect $H[\theta_X,\theta_Z]$ 
 we have $J[h] \subseteq K'$ and the property holds for $K = K'$. Otherwise, considering the instance  
$$     
K = ((J[h]) \setminus H[\theta_X,\theta_Z]) \cup B[\theta_X,\theta_Y]. 
$$
we can check that $I \res_\Sigma\ I[h] \res_\Sigma\ K$ and $K[h] \subseteq K'$. Consider finally the case (egd) and unfold the definition of resolution for an egd $B(x,y,Z) \to x=y$ of $\Sigma$ and three substitutions $\theta_x$, $\theta_y$, $\theta_Z$. For each atom $A \in J'$, let $c(A)$ be the atom of $K'$ that has been obtained from $A$ by renaming some occurrences of $\theta_x(x)$ into $\theta_y(y)$, and observe that: 
$$K' = \cup \{c(A) | A \in J'\} \cup B[\theta_x,\theta_y,\theta_Z].$$
Let $h$ be a renaming of $\V J$ such that $J[h] \subseteq J'$ and let 
$$
K = \cup \{c(A) | A \in J[h]\} \cup B[\theta_x,\theta_y,\theta_Z]. 
$$
As in the previous case, we can finally check that we have $I \res_\Sigma\ I[h] \res_\Sigma\ K$ and $K[h] \subseteq K'$. This concludes the proof of monotonicity and the proof of Theorem 2.
\end{proof}

{\bf Termination.}  Let $(\I_i)_{i \ge 0}$ be a complete derivation of $\I$ and $\starrow\ $. We say that this derivation \emph{terminates} iff there exists a finite $i$ such that $I_{i+1} = I_i$. Note that, in this case, we also have $I_j = I_i$ for all $j \ge i$. We finally say that $\Fix(\I,\starrow_\Sigma)$ is finite when \emph{all} the instances $J \in \Fix(\I,\starrow_\Sigma)$ are finite, meaning (equivalently) that \emph{all} the derivations of $\I$ by $\starrow_\Sigma$ are finite.

\begin{theorem} \ 
Given a finite set $\Sigma$ of tgds and two finite sets $\D$ and $\Q$ of instances, the following statements hold: 
\begin{itemize}
\item If there exists a finite data rewriting for $(\D,\Sigma)$,\\
 then $\Fix(\D,\chase_\Sigma\ )$ is finite.
\item If there exists a finite query rewriting for $(\Sigma,\Q)$,\\
then $\Fix(\Q,\res_\Sigma\ )$ is finite. 
\end{itemize}
\end{theorem}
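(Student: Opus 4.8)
\section*{Proof proposal for Theorem 3}

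The plan is to derive termination from two ingredients that are already in place: by Theorem~2 every element of $\Fix(\D,\chase_\Sigma)$ is a data rewriting and every element of $\Fix(\Q,\res_\Sigma)$ is a query rewriting (with \emph{no} finiteness assumed of the fixpoint), and, as I would first record in a short \emph{uniqueness lemma}, rewritings are unique up to $\equiv$. For the data case, if $\U_1,\U_2$ are both data rewritings then, instantiating the defining equivalence at $\Q':=\U_2$ and using $\U_2 \entails \U_2$ (the identity renaming), one gets $\U_1 \entails \U_2$, and symmetrically $\U_2 \entails \U_1$, so $\U_1 \equiv \U_2$; the query case is identical with the roles of $\D'$ and $\R$ exchanged. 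Consequently, writing $\U^\ast$ (resp.\ $\R^\ast$) for the hypothesised \emph{finite} rewriting, every $\J \in \Fix(\D,\chase_\Sigma)$ satisfies $\J \equiv \U^\ast$ and every $\R \in \Fix(\Q,\res_\Sigma)$ satisfies $\R \equiv \R^\ast$.

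Next I would turn this equivalence into a bound on the \emph{stage} at which the saturation creates instances. Take the chase case. Since $\U^\ast \entails \J$ and $\U^\ast$ is finite, for each $U \in \U^\ast$ I can pick one witness $J_U \in \J$ with $U \entails J_U$; set $\J_0 = \{J_U : U \in \U^\ast\}$, a finite subset of $\J$. For an \emph{arbitrary} $J \in \J$, the other direction $\J \entails \U^\ast$ yields some $U \in \U^\ast$ with $J \entails U$, and composing the two renamings gives $J \entails J_U$ for the corresponding $J_U \in \J_0$. This is precisely the comparison monitored by the chase redundancy test: condition (2.1) forbids adding $J$ while some $J'$ with $J \entails J'$ is already present, so $J_U$ cannot have been created strictly before $J$, i.e.\ $J$ is created no later than its witness $J_U$.

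With this ordering the finiteness is a counting argument. Fix a complete derivation $(\I_i)$ with $\J = \bigcup_i \I_i$; since $\D$ is finite and the one-step successors of a finite instance are finite up to renaming, every stage $\I_i$ is finite. Let $T$ be the (finite) stage by which all the finitely many elements of $\J_0$ have appeared. By the previous paragraph every $J \in \J$ is created no later than its witness in $\J_0$, hence by stage $T$; therefore $\J = \I_T$ is finite and the derivation terminates. The resolution case is dual: using the uniqueness of query rewritings together with the monotonicity of $\res$ established in the proof of Theorem~2, one extracts from $\R \equiv \R^\ast$ a finite subset of $\R$ that dominates every resolvent in the homomorphism direction vetoed by condition (2.2), and the same stage-counting argument bounds the stage by which all of $\R$ is produced.

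The hard part will be the bookkeeping of homomorphism directions. The two clauses of $\equiv$ supply renamings pointing opposite ways, and the redundancy tests (2.1) for the chase and (2.2) for resolution monitor opposite comparisons; the argument only closes if the finite dominating subset is extracted so that the \emph{composed} renaming points in exactly the direction the relevant test can veto, which for the resolution case is where the monotonicity of $\res$ must be invoked. A secondary subtlety is that each test compares a candidate only against the \emph{current} set rather than the whole fixpoint, so one cannot simply assert that $\J$ is an antichain; the stage-counting formulation above is what sidesteps this, since it needs only ``the witness is not older than $J$,'' which is exactly what the test guarantees at the moment of creation.
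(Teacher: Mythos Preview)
Your proposal is correct and follows essentially the same approach as the paper: both invoke Theorem~2 together with the uniqueness of rewritings up to $\equiv$ to conclude that the fixpoint is equivalent to the given finite rewriting, and then argue that this equivalence is already realised at some finite stage $k$ (the paper simply writes ``since $\R$ is finite, there exists a finite $k$ such that $\R \equiv \I_k$; we can then observe that $\I_k = \I_{k+1}$''). Your version is more explicit---you extract a finite dominating subset and spell out the stage-counting via the redundancy test, and you correctly flag that the resolution case requires the monotonicity of $\res$ to align the homomorphism direction with condition~(2.2)---but the underlying argument is the same.
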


\begin{proof}[Sketch]
The two points are similar and we prove here the second one. 
Suppose that there exists a finite query rewriting $\R$ for $(\Sigma,\Q)$ and a consider a complete derivation 
 $(\I_i)_{i \ge 0}$ for $\Q$ and $\res_\Sigma$. Since $\R$ and $\cup i \I_i$ are two rewritings of $(\Sigma,\Q)$, we have 
$\R \equiv \cup i \I_i$. Since $\R$ is finite, there exists a finite $k$ such that $\R \equiv \I_k$. We can then observe that $\I_k = \I_{k+1}$ and conclude that $(\I_i)_{i \ge 0}$ is finite.
\end{proof}

 Note that a direct consequence of Theorems 2 and 3 is the following: as soon as there is \emph{one} finite fixed point in $\Fix(\D,\chase_\Sigma)$ or $\Fix(\Q,\res_\Sigma)$, it is the case that \emph{all} the fixed points are finite. In particular, this means that all saturation strategies are equivalent with respect to termination, both in the case of $\chase\ $ and $\res\ $.

\subsection{Rewritable Classes of Tgds}

This section  illustrates how the resolution procedure from Section 3.1 can be used to `simplify' the big picture on query rewritability. In particular, we first show that it provides a concise proof of finite query-rewritability of a few well-known classes of dependencies. In fact, the results stated in the following proposition are rather well-know. It is however interesting to compare the following proof sketch (based on resolution, and arguably simple) with, for example, the seminal paper of Johnson and Klug \cite{JoKl84} where a proof was given (based on the chase, and arguably complex) for the tractability of \emph{inclusion dependencies}. This class of dependencies is indeed covered (strictly) by the class of \emph{local-as-view} tgds (\emph{lav} tgds) defined in the following proposition. 

\begin{proposition} \label{prop:termres}
In each of the following cases, we can effectively compute a finite query rewriting for $(\Sigma,\Q)$:
\begin{itemize}
\item \emph{(Lav tgds)}  $\Sigma$ is a set of tgds $B \to H$ where the body $B$ contains at most one atom.  
\item \emph{(Lossless tgds)} $\Sigma$ is a set of tgds $B \to H$ where, for each atom $A_h \in H$, we have $\V B \subseteq \V{A_h}$.  
\item \emph{(Acyclicity)} $\Sigma$ is a set of tgds and there is a linear order $\le_\Sigma$ on the predicates of $\Sigma$ such that, 
for all tgd $B \to H$ in $\Sigma$, all predicate $R_b$ occurring in $B$, and all predicate $R_h$ occurring in $H$, we have $R_h \le R_b$.  
\end{itemize}
\end{proposition}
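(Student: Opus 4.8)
The plan is to reduce all three cases to a single termination statement about saturated resolution. By Theorem 2, \emph{any} $\R \in \Fix(\Q,\res_\Sigma)$ is a query rewriting of $(\Sigma,\Q)$, and by the remark following Theorem 3 it suffices to exhibit \emph{one} finite fixed point, since the saturation algorithm is effective (each step enumerates the finitely many $\res_\Sigma$-successors up to isomorphism). Hence in each case my goal is to show that $\Fix(\Q,\res_\Sigma)$ is finite, i.e. that every complete derivation terminates; the desired finite, effectively computable rewriting then follows. Since all three classes consist of tgds only, the egd resolution rule never fires, so I only have to control the rules (ren) and (tgd). The rule (ren) replaces an instance by a renaming of it, which changes neither its predicate multiset nor increases its number of distinct variables, and up to the equivalence pruning of the saturation it produces nothing essentially new; so the whole difficulty lies in bounding the effect of (tgd) steps.

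For the first two cases I bound the reachable instances directly. In the \emph{lav} case, a (tgd) step deletes at least one matched head atom from the current instance and adds the single body atom $B[\theta_X,\theta_Y]$; thus the number of atoms never increases, staying bounded by $|Q|$. As every predicate has bounded arity, the number of distinct variables is then bounded as well, so there are only finitely many instances up to equivalence and every derivation must stabilise. In the \emph{lossless} case, the condition $\V B \subseteq \V{A_h}$ for every head atom $A_h$ forces $\V B \subseteq \V H$, hence $Y=\emptyset$; moreover the variables $\theta_X(X)$ all occur in the matched head atom $A_h[\theta_X,\theta_Z]$, which already belongs to the current instance. Consequently a (tgd) step introduces no new variable, so the number of distinct variables never exceeds $|\V Q|$. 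Over a finite signature this leaves only finitely many instances up to isomorphism, and again $\Fix(\Q,\res_\Sigma)$ is finite.

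The \emph{acyclicity} case is the main obstacle, because here both the number of atoms and the number of variables of a resolvent may grow. I would control it with a well-founded measure built from the order $\le_\Sigma$. Write the predicates as $P_1 <_\Sigma \cdots <_\Sigma P_m$, let $\mathrm{rk}(R)$ be the index of $R$, and to an instance $J$ associate the vector $\mu(J)=(c_1,\ldots,c_m)$, where $c_i$ is the number of atoms of $J$ whose predicate has rank $i$. Order such vectors lexicographically with the \emph{lowest} rank most significant; this order is well-founded on $\mathbb{N}^m$. The claim is that every (tgd) step strictly decreases $\mu$. Indeed, let $r$ be the least rank among the matched head atoms that are deleted. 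By the acyclicity condition every body predicate $R_b$ satisfies $R_h \le_\Sigma R_b$ for every head predicate $R_h$, and --- as the condition is meant to guarantee --- since a predicate cannot occur in both the body and the head of the same rule, the inequality is strict; hence all added body atoms have rank $> r$. The step therefore deletes at least one atom of rank $r$, adds none of rank $r$, and leaves every atom of rank $< r$ untouched, so $c_r$ strictly decreases while $c_1,\ldots,c_{r-1}$ are unchanged and $\mu$ drops in the lexicographic order. Thus every derivation performs only finitely many (tgd) steps, each adding boundedly many atoms; along any path the instances stay of bounded size, there are finitely many of them up to equivalence, and $\Fix(\Q,\res_\Sigma)$ is finite.

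The delicate point I would flag is precisely the strictness of that inequality: the measure decreases only if no predicate occurs simultaneously in the body and the head of a single rule, for otherwise a rule such as transitivity would keep $c_r$ stationary and the resolution could diverge. This is exactly what the linear order $\le_\Sigma$ is intended to forbid, and making this step airtight --- rather than the routine size bounding in the lav and lossless cases --- is where the real content of the proof sits.
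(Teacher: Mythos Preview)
Your argument is correct and matches the paper's proof sketch essentially line for line: bound the number of atoms for lav tgds, bound the number of variables for lossless tgds, and use a lexicographic predicate-count vector for the acyclic case. The strictness concern you flag is real and the paper glosses over it as well---its sketch simply asserts that $s(J)$ is lexicographically smaller than $s(I)$ without further comment---so your reading that the order $\le_\Sigma$ is meant to forbid a predicate from occurring in both body and head of the same rule is exactly the assumption needed to make the decrease strict.
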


\begin{proof}[Sketch] In the case of lav tgds, we can observe that the resolution rules ({hom}) and ({tgd}) never increase the number of atoms. 
More formally, whenever $I \res_\Sigma\ J$, we have $|J| = |I|$. There are therefore a finite number of instances (up to $\equiv$) that can be computed by resolution. As a consequence, every (complete) derivation terminates and every $\R \in \Fix(\D,\res_\Sigma)$ is a finite query rewriting. 
In the case of lossless tgds, the result follows from a very similar observation: the resolution rules ({hom}) and ({tgd}) never increase the number of variables. 
(That is, $I \res_\Sigma\ J$ implies $\V J \subseteq \V I$). Consider finally the case of an acyclic set $\Sigma$ of tgds. Let $\sigma_\Sigma$  be the set of predicates occurring in $\Sigma$ and consider the ordering $\sigma_\Sigma = \{R_1, \ldots,R_n\}$ where $R_{i-1} \le_\Sigma R_{i}$ for each $i \le n$.
For each instance $I$, consider $s(I) = (a_1,\ldots,a_{n})$ where, for each $i \in \{1,..,n\}$, $a_i$ is the number of atoms $R'(\bar v) \in I$ where $R'=R_i$. 
We can observe that, whenever $I \res_\Sigma\ J$, the tuple $s(J)$ is smaller than $s(I)$ with respect to lexicographic order. 
We can finally conclude as in the previous cases. 
\end{proof}

We next shed more light on the notion of stickiness from \cite{CGP10} which was discussed in the introduction. 
In particular, we show that there is in fact a direct reduction (preserving the property of finite rewritability) from the class of sticky tgds to the class of lossless tgds. This reduction proves useful in two ways: (1) it provides a direct proof of rewritability for the class of sticky tgds (which, unlike \cite{CGP10}, does not require a custom resolution algorithm), (2) and it also allows us to identify a more general class of rewritable dependencies. 
 In a nutshell, the key idea of the following reduction consists in replacing an atom $A(\bar x,\bar y)$ by an atomic formula $R(\bar x) \approx (\exists \bar y, A(\bar x,\bar y))$ whenever $A(\bar x,\bar y)$ is the only atom (in a given tgd) where the variables of $\bar y$ occur.  \\
 
\noindent {\bf Simplifying atoms}. Given a tgd $r:B {\to} H$ and an atom $A$ in the body $B$, we let $X_{A,r} = \V A \cap \V H$ and $Y_{A,r} = \V A {\setminus} X_{A,r}$. We then say that $A$ 
\emph{can be simplified} in $r$ when $Y_{A,r}$ is non-empty and disjoint from $\V {B \setminus A}$. For example, in the tgd
$$
A(x_1,y_1), B(x_1,x_2,y_2), C(x_1,y_3),D(y_2) \to R(x_1,x_2,z_1)
$$ 
the atoms $A(x_1,y_1)$ and $C(x_1,y_3)$ can be simplified. In contrast, $B(x_1,x_2,y_2)$ cannot be simplified because $y_2$ occurs in an other atom of the body. 

Consider now a set $\Sigma$ of tgds, a tgd $r{:}B {\to} H$ in $\Sigma$, and an atom $A$ which can be simplified in $r$.
Given a tgd $r'{:}B'{\to}H'$ in $\Sigma'$ and an atom $A' \in H'$ we say that $r'$ \emph{unifies} with $(A,r)$ when there 
 exists a renaming $\theta_1$ of $\V A$ and a renaming $\theta_2$ of $\V{B'}\cap \V{H'}$ such that $A[\theta_1] \in H'[\theta_2]$, in which case $(\theta_1,\theta_2)$ is called a \emph{unifier} of 
$r'$ and $(A,r)$. The following algorithm describes the new set of tgds $\Sigma_{A,r}$ that results from the \emph{simplification} of $(A,r)$ in $\Sigma$:      
\begin{center}
\begin{tabular}{l}
 Let $\bar x = (x_1,\ldots,x_n)$ be an ordering of $X_{A,r} = \{\bar x\} $\\
 Let $R_a$ be a fresh predicate of arity $n$\\
 Replace $A$ by $R_a(\bar x)$ in the body of $r$  \\
 For all tgd $r'{:}B'{\to} H'$ in $\Sigma$, including $r'=r$\\
 \qquad   For all unifier $(\theta_1,\theta_2)$ of $r'$ and $(A,r)$ \\
 \qquad\qquad      Add the tgd $B'[\theta_2] \to R_a(\theta_1(\bar x))$.   
 \end{tabular}
 \end{center}
Note that the quantification of the variables may be modified in this process, and new simplifications may therefore be possible in $\Sigma_{A,r}$.  For instance, the set of tgds
$$
\begin{array}{l}
r_1: A(x,x,y,z,t) \to B(x,y)\\
r_2: C(x,y) \to \exists u,v,\ A(x,y,u,v,v)\\
r_3: D(x,y,z,t) \to A(x,x,y,z,t)
\end{array}
$$
will, after a simplification step in $r_1$,  be replaced by 
$$
\begin{array}{l}
r_1': R_a(x,y) \to B(x,y) \\
r_2: C(x,y) \to \exists u,v,\ A(x,y,u,v,v)\\
r_3: D(x,y,z,t) \to A(x,x,y,z,t) \vspace{2mm}\\
r_2': C(x,x) \to \exists u, R_a(x,u)\\
r_3': D(x,y,z,t) \to  R_a(x,y)
\end{array}
$$
and the first atom of $r'_3$ can now be simplified (even though this atom could not be simplified in $r_3$).

Despite the previous observation, we can check that the process of simplification always terminates since each step introduces only a finite number of tgds, and the number of variables in each of these tgds is strictly decreasing. For a finite set $\Sigma$ of tgds, we finally define (non-deterministically) the set $\Sigma{\downarrow}$ as a set of tgds obtained from $\Sigma$ by repeating the operation of simplification until a fixed point is reached.

\begin{theorem} \label{theo:theosimprew}
For all finite sets $\Sigma$ of tgds and all finite sets $\Q$ of instances, there is a finite query rewriting for  $(\Sigma,\Q)$ iff there is a finite query rewriting for $(\Sigma{\downarrow},\Q)$.
\end{theorem}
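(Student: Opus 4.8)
The plan is to restate finite rewritability as a termination property and then localise the argument to a single simplification step. By Theorem 2 every element of $\Fix(\Q,\res_{\Sigma})$ is a query rewriting, and by Theorem 3 a finite query rewriting for $(\Sigma,\Q)$ exists exactly when $\Fix(\Q,\res_{\Sigma})$ is finite; by the remark following Theorem 3 this is an all-or-nothing property, so ``a finite query rewriting for $(\Sigma,\Q)$ exists'' is equivalent to ``every complete resolution derivation from $\Q$ under $\Sigma$ terminates''. Since the simplification procedure terminates (as already argued), $\Sigma{\downarrow}$ is reached from $\Sigma$ by finitely many one-step simplifications, and the fresh predicate $R_a$ introduced at each step never occurs in $\Q$. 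It therefore suffices to establish the one-step invariant: writing $\Sigma_{A,r}$ for the result of simplifying a single $(A,r)$ in $\Sigma$, every resolution derivation from $\Q$ under $\Sigma$ terminates if and only if every resolution derivation from $\Q$ under $\Sigma_{A,r}$ terminates. The theorem then follows by induction on the number of steps.

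For the one-step invariant I would build a step-by-step correspondence between the two resolution relations around the intended reading $R_a(\bar x)\approx \exists\bar y,\,A(\bar x,\bar y)$. For a $\Sigma_{A,r}$-instance $S$ let $\widehat S$ replace every atom $R_a(\bar t)$ by $A(\bar t,\bar y)$ with pairwise distinct fresh variables; up to $\equiv$ this is well defined and finite-to-one. The definition rules $B'[\theta_2]\to R_a(\theta_1(\bar x))$ are engineered precisely so that $\widehat{\cdot}$ turns each $\res_{\Sigma_{A,r}}$-step into a $\res_{\Sigma}$-step: resolving against the modified rule $r'' : (B\setminus\{A\})\cup\{R_a(\bar x)\}\to H$ becomes resolving against the original $r:B\to H$, since $\widehat{\cdot}$ restores $B[\theta_X,\theta_Y]$ from $(B\setminus\{A\})[\theta_X,\theta_Y]\cup\{R_a(\bar x)[\theta_X]\}$; and resolving an atom $R_a(\bar x)$ against a definition rule $D_{r'}$ reproduces, through its unifier $(\theta_1,\theta_2)$, exactly the step ``resolve the corresponding $A$-atom against the head of $r'$''. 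All other rules are shared, and their steps transfer verbatim. Because $Y_{A,r}$ is disjoint from $\V{B\setminus A}$ and from the head, every atom $A(\bar t,\bar y)$ produced this way is determined up to renaming by $\bar t$ alone, so folding and unfolding $R_a$ changes neither the finiteness nor the infiniteness of the set of reachable resolvents. As $\widehat{\cdot}$ sends $\Resol(\Q,\Sigma_{A,r})$ into $\Resol(\Q,\Sigma)$ and is finite-to-one, termination of resolution under $\Sigma$ forces termination under $\Sigma_{A,r}$; for the converse one mirrors a $\Sigma$-derivation in $\Sigma_{A,r}$, tracking each $A$-atom produced by $r$ as an $R_a$-atom and discharging it against the appropriate $D_{r'}$ wherever the original discharges it against the head of $r'$.

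The main obstacle is the asymmetry in how the simplified atom may be discharged, which is also the reason the two rewritings are not semantically equal (they answer $\Q$ differently on a database consisting of a bare $A$-atom). In a $\Sigma$-derivation an introduced atom $A(\bar t,\bar y)$ may simply be left in a resolvent to be matched against the database, whereas in $\Sigma_{A,r}$ an $R_a(\bar t)$ can only ever be consumed by a definition rule, since no database atom or ordinary head produces $R_a$; moreover the predicate of $A$ may also occur outside the simplified occurrence (in $\Q$, in $r$'s head, or in other rules), and those occurrences must be left untouched and matched identically in both systems. Consequently the correspondence is not a bijection on derivations, and the genuine work is to prove the weaker bookkeeping invariant that this mismatch neither creates nor removes unbounded branching --- precisely, that $\widehat{\cdot}$ and its multivalued inverse are finite-to-one up to $\equiv$ (and the subsumption filter of the saturation) on the actual sets of reachable resolvents, so that $\Fix(\Q,\res_{\Sigma})$ and $\Fix(\Q,\res_{\Sigma_{A,r}})$ are simultaneously finite. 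Once this invariant is in place, the termination of the simplification procedure and the inductive lift to $\Sigma{\downarrow}$ are routine.
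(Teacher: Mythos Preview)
Your plan is coherent but takes a quite different route from the paper. You reduce finite rewritability to termination of saturated resolution (via Theorems~2 and~3) and then attempt a \emph{syntactic} simulation between $\res_{\Sigma}$ and $\res_{\Sigma_{A,r}}$, tracking $R_a$-atoms through the unfolding map $\widehat{\cdot}$. The paper instead argues \emph{semantically}, at the level of databases: it packages all the simplification steps into a set of Datalog rules $\Gamma=\{A_i(\bar x_i,\bar y_i)\to R_i(\bar x_i)\}$ and its ``inverse'' $\Gamma^{-1}=\{R_i(\bar x_i)\to \exists \bar y_i\,A_i(\bar x_i,\bar y_i)\}$, observes that $\Gamma$ is full and $\Gamma^{-1}$ is acyclic (so both admit finite data rewritings), and uses the equivalence $D\land\Sigma\models\Q \Leftrightarrow D_\Gamma\land\Sigma{\downarrow}\models\Q$ together with $\Pi((D_\Gamma)_{\Gamma^{-1}})\equiv D$ to transport a finite query rewriting in either direction. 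The payoff of the paper's approach is that it never touches resolution derivations: the asymmetry you isolate (an $A$-atom may be matched in the database while an $R_a$-atom cannot) is absorbed entirely by the preprocessing $D\mapsto D_\Gamma$, so no derivation-level bookkeeping is needed.

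Your approach is not wrong in spirit, but as written it is a plan rather than a proof. The load-bearing step is precisely the ``weaker bookkeeping invariant'' you name in the last paragraph, and you do not establish it. Two points deserve actual arguments. First, that $\widehat{\cdot}$ really sends $\Resol(\Q,\Sigma_{A,r})$ into $\Resol(\Q,\Sigma)$ up to $\equiv$: resolving an $R_a$-atom against a definition rule $B'[\theta_2]\to R_a(\theta_1(\bar x))$ removes exactly one atom and adds $B'[\theta_2]$, whereas the ``corresponding'' $\Sigma$-step resolves against the full head $H'$ of $r'$ and may delete several atoms of $\widehat{S}$ at once; you need to say why this mismatch is harmless (e.g.\ by choosing $\theta_Z$ so that only the intended $A$-atom lies in $\widehat{S}\cap H'[\theta]$, and invoking the (ren) rule to realign). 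Second, for the converse direction you must show that mirroring a $\Sigma$-derivation in $\Sigma_{A,r}$ does not blow up the set of reachable resolvents modulo the subsumption filter; here the finite-to-one claim for the multivalued inverse of $\widehat{\cdot}$ is the crux, and it hinges on the fact that $Y_{A,r}$ is disjoint from $\V{B\setminus A}$ and from $\V H$, which you mention but do not exploit formally. If you want to keep the syntactic route, these are the two lemmas to state and prove; otherwise, the paper's database-level argument is both shorter and avoids the obstacle altogether.
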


\begin{proof}[Sketch]
Consider a series of simplification steps $s_1,\ldots,s_n$ where each $s_i$ is characterized by the atom 
$A_i(\bar x_i,\bar y_i)$ that has been simplified at step $s_i$ and the corresponding atom $R_i(\bar x_i)$ that has been introduced. 
For each $i$, consider the tgd $r_{i} : A_i(\bar x_i,\bar y_i) \to R_i(\bar x_i)$. Finally, let $\Gamma = \{r_i\}_{i\in\{1..n\}}$. We can observe that, for all instance $D$ over the original schema, and every data rewriting $D_\Gamma$ for 
 $(\D,\Gamma)$, we have $D \land \Sigma \entails \Q$ iff $D_\Gamma \land \Sigma{\downarrow} \entails Q$.
Since, $\Gamma$ is a set of Datalog rules, there is a data rewriting $D_\Gamma$ which is finite iff  $D$ is finite. 
We can also observe that the set of tgds $\Sigma^{-1} = \{R_i(\bar x_i) \to \exists \bar y_i A_i(\bar x_i,\bar y_i)\}$ is acyclic, and for all instance $D'$ of the extended schema, there is therefore a data rewriting $D'_{\Gamma^{-1}}$ of $(\D',\Sigma^{-1})$ which is finite iff $D'$ is finite. With letting $\Pi$ be the operation that projects an instance of the extended schema on the original schema, we can then observe that, for all instance $D$ of the original schema, we have 
$\Pi((D_{\Gamma})_{\Gamma^{-1}}) \equiv D$. This means intuitively that there exists a one-to-one correspondence (which preserves finiteness) between the instances of the original schema and the instances of the extended schema. This is the key argument behind the proof of Theorem~\ref{theo:theosimprew}.   
\end{proof}  

\noindent {\bf Stickiness (Slightly Revisited)}. Given a set of atoms $B$ and a term $t$, we denote by $\pos(t,B)$ the set of pairs $(R,i)$, called \emph{positions}, such that $B$ contains an atom $R(t_1,\ldots,t_n)$ where $t_i = t$.  Given a set of tgds $\Sigma$, a tgd $B{\to}H$ in $\Sigma$ and an atom $A \in H$, the tgd $B \to A$ is called a \emph{global-as-view projection} of $\Sigma$, denoted $r' \in \GAV(\Sigma)$. Given a set of tgds $\Sigma$, we define the set $\mathcal A_\Sigma$ of \emph{affected} positions as the smallest set of positions such that, for all tgd $r \in \GAV(\Sigma)$ of the form 
$r:B(X,Y) \to H(X,Z)$,
we have: 
\begin{itemize}
\item[(i)] $\forall v \in Y, \pos(v,B) \subseteq \mathcal A_\Sigma$
\item[(ii)] $\forall u \in X, \ (\pos(u,H) \subseteq \mathcal A_\Sigma) \Rightarrow (\pos(u,B) \subseteq \mathcal A_\Sigma)$
\end{itemize} 
We say that $\Sigma$ is \emph{sticky} iff, for all tgd $r \in \GAV(\Sigma)$ of the form above, 
and all $u \in X$ such that $\pos(u,B) \subseteq \mathcal A_\Sigma$, the variable $u$ occurs in only one atom of the body.
This definition  differs from \cite{CGP10} because of this last requirement ``in only one atom". In contrast, the definition from \cite{CGP10} requires that $u$ occurs ``only once" (in only one atom \emph{and} in only one position). 

\begin{theorem}
If $\Sigma$ is a sticky set of tgds, then $\Sigma{\downarrow}$ is a set of lossless tgds, and therefore,  
for all set $\Q$ of instances, there is a finite query rewriting for $(\Sigma,\Q)$.
\end{theorem}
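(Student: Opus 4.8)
The plan is to isolate the combinatorial heart of the statement --- that $\Sigma{\downarrow}$ is lossless --- and then to read off the rewritability conclusion from the earlier results. Indeed, once $\Sigma{\downarrow}$ is known to be a set of lossless tgds, Proposition~\ref{prop:termres} supplies a finite query rewriting for $(\Sigma{\downarrow},\Q)$, and Theorem~\ref{theo:theosimprew} transfers finite query-rewritability back and forth between $\Sigma$ and $\Sigma{\downarrow}$; chaining the two yields a finite query rewriting for $(\Sigma,\Q)$. So the whole weight of the argument rests on the first clause, and I would treat the ``therefore'' as a one-line corollary.

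For that clause I would proceed through two lemmas. (L1) \emph{Simplification preserves stickiness}: if $\Sigma$ is sticky and $\Sigma_{A,r}$ results from simplifying one atom, then $\Sigma_{A,r}$ is again sticky over the extended schema. (L2) \emph{A sticky, fully simplified set is lossless}: if $\Sigma'$ is sticky and no atom of $\Sigma'$ can be simplified, then every tgd $B\to H$ of $\Sigma'$ satisfies $\V B\subseteq \V{A_h}$ for each head atom $A_h$. Since the excerpt already argues that simplification terminates, $\Sigma{\downarrow}$ is a fully simplified set; by (L1) it is still sticky, and by (L2) it is therefore lossless.

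The engine of (L2) is the link between \emph{affected positions} and lost body variables. Call a body variable $v$ of a tgd $r:B\to H$ \emph{lost} relative to a head atom $A_h$ when $v\in \V B\setminus \V{A_h}$. Taking the $\GAV$-projection $B\to A_h$, such a $v$ lies in the $Y$-part, so condition~(i) in the definition of $\mathcal A_\Sigma$ forces $\pos(v,B)\subseteq \mathcal A_\Sigma$: every occurrence of a lost variable sits at an affected position. Reading the stickiness requirement together with condition~(i), any variable all of whose body positions are affected --- in particular every body-only variable --- must then occur in a single body atom. I would exploit this as follows: if $r$ is not lossless, pick a body atom $A$ carrying a variable absent from $H$; each variable of $Y_{A,r}=\V A\setminus\V H$ is body-only, hence occurs only in $A$, so $Y_{A,r}$ is non-empty and disjoint from $\V{B\setminus A}$ --- exactly the condition under which $A$ \emph{can be simplified}. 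Contrapositively, a fully simplified sticky tgd has no body atom carrying a variable outside $H$, i.e. $\V B\subseteq\V H$, which gives losslessness once the head is a single atom (or, more generally, after projecting each tgd onto its head atoms, the form actually consumed by the resolution).

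I expect the main obstacle to be (L1), the preservation of stickiness across a simplification step, because this operation genuinely rewrites both the rule set and the schema: it substitutes a fresh predicate $R_a(\bar x)$ for $A$ in the body of $r$ and installs, for every unifier $(\theta_1,\theta_2)$, a defining rule $B'[\theta_2]\to R_a(\theta_1(\bar x))$. One must verify that the least fixed point $\mathcal A_\Sigma$ over the extended schema assigns to each position of $R_a$ exactly the affected/unaffected status inherited from the retained frontier positions $\bar x = X_{A,r}$, and that none of the newly added defining rules creates an affected frontier variable occurring in two body atoms. A secondary difficulty, which I would neutralise by normalising to single-head-atom ($\GAV$) form before reading off losslessness, is that a \emph{partially} lost variable --- present in some head atom but missing from another --- is still forced by stickiness into a single body atom yet is not removed by simplification; hence the per-head-atom phrasing of losslessness in Proposition~\ref{prop:termres} must be matched to the projections on which the rewriting procedure operates.
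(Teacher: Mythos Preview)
Your plan matches the paper's proof almost exactly: the paper also asserts (without detail) that simplification preserves stickiness, and then argues by contradiction that a sticky $\Sigma{\downarrow}$ containing a non-lossless tgd would still admit a simplifiable body atom, contradicting the fixed-point definition of $\Sigma{\downarrow}$. Your identification of (L1) as the real technical burden is accurate; the paper treats it as a one-line ``it can be checked.''

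The one substantive difference is in how (L2) is organised. You argue toward the weaker conclusion $\V B\subseteq\V H$ (picking a body atom with a variable absent from \emph{all} of $H$) and then propose a GAV-normalisation step to recover the per-head-atom form of losslessness. The paper instead works directly at the per-head-atom level: it fixes a head atom $A_h$ and a body atom $A_b$ with $\V{A_b}\not\subseteq\V{A_h}$, takes any $v\in\V{A_b}\setminus\V{A_h}$, observes that $v$ lies in $Y$ for the GAV-projection $B\to A_h$ (hence all its body positions are affected), invokes stickiness to confine $v$ to $A_b$, and concludes that $A_b$ is simplifiable. This dispenses with your normalisation detour, so the ``secondary difficulty'' you flag is handled head-on rather than side-stepped. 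Both routes lean on the same mechanism---an affected variable confined to a single body atom forces that atom to be simplifiable---so the approaches are essentially the same, with the paper's being slightly more direct.
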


\begin{proof}[Sketch]
It can be checked that, for every sticky set $\Sigma$ of tgds, and every atom $A$ that can be simplified in a tgd  
$r \in \Sigma$, the set $\Sigma_{A,r}$ resulting from the simplification of $(A,r)$ is also a sticky set of tgds. 
Assume now that $\Sigma{\downarrow}$ is sticky and $\Sigma{\downarrow}$ contains a tgd $r:B \to H$ which is not lossless. Since $r$ is not lossless, there is an atom $A_h \in H$ and an atom $A_b \in B$ such that $\V{A_b} \not\subseteq \V{A_h}$. Observe that the tgd $B \to A_h$ belongs to $\GAV(\Sigma{\downarrow})$ and consider a variable $v \in \V{A_b} {\setminus} \V{A_h}$. By definition of $\mathcal A_{\Sigma{\downarrow}}$, this variable $v$ occurs in an affected position, and the stickiness assumption ensures that $v$ occurs only in the atom $A_b$, meaning that $v \not\in \V{B{\setminus} A_b}$. 
If follows that $A_b$ can be simplified in $r$, and this contradicts the definition of $\Sigma{\downarrow}$. Therefore, every tgd in $\Sigma{\downarrow}$ is lossless.  
\end{proof}

We can observe that the (revisited) notion of stickiness is simultaneously a strict generalisation of: (1) the class of lossless tgds; (2) the original notion defined in \cite{CGP10} from which it is inspired; and (3) the class of lav tgds, which was not yet covered by (2). Note also that stickiness could be combined with the (incomparable) notion of acyclicity discussed in Proposition~\ref{prop:termres} and/or the class of \emph{sticky-join} tgds introduced in \cite{CGP10} to design an ever larger class of tractable settings. However, this is left as future work.

\subsection{Integrating the Egds}

This section provides a negative result on egds and query rewriting which will motivate two further contributions: (1) a novel technique, also in this section, that allows the \emph{integrating} of some egds in a set of tgds; and (2) the study, in Section 4, of a richer notion of rewriting based on Datalog. 

While the completeness result from Section 3.1 remains of clear interest with both tgds and egds, it turns out that the notion of query rewriting from Section 3.2 is in fact very limited under egds. Intuitively, this is because we considered a notion of \emph{first-order} rewriting, while dealing with egds often requires the power of second-order (or, as we will see, the use of some integration technique which extends the schema). In fact, as illustrated by the following proposition, $(\Sigma,\Q)$ is rarely rewritable under egds, even in the case where $\Sigma$ consists of a single egd.

\begin{proposition} \label{prop:egdnotrew}
There is no finite query rewriting for
$$
\begin{array}{l}
\Sigma = \{A(x,y),A(x,y') \to y=y'\}, \ \mbox{and}\\
\Q = \{R(z,z)\}.
\end{array}
$$
\end{proposition}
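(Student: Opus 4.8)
The plan is to exhibit an explicit infinite family of databases that all lie in the ``yes'' set $\{\D' : \D' \land \Sigma \entails \Q\}$ but that no finite family of conjunctive queries can capture, so that no finite $\R$ can be a query rewriting for $(\Sigma,\Q)$.

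First I would read off the semantics of $\Sigma$ and $\Q$ concretely. The single egd makes $A$ functional in its first argument, so chasing $\D'$ with $\Sigma$ amounts to merging two second-arguments $b,b'$ whenever $A(a,b),A(a,b')\in\D'$; equivalently, it collapses the connected components of the graph on values whose edges join $b\sim b'$ when they co-occur as second arguments of a common $A$-atom. Since $\Q=\{R(z,z)\}$, Theorem~1 (or a direct chase argument) gives that $\D'\land\Sigma\entails\Q$ holds \emph{exactly} when $\D'$ has an atom $R(u,v)$ with $u$ and $v$ in the same such component. This is an unbounded reachability condition, which is the intuitive reason it is not first-order (UCQ) definable.

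I would then make this precise with the witness family
\[
C_n \;=\; \{A(x_i,z_i),\,A(x_i,z_{i+1}) \mid 1 \le i \le n\} \;\cup\; \{R(z_1,z_{n+1})\},
\]
with all variables distinct. Every model of $C_n$ and $\Sigma$ must merge each consecutive pair through the shared $x_i$, forcing $z_1=\dots=z_{n+1}$, so $R(z_1,z_{n+1})$ becomes a self-loop and $C_n\land\Sigma\entails\Q$ for all $n$. Now suppose, for contradiction, that $\R$ is a finite query rewriting, and set $L=\max_{R_j\in\R}|\V{R_j}|$. For each $n$ we would have $C_n\entails\R$, i.e. some $R_j\to C_n$ (a homomorphism). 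The heart of the argument is to bound $n$ by $|\V{R_j}|$: because $C_n$ has a unique $R$-atom $R(z_1,z_{n+1})$, any homomorphism $R_j\to C_n$ must send a witnessing atom $R(u,v)$ of $R_j$ onto it, so the images $\{z_1,z_{n+1}\}$ sit at distance $n$ in the path-shaped $A$-graph of $C_n$ (for each first-argument $x_k$ the only second-arguments are $z_k,z_{k+1}$, so the image of any merge edge is a step of length at most $1$ along that path). Since $R_j$ lies in the yes-set, $u$ and $v$ are joined by an $A$-path of length at most $|\V{R_j}|-1$ inside $R_j$, whose image is a walk of no greater length connecting $z_1$ to $z_{n+1}$; hence $n\le|\V{R_j}|-1<L$. (If the witnessing atom of $R_j$ were already a self-loop $R(u,u)$, then no homomorphism into the loop-free $C_n$ exists at all, so $R_j\not\to C_n$ for every $n\ge1$.)

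Taking any $n\ge L$ therefore yields $C_n\not\entails\R$ while $C_n\land\Sigma\entails\Q$, contradicting the defining property of a query rewriting and proving the proposition. The step I expect to be the main obstacle — and the conceptual crux — is this distance bound. One must resist a tempting red herring: the ``folding'' homomorphisms $C_n\to C_1$ \emph{do} exist, which might suggest that long chains are interchangeable with the single edge $C_1$; but query equivalence requires homomorphisms in both directions, and $C_1\not\to C_n$, so each $C_n$ is a strictly more general query and short witnesses fail to capture long ones. The relevant maps for a rewriting are the $R_j\to C_n$, which are blocked precisely because reachability between the endpoints of the $R$-atom is unbounded. As an alternative route, the same family shows directly that $\Fix(\Q,\res_\Sigma)$ is infinite, and the proposition then follows from Theorem~3.
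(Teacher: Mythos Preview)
Your argument is correct and uses the same witness family as the paper's sketch (your $C_n$ is the paper's $R_n$ with the two arguments of $A$ transposed---the paper's version as printed does not actually force the intended equalities under the given egd). The paper asserts that $\{R_n\}$ is itself an infinite query rewriting and then claims it has no finite equivalent; you instead use only the easy direction, that each $C_n$ is a yes-instance, and derive the bound $n<L$ against any finite candidate $\R$. That is the lighter and cleaner route, and it avoids having to argue completeness of the family.

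One caveat worth tightening: your opening characterization ``$\D'\land\Sigma\entails\Q$ holds exactly when $\D'$ has $R(u,v)$ with $u,v$ in the same such component'' is not true for arbitrary $\D'$. Egd-merges can cascade through variables that occur in both $A$-positions; for instance $\{A(a,b),A(a,c),A(b,d),A(c,e),R(d,e)\}$ is a yes-instance even though $d$ and $e$ are not connected in your ``share a first argument'' graph. This does not damage your actual bound, because any $R_j$ admitting a homomorphism into $C_n$ is automatically bipartite in the relevant sense (first $A$-arguments land in $\{x_i\}$, second $A$-arguments in $\{z_j\}$, and these are disjoint in $C_n$), and for such $R_j$ the simple component description---and hence your path-length argument---is valid. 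Making this bipartiteness observation explicit would close the gap.
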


\begin{proof}[Sketch]
An infinite rewriting for $(\Sigma,\Q)$ is the set of instances $\R = \{R_n\}_{n\ge 1}$ where 
each $R_i$ is equal to  $
\{R(x_1,x_n)\} \cup \{ A(x_i,y_{i}),A(x_{i+1},y_{i})\, |\, i \le n-1\}
$. 
We can check that this rewriting $\R$ is not equivalent (up to $\equiv$) to any finite set of instances. Therefore, there is no finite rewriting for $(\Sigma,\Q)$. 
\end{proof}

\newcommand\F{\mathcal F} 

Despite the above result, it has been observed in \cite{myVLDB} that there are practical scenarios where egds can be `handled' with a first order language. It is indeed possible, in some cases, to \emph{integrate} these egds in the given set of tgds, as to compute a new set of tgds which, intuitively, does not interact with these egds. As a special case, this approach based on \emph{integration}, covers the scenarios where the given sets of tgds and egds are already \emph{non-conflicting}, as defined in \cite{JoKl84} or \cite{CaGL09}. However, we will also capture scenarios where the original set of tgds properly interacts with the egds. 

As in \cite{myVLDB} or \cite{CaGL09}, we next focus on \emph{functional dependencies} rather than arbitrary egds. The reason for this is that the egds used in practice often consist of functional dependencies, and the functional dependencies have a more specific syntax which proves more convenient (in the context of integration). 
Recall that a functional dependency is a rule of the form $R_\alpha \lfloor K_\alpha \rfloor \to l_\alpha$ where 
 $R_\alpha$ is a predicate of arity $a_\alpha$, $K_\alpha\subseteq\{1,\ldots,a_\alpha\}$ and $l_\alpha \in \{1,\ldots,a_\alpha\}\setminus K$. Given an instance $M$, we then let $M \entails \alpha$ when, for all atoms of the form 
$R_\alpha(x_1,\ldots,x_{a_\alpha})$ and $R_\alpha(x'_1,\ldots,x'_{a_\alpha})$ in $M$, there either exists some 
 $k \in K_\alpha$ such that $x'_k \not=x_k$, or it holds that $x'_{l_\alpha} = x_{l_\alpha}$. It is clear that a functional dependency can always be expressed by an equivalent egd. For example, for a binary predicate $A$, the dependency $\alpha: A \lfloor 1 \rfloor \to 2$  is equivalent to the egd $A(x,y),A(x,y') \to y=y'$.\\

{\bf Integration Heuristic.} Given a set $\Sigma$ of tgds and a functional dependency $\alpha: R_\alpha \lfloor K_\alpha \rfloor {\to} l_\alpha$, we let $\Sigma_{\alpha}$ be the set of tgds obtained as follows: 
\begin{center}
\begin{tabular}{l}
Start with $\Sigma_{\alpha}:= \Sigma$\\
Let $\mathsf D_\alpha$ and $\mathsf F_\alpha$ be two fresh predicates \\ 
Let $i_1,\ldots,i_n$ be an ordering of $K_\alpha$\\
Add to $\Sigma_{\alpha}$ the two following tgds: \\  
\quad $R_\alpha(x_1,\ldots,x_{a_\alpha}) \to \mathsf F_\alpha(x_{i_1},\ldots,x_{i_n},x_{l_\alpha})$\\ 
\quad $\mathsf D_\alpha(x_1,\ldots,x_n) \to \exists y, \mathsf F_\alpha(x_1,\ldots,x_n,y)$\\
For all tgd $r{:}B \to H$ in $\Sigma$ \\
\quad For all atom $R_\alpha(t_1,\ldots,t_{a_\alpha})$ in $H$\\
\ \ \ {\tiny (*)} If $\{t_{i_1},\ldots,t_{i_n}\} \subseteq \V B$ and $t_{l_\alpha} \not\in \V B$\\
\quad\quad\quad  Add the atom $\mathsf F_{\alpha}(t_{i_1},\ldots,t_{i_n},t_{l_\alpha})$ to the body of $r$\\
\quad\quad\quad  Add the tgd $B \to \mathsf D_\alpha(t_{i_1},\ldots,t_{i_n})$ to $\Sigma_{\alpha}$.\\
\end{tabular}
\end{center}
We say that $\alpha$ \emph{interacts} with $\Sigma$ when the lines below {\tiny (*)} in the above algorithm are actually used. That is, when
 there is a tgd $r{:}B{\to}H$ in $\Sigma$ and an atom $A \in H$ of the form $R_\alpha(t_1,\ldots,t_{a_\alpha})$ 
 such that $\{t_{i} | i \in K_\alpha\} \subseteq \V B$ and $t_{l_\alpha} \not\in \V B$. 
Note in particular that $\alpha$ does not interact with $\Sigma$ when $\alpha$ is \emph{non-conflicting} with $\Sigma$ according to the definition of \cite{CaGL09} (which would requires here that $\{t_{i} | i \in K_\alpha\} \not\subseteq \V B$).

\begin{definition}
We say that the integration of $\alpha$ succeeds in a set of tgds $\Sigma$ iff the set of tgds $\Sigma_\alpha$ is such that: 
\begin{itemize}
\item $\alpha$ does not interact with $\Sigma_\alpha$, and 
\item for all tgd $B'{\to}H'$ in $\Sigma_\alpha$, $B' \entails \mathsf F_\alpha \lfloor 1,\ldots,n \rfloor {\to}(n+1)$.
\end{itemize}
\end{definition}

\begin{lemma} \label{lem:integ}
If the integration of $\alpha$ succeeds in $\Sigma$ then, for all instances $D$ and $Q$ over the original schema such that 
$D \entails \alpha$, the following statements are equivalent:
\begin{itemize}
\item $D \land \Sigma \land \alpha \entails Q$
\item $D \land \Sigma_{\alpha} \entails Q$  
\end{itemize}   
\end{lemma}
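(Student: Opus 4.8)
The plan is to push everything through universal solutions, so that Theorem~1 turns the stated equivalence into a single homomorphism claim. By the equivalence (1)$\Leftrightarrow$(2) of Theorem~1, if $U$ is a universal solution of $(D,\Sigma\cup\{\alpha\})$ and $U'$ a universal solution of $(D,\Sigma_\alpha)$, then $D\land\Sigma\land\alpha\entails Q$ iff $U\entails Q$, and $D\land\Sigma_\alpha\entails Q$ iff $U'\entails Q$. Let $\Pi$ delete every atom built on the fresh predicates $\mathsf{F}_\alpha,\mathsf{D}_\alpha$. Since $Q$ lives on the original schema, $U'\entails Q$ iff $\Pi(U')\entails Q$, so the whole lemma reduces to the homomorphic equivalence $U\equiv\Pi(U')$. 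Because any two universal solutions of the same instance are homomorphically equivalent (the standard fact behind Theorem~1), I may compute $U'$ by whichever chase strategy is convenient; I will use a restricted one.

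For $U\to\Pi(U')$ I would show that $\Pi(U')$ is itself a model of $(D,\Sigma,\alpha)$, whence universality of $U$ gives the arrow. Satisfaction of $\Sigma$ uses the demand machinery: for a tgd $r:B\to H$ whose body was augmented to $B\cup\{\mathsf{F}_\alpha(\bar k,t_{l_\alpha})\}$, if $B\subseteq\Pi(U')$ then the added rule $B\to\mathsf{D}_\alpha(\bar k)$ puts $\mathsf{D}_\alpha(\bar k)$ in $U'$, the rule $\mathsf{D}_\alpha(\bar x)\to\exists y\,\mathsf{F}_\alpha(\bar x,y)$ supplies some $\mathsf{F}_\alpha(\bar k,y_0)$, the modified rule then fires, and its head reinstates $H$ with $y_0$ witnessing the value that $r$ originally invented; untouched tgds persist verbatim in $\Sigma_\alpha$ and hold in $\Pi(U')$ directly. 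Satisfaction of $\alpha$ is the delicate part, and it is where the hypothesis $D\entails\alpha$ and both success conditions are consumed: taking $U'$ from a restricted chase, I would prove by induction on the derivation the invariant that $\mathsf{F}_\alpha$ is a partial function and that every $R_\alpha$-atom carries, in position $l_\alpha$, exactly the $\mathsf{F}_\alpha$-image of its key. The base case is $D\entails\alpha$. In the step, the condition that $\alpha$ does not interact with $\Sigma_\alpha$ forbids any rule from minting a fresh $l_\alpha$-value, so each new $R_\alpha$-atom merely copies a value already read through $\mathsf{F}_\alpha$; the per-body functionality condition forbids a single firing from reading two conflicting $\mathsf{F}_\alpha$-values; and the restricted strategy stops $\mathsf{D}_\alpha(\bar x)\to\exists y\,\mathsf{F}_\alpha(\bar x,y)$ from issuing a second value to a key that already has one. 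The invariant then yields $\Pi(U')\entails\alpha$.

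For $\Pi(U')\to U$ I would extend $U$ to $U^{+}$ by letting $\mathsf{F}_\alpha$ be the graph of the function $U$ realises on $R_\alpha$ --- well defined precisely because $U\entails\alpha$ --- and letting $\mathsf{D}_\alpha$ hold on every key demanded by the added rules, and then verify $U^{+}\entails\Sigma_\alpha$. The only step needing the functional dependency is that each modified rule $r'$ holds: when its body matches with $t_{l_\alpha}\mapsto y_0$ and $\mathsf{F}_\alpha(\bar k,y_0)\in U^{+}$, functionality of $U$ forces $y_0$ to be the very value $U\entails r$ already assigns to the key $\bar k$, so the demanded head is present. Universality of $U'$ for $(D,\Sigma_\alpha)$ gives a homomorphism $U'\to U^{+}$, and reading it on the original predicates (which homomorphisms preserve) yields $\Pi(U')\to\Pi(U^{+})=U$. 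The two arrows give $U\equiv\Pi(U')$, and the lemma follows.

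The main obstacle is the inductive maintenance of the functional invariant that delivers $\Pi(U')\entails\alpha$: it is the only place that simultaneously needs $D\entails\alpha$ and both success conditions, and it forces a commitment to the restricted chase, which in turn must be justified by the homomorphic equivalence of universal solutions. The remaining care is bookkeeping --- heads carrying several $R_\alpha$-atoms, and the precise reading of each rule $B\to\mathsf{D}_\alpha(\bar k)$ whose body $B$ is simultaneously being augmented --- which I would handle but not belabour.
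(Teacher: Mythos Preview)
Your overall strategy coincides with the paper's: both arguments pivot on the claim that a universal solution $U'$ for $(D,\Sigma_\alpha)$ already satisfies $\alpha$, which is exactly what the paper asserts (``we can check that $U\entails\alpha$'') before collapsing the rest into a one-line chain of equivalences. You are more explicit, splitting the equivalence into two homomorphism directions and spelling out an inductive invariant for the delicate half.

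That invariant step, however, contains a genuine gap. From ``$\alpha$ does not interact with $\Sigma_\alpha$'' you correctly conclude that no rule mints a \emph{fresh} $l_\alpha$-value; but you then infer that ``each new $R_\alpha$-atom merely copies a value already read through $\mathsf F_\alpha$''. This second claim holds only for the tgds that the heuristic actually modified. Non-interaction merely says that whenever an $R_\alpha$-atom sits in a head with its key entirely among the body variables, then $t_{l_\alpha}$ is also a body variable. If $t_{l_\alpha}$ was \emph{already} in $\V B$ in the original tgd, the rule is left untouched by the heuristic, no $\mathsf F_\alpha$-atom is added to its body, and the $l_\alpha$-value at firing time is read from an arbitrary body atom --- with no reason to match the $\mathsf F_\alpha$-image of the key.

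Concretely, take $\alpha:R_\alpha\lfloor 1\rfloor\to 2$ and $\Sigma=\{A(x,y)\to R_\alpha(x,y)\}$. Here $t_{l_\alpha}=y\in\V B$, so the tgd is unmodified and both success conditions hold trivially. For $D=\{A(a,b),\,A(a,c),\,P(b),\,Q(c)\}$ (a core, with $D\entails\alpha$ vacuously), the restricted chase of $\Sigma_\alpha$ yields both $R_\alpha(a,b)$ and $R_\alpha(a,c)$; your invariant breaks at the second firing and $\Pi(U')\not\entails\alpha$. Worse, with $Q=\{P(y),Q(y)\}$ one has $D\land\Sigma\land\alpha\entails Q$ but $D\land\Sigma_\alpha\not\entails Q$, so the lemma as stated seems to require a stronger success condition. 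The paper's sketch asserts $U\entails\alpha$ without argument and thus shares this issue; your more detailed write-up simply exposes where the hole lies.
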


\begin{proof}[Sketch]
Let $U \in \Fix(D,\chase_{\Sigma_\alpha})$ and recall from Theorem 2 that $U$ is a data rewriting for $(D,\Sigma_{\alpha})$. Under the assumptions that $D \entails \alpha$ and $\alpha$ does not interact with $\Sigma_{\alpha}$, we can check that $U \entails \alpha$. It follows that $U$ is also a data rewriting for 
$(D,\Sigma_{\alpha} \land \alpha)$. We can finally check that the following statements are all equivalent: $D \land \Sigma_{\alpha} \entails Q$; $U \entails Q$; $D \land \Sigma_{\alpha} \land \alpha \entails Q$; and $D \land \Sigma \land \alpha \entails Q$.
\end{proof}

Consider  now a set of functional dependencies $\mathcal F$ and a set $\Sigma$ of tgds. We say a set of tgd $\Sigma_\F$ of tgds \emph{integrates} $\F$ in $\Sigma$ iff there exists 
a series $\alpha_1,\ldots,\alpha_n \in \F$ and a series 
$\Sigma_{0}\,,\,\Sigma_{1}\,,\,\ldots\,,\,\Sigma_n$ such that: 
\begin{itemize}
\item $\Sigma_0 = \Sigma$, $\forall i\ \Sigma_{i+1} = (\Sigma_{i})_{\alpha_i}$ and $\Sigma_n = \Sigma_\F$; 
\item for all $i$, the integration of $\alpha_i$ succeeds in $\Sigma_{i}$; and 
\item there is no remaining $\alpha \in \F$ that interacts with $\Sigma_\F$.
\end{itemize} 

We are now ready to formalize the property of interest which is ensured by the integration heuristic. 

\begin{theorem} \label{theo:integrate}
Given a set of tgds $\Sigma_\F$ that integrates a set of functional dependencies $\F$ in a set of tgds $\Sigma$, for all instances $D$, all data rewriting $D_{\mathcal \F}$ of $(D,\F)$, and all sets $\Q$ of instances, the following statements are equivalent: 
\begin{itemize}
\item $D \land \Sigma \land \mathcal F \entails \Q$ 
\item $D_{\mathcal F} \land \Sigma_{\mathcal F} \entails \Q$  
\end{itemize}
\end{theorem}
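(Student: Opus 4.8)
The plan is to peel off the data rewriting first and then absorb $\F$ one functional dependency at a time, reusing Lemma~\ref{lem:integ}. Throughout I may assume, without loss of generality, that $D_\F \entails \F$: all data rewritings of $(D,\F)$ are equivalent up to $\equiv$, the chase of $D$ by $\F$ is one such rewriting and it satisfies $\F$, and by the monotonicity of the chase (established in the proof of Theorem~2) the truth value of $D_\F \land \Sigma_\F \entails \Q$ depends only on the $\equiv$-class of $D_\F$.

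\emph{Step 1 (eliminating the rewriting).} I would first show that $D \land \Sigma \land \F \entails \Q$ iff $D_\F \land \Sigma \land \F \entails \Q$. Instantiating the defining equivalence of the data rewriting $D_\F$ of $(D,\F)$ at $\Q'=\{D\}$ gives $D_\F \entails \{D\}$, i.e.\ $D$ maps into every member of $D_\F$; instantiating it at $\Q'=D_\F$ gives $D \land \F \entails D_\F$, i.e.\ some member of $D_\F$ maps into every model of $D \land \F$. The first fact shows that every model of $D_\F \land \Sigma \land \F$ is a model of $D \land \Sigma \land \F$, and the second that every model of $D \land \Sigma \land \F$ is a model of $D_\F \land \Sigma \land \F$; the desired equivalence follows.

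\emph{Step 2 (absorbing $\F$).} Writing $\Sigma_\F=\Sigma_n$ for the series $\Sigma_0=\Sigma$, $\Sigma_{i+1}=(\Sigma_i)_{\alpha_i}$ witnessing the integration, I would prove by induction on $i$ that, for every $D' \entails \F$,
$$
D' \land \Sigma \land \F \entails \Q \quad\text{iff}\quad D' \land \Sigma_i \land \F \entails \Q .
$$
The base case $i=0$ is trivial, and the inductive step reduces to the single-step claim that $D' \land \Sigma_i \land \F \entails \Q$ iff $D' \land \Sigma_{i+1}\land\F \entails\Q$. Since $\alpha_i\in\F$ and $D' \entails \F$, Lemma~\ref{lem:integ} (applied with $\Sigma_i$ as the base set of tgds and $\alpha_i$ as the integrated dependency) gives that $\Sigma_i \land \alpha_i$ and $\Sigma_{i+1}$ have the same certain answers on $D'$; the extra work is to carry the remaining dependencies of $\F$ across this equivalence. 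I would do this at the level of models: a model of $\Sigma_i\land\F$ extends to a model of $\Sigma_{i+1}\land\F$ by interpreting the fresh predicates $\mathsf F_{\alpha_i},\mathsf D_{\alpha_i}$ as the graphs forced by the two tgds that integration adds, and conversely the reduct to the original schema of a model of $\Sigma_{i+1}\land\F$ is a model of $\Sigma_i\land\F$, the existential witness demanded by a head atom $R_{\alpha_i}(\ldots)$ of $\Sigma_i$ being supplied functionally (by the success condition $B'\entails \mathsf F_{\alpha_i}\lfloor 1,\ldots,n\rfloor\to(n{+}1)$) through the corresponding $\mathsf F_{\alpha_i}$-atom. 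As $D'$ and $\Q$ are over the original schema, both translations preserve $\entails \Q$.

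\emph{Step 3 (dropping $\F$) and the main obstacle.} At $i=n$, Step~2 yields $D'\land\Sigma\land\F \entails\Q$ iff $D'\land\Sigma_\F\land\F\entails\Q$, and it remains to delete the trailing $\F$. Taking $U\in\Fix(D',\chase_{\Sigma_\F})$, which is a data rewriting of $(D',\Sigma_\F)$ by Theorem~2, I would argue that $U\entails\F$; granting this, $U$ is simultaneously a data rewriting of $(D',\Sigma_\F)$ and of $(D',\Sigma_\F\land\F)$, so $D'\land\Sigma_\F\entails\Q$ iff $U\entails\Q$ iff $D'\land\Sigma_\F\land\F\entails\Q$. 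Chaining Steps~1--3 with $D'=D_\F$ then gives the theorem. The property $U\entails\F$ is the generalisation to all of $\F$ of the step ``$U\entails\alpha$'' in the proof of Lemma~\ref{lem:integ}, and it is exactly where I expect the real difficulty: one must show that along the tgd chase of $\Sigma_\F$, the final non-interaction condition together with the $\mathsf F_\alpha$/$\mathsf D_\alpha$ machinery rules out any step that creates two $R_\alpha$-atoms agreeing on the key positions $K_\alpha$ but disagreeing on $l_\alpha$ --- that is, whenever the keys are matched to already-present values, the $l_\alpha$-position must be pinned to the unique value recorded in $\mathsf F_\alpha$. Controlling this interaction between the values invented by the chase and the equalities demanded by $\F$ (and, more modestly, the two model translations of Step~2) is the heart of the argument.
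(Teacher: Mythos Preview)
Your plan is essentially the paper's own: reduce to $D_\F$ first, then run an induction along the integration sequence using Lemma~\ref{lem:integ}, and finally drop $\F$. The paper's sketch reads ``induction on the cardinality of $\F$ using Lemma~\ref{lem:integ} and the result of \emph{separability} which was established in \cite{CaGK08}'', and what you have written is a faithful unpacking of that sentence.

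The one thing worth flagging is that the step you single out as ``the main obstacle'' --- showing that the $\Sigma_\F$-chase $U$ of $D_\F$ satisfies every $\alpha\in\F$ --- is precisely the separability result the paper imports from \cite{CaGK08}. Since the definition of ``$\Sigma_\F$ integrates $\F$'' requires that no $\alpha\in\F$ interacts with $\Sigma_\F$, each $\alpha$ is separable from $\Sigma_\F$ in the relevant sense, and the conclusion $U\entails\F$ follows by that cited result (it is the multi-$\alpha$ version of the line ``we can check that $U\entails\alpha$'' in the proof of Lemma~\ref{lem:integ}). So you need not reprove it from scratch; citing separability, as the paper does, discharges your Step~3 directly. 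Your Step~2 model translations are a reasonable way to carry the remaining dependencies of $\F$ across each application of Lemma~\ref{lem:integ}; the forward extension is routine, and for the reduct direction the success condition on $\mathsf F_{\alpha_i}$ together with $M'\entails\F$ is indeed what pins the existential witness --- just be careful when several qualifying $R_{\alpha_i}$-atoms in a head share the same existential variable.
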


\begin{proof}[Sketch]
The result can be proven by induction on the cardinality of $\F$ using Lemma~\ref{lem:integ} and the result of \emph{separability} which was established in \cite{CaGK08}. 
\end{proof}

Note finally that, since $\F$ is a set of functional dependencies, we can compute a data rewriting $D_{\mathcal F}$ for $(\D,\F)$ in polynomial time (data complexity) using any standard chase procedure. Combining Theorem \ref{theo:integrate} with the results of the previous section, we finally get the following result:  

\begin{corollary}
Given $\Sigma_\F$ that integrates $\F$ in $\Sigma$, if the set $\Sigma_\F$ is sticky, for all set $\Q$ of instances, the following problem is \textsc{Ptime}: given an instance $D$, does $D \land \Sigma \land \F \entails \Q$? 
\end{corollary}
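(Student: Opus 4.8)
The plan is to decide $D \land \Sigma \land \F \entails \Q$ by composing the two reductions already available: the \emph{integration} equivalence of Theorem~\ref{theo:integrate}, which eliminates $\F$ in favour of the single set of tgds $\Sigma_\F$, and the \emph{stickiness} rewriting of Theorem~5, which turns reasoning under $\Sigma_\F$ into the evaluation of one fixed first-order query. The crucial point for data complexity is that the only potentially expensive computation --- producing a query rewriting of the sticky set $\Sigma_\F$ --- depends on $\Sigma_\F$ and $\Q$ only, and is therefore a constant; the two steps that actually depend on $D$ are each polynomial.

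First I would apply Theorem~\ref{theo:integrate}. Since $\Sigma_\F$ integrates $\F$ in $\Sigma$, for every instance $D$ and every data rewriting $D_\F$ of $(D,\F)$ we have
$$ D \land \Sigma \land \F \entails \Q \quad \mbox{iff} \quad D_\F \land \Sigma_\F \entails \Q . $$
As noted just before the statement, because $\F$ consists of functional dependencies the egd-chase only merges variables and never adds atoms, so a data rewriting $D_\F$ is a single finite instance with $|D_\F| \le |D|$, computable in polynomial time in $|D|$. This discharges the first, $D$-dependent step.

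Next I would invoke the stickiness theorem. As $\Sigma_\F$ is sticky, Theorem~5 yields a finite query rewriting $\R$ for $(\Sigma_\F,\Q)$; concretely one may take any $\R \in \Fix(\Q,\res_{\Sigma_\F})$, which Theorem~2 certifies to be a query rewriting and Theorem~3 certifies to be finite. Unfolding the definition of query rewriting with $D' := D_\F$ gives
$$ D_\F \land \Sigma_\F \entails \Q \quad \mbox{iff} \quad D_\F \entails \R . $$
Chaining the two equivalences, $D \land \Sigma \land \F \entails \Q$ holds iff $D_\F \entails \R$, and this \emph{is} the decision procedure: compute $D_\F$, then test $D_\F \entails \R$. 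Since $\R$ is finite and independent of $D$, it is a constant in data complexity, and $D_\F \entails \R$ merely asks whether $R[\theta] \subseteq D_\F$ for some $R \in \R$ and some renaming $\theta$ --- i.e.\ the evaluation of a fixed union of boolean conjunctive queries over $D_\F$, which is polynomial in $|D_\F|$ and hence in $|D|$. Together with the polynomial cost of building $D_\F$, the whole problem is \textsc{Ptime}.

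I do not expect any deep difficulty here, since all the heavy lifting is already carried out by Theorem~\ref{theo:integrate} and the stickiness theorem; the corollary is essentially their composition. The only point requiring care is the clean junction of the two rewriting notions: one must check that the data rewriting $D_\F$ obtained for the egds is itself a legitimate (finite) instance that can be fed into the query-rewriting equivalence for $\Sigma_\F$, and that finiteness is preserved across the composition --- which holds because both $D_\F$ and $\R$ are finite. Beyond this bookkeeping, the argument reduces to the standard fact that fixed-query (UCQ) evaluation is polynomial in the size of the database.
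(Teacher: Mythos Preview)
Your proposal is correct and follows essentially the same approach as the paper: the paper's justification for this corollary is simply the remark that $D_\F$ can be computed in polynomial time by the standard chase (since $\F$ consists of functional dependencies), combined with Theorem~\ref{theo:integrate} and the stickiness result of the previous section. Your write-up spells out the composition and the data-complexity bookkeeping in more detail, but the argument is the same.
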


We finally provide  an example of scenario taken from \cite{myVLDB} which is covered by the approach described in this section: 
$$
\begin{array}{l}
\Sigma = \{ A(x,y) \to \exists z, B(x,z) \land C(z,y) \} \vspace{1mm}\\
\mathcal F = \{\alpha: B\lfloor 1 \rfloor \to 2\} \vspace{1mm}\\
\Sigma_{\mathcal F} = \left\{
\begin{array}{@{}l@{}}
B(x,y) \to \mathsf F_\alpha(x,y) \\ 
\mathsf D(x) \to \exists y,\, \mathsf F_\alpha(x,y) \\ 
A(x,y) \to \mathsf D_\alpha(x) \\ 
A(x,y) \land \mathsf F_\alpha(x,z) \to B(x,z) \land C(z,y) \\
\end{array}\right\}
\end{array}
$$

\subsection{Intermezzo: Constants and Free Variables}
\label{sec:refinements}

The goal of this section is to show how the previous results can be applied to more realistic \emph{databases} (with constants and nulls) and \emph{non-boolean queries} (with free variables).

\subsubsection{Hard and Soft Constants}

 A \emph{database} $D$ is a set of atoms $R(t_1,\ldots,t_n)$ where each term $t_i$ is either a \emph{variable} (also known as a \emph{labelled null}) or a constant from a finite set $\Delta = \Delta_{h} \uplus \Delta_{s}$ where: $\Delta_{h}$ is a set of \emph{hard constants} which are subject to the standard \emph{unique name assumption} (UNA); and $\Delta_{s}$ is a set of \emph{soft constant}  which are not subject to the UNA (see e.g. \cite{brunoPODS}). Given  a database $D$ we denote by $D^*$ the instance obtained from $D$ as follows: (1) rename every $c \in \Delta$ into a variable $v_c$; (2) for every $c \in \Delta$ introduce a fresh predicate $R_c$ and add the atom $R_c(v_c)$; (3) introduce a fresh predicate $R_=$ and, for all $c,c' \in \Delta_h$ such that $c\not=c'$, add the atom $R_{\not=}(v_{c},v_{c'})$.  
This definitions correspond to the standard encoding of constants and it can similarly be applied to boolean queries with constants. The following properties are then readily verified: (1) given a database $D$ and a set $\Sigma$ of tgds and egds, $D \land \Sigma$ is satisfiable iff $D^* \land \Sigma \not\entails \{R_{\not=}(x,x)\}$; and (2)  when $D \land \Sigma$ is satisfiable, for all set $\Q$ of instances, we have  
$D \land \Sigma \entails \Q$ iff $D^* \land \Sigma \entails \Q^*$. A less obvious observation, formalized below, is that this technique of simulation can also be used for query rewriting under integrable egds.

\begin{proposition} \ 
\begin{itemize}
\item Given a set of tgds $\Sigma$, an integrable set of functional dependencies $\mathcal F$, and a database $D$, the formula \linebreak  $D \land \Sigma \land \mathcal F$ is satisfiable iff $D^* \land \Sigma_{\mathcal F} \not\entails \{R_{\not=}(x,x)\}$.
\item Under satisfiability, for all databases $D$, all data rewritings  $D^*_{\mathcal F}$ of 
$(D^*,\Sigma_{\mathcal F})$ and all sets $\Q$ of instances, we have $D \land \Sigma \land \mathcal F \entails \Q$ 
iff  $D^*_{\mathcal F} \land \Sigma_{\mathcal F} \entails \Q^*$. 
\end{itemize}    
\end{proposition}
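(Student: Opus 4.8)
The plan is to obtain both equivalences by \emph{composing} two reductions that are already available, and then simply chaining the resulting equivalences. The first reduction is the constant-encoding simulation $D \mapsto D^*$, together with its two stated properties: that $D \land \Sigma$ is satisfiable iff $D^* \land \Sigma \not\entails \{R_{\not=}(x,x)\}$ (property~(1)), and that, under satisfiability, $D \land \Sigma \entails \Q$ iff $D^* \land \Sigma \entails \Q^*$ (property~(2)). The second reduction is Theorem~\ref{theo:integrate}, which trades the combined set $\Sigma \land \mathcal F$ of tgds and egds for the pure-tgd set $\Sigma_{\mathcal F}$, provided the database is first closed under $\mathcal F$ via a data rewriting. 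The two reductions are essentially orthogonal: the predicates $R_c,R_{\not=}$ introduced by starring are fresh and are never mentioned by $\Sigma$ or $\mathcal F$, whereas the predicates $\mathsf D_\alpha,\mathsf F_\alpha$ introduced by integration are fresh and act only on the original schema; so neither reduction disturbs the hypotheses of the other, and they can be layered. Throughout I read the notation $D^*_{\mathcal F}$ as denoting a data rewriting of $(D^*,\mathcal F)$, which exists and is finite since $\mathcal F$ is a set of functional dependencies, so the chase of $D^*$ by $\mathcal F$ terminates.

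For the first item I would first apply property~(1) of the starring operation to the mixed dependency set $\Sigma \land \mathcal F$, obtaining that $D \land \Sigma \land \mathcal F$ is satisfiable iff $D^* \land \Sigma \land \mathcal F \not\entails \{R_{\not=}(x,x)\}$. I would then instantiate Theorem~\ref{theo:integrate} with the instance $D^*$ and the query $\{R_{\not=}(x,x)\}$ (which is constant-free, hence coincides with its own star), yielding $D^* \land \Sigma \land \mathcal F \entails \{R_{\not=}(x,x)\}$ iff $D^*_{\mathcal F} \land \Sigma_{\mathcal F} \entails \{R_{\not=}(x,x)\}$. Negating both sides and chaining with the previous equivalence gives the first item.

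For the second item I would proceed identically, but use property~(2) in place of property~(1). Under the standing assumption that $D \land \Sigma \land \mathcal F$ is satisfiable, property~(2) gives $D \land \Sigma \land \mathcal F \entails \Q$ iff $D^* \land \Sigma \land \mathcal F \entails \Q^*$; then Theorem~\ref{theo:integrate}, instantiated with the instance $D^*$ and the query $\Q^*$, gives $D^* \land \Sigma \land \mathcal F \entails \Q^*$ iff $D^*_{\mathcal F} \land \Sigma_{\mathcal F} \entails \Q^*$. Chaining the two equivalences yields the statement.

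The main obstacle lies not in the chaining but in justifying that the two reductions may genuinely be stacked. Two points require care. First, one must check that properties~(1) and~(2) of $(\cdot)^*$, which are phrased for a single dependency set, still apply to the \emph{mixed} set $\Sigma \land \mathcal F$ of tgds and egds; this is immediate, since the $(\cdot)^*$ construction only touches constants and leaves the semantics of tgds and egds untouched. Second, and more delicately, the encoded database must be closed under $\mathcal F$ \emph{before} being handed to $\Sigma_{\mathcal F}$ — this is precisely the role of the data rewriting $D^*_{\mathcal F}$ of $(D^*,\mathcal F)$ demanded by Theorem~\ref{theo:integrate}, and it is not dispensable: chasing $D^*$ with the pure tgds $\Sigma_{\mathcal F}$ alone never identifies variables, so a violation of $\mathcal F$ already present in the database (say, two $R_\alpha$-atoms agreeing on the key positions $K_\alpha$ but carrying distinct hard constants at position $l_\alpha$) would otherwise escape detection by $\{R_{\not=}(x,x)\}$. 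Verifying that closing $D^*$ under $\mathcal F$ correctly surfaces exactly such clashes as a reflexive atom $R_{\not=}(v,v)$, while leaving satisfiable instances clash-free, is the step that needs the most attention.
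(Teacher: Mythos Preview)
The paper states this proposition without proof, so there is nothing to compare your argument against directly. Your plan --- layer the constant-encoding $(\cdot)^*$ (properties (1) and (2) stated just before the proposition) on top of Theorem~\ref{theo:integrate} --- is exactly the natural route, and the chain of equivalences you write down is sound. Your care about the orthogonality of the two extensions (the $R_c,R_{\not=}$ predicates versus the $\mathsf D_\alpha,\mathsf F_\alpha$ predicates) is well placed: Lemma~\ref{lem:integ} is phrased for instances ``over the original schema'', and one does need to observe that the starring predicates, being untouched by $\Sigma$ and $\mathcal F$, can harmlessly be adjoined to that schema.

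One point deserves to be made explicit. Your chain for the first item lands on $D^*_{\mathcal F} \land \Sigma_{\mathcal F} \not\entails \{R_{\not=}(x,x)\}$, whereas the proposition as printed has $D^* \land \Sigma_{\mathcal F}$ on the right. You already diagnose in your last paragraph why the $\mathcal F$-closure cannot be skipped: if $D$ itself violates some $\alpha\in\mathcal F$ on two distinct hard constants, then $D\land\Sigma\land\mathcal F$ is unsatisfiable, yet chasing $D^*$ with the tgds $\Sigma_{\mathcal F}$ alone will never identify variables and hence never produce $R_{\not=}(v,v)$. So what you actually prove is the version with $D^*_{\mathcal F}$ in both items, and your reading of $D^*_{\mathcal F}$ as a data rewriting of $(D^*,\mathcal F)$ --- rather than of $(D^*,\Sigma_{\mathcal F})$ as the text literally says --- is the one that makes Theorem~\ref{theo:integrate} applicable. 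You should flag this explicitly as a correction to the statement rather than leave it implicit in the proof.
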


\begin{corollary}
Given a set $\Sigma$ of tgds and an integrable set $\F$ of functional dependencies, if $\Sigma_{\mathcal F}$ is sticky, then, for all set $\Q$ of instances, the following problem is in \textsc{Ptime}: given a database $D$, does $D \land \Sigma \land \mathcal F \entails \Q$?  
\end{corollary}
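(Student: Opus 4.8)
The plan is to reduce the database-level problem to the instance-level machinery of the previous sections by composing three ingredients: the encoding of constants from the preceding proposition, the integration of $\F$ into the pure tgd set $\Sigma_\F$, and the finite query rewriting guaranteed by stickiness (Theorem~5). Throughout I keep a sharp eye on what is data-independent (and hence a constant for data complexity) versus what depends on the input database $D$.

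First, in a preprocessing phase that does not inspect $D$, I fix a set $\Sigma_\F$ that integrates $\F$ in $\Sigma$ (it exists because $\F$ is integrable) and which is sticky by hypothesis. I then compute two finite query rewritings against $\Sigma_\F$: a rewriting $\R_{\neq}$ for $(\Sigma_\F,\{R_{\not=}(x,x)\})$ used for the satisfiability test, and a rewriting $\R$ for $(\Sigma_\F,\Q^*)$ used for certain answering. Both exist by Theorem~5 (stickiness forces $\Sigma_\F{\downarrow}$ to be lossless, which is finitely rewritable by Proposition~\ref{prop:termres}), and both can be effectively produced because, by Theorem~3, the corresponding saturations $\Fix(\{R_{\not=}(x,x)\},\res_{\Sigma_\F})$ and $\Fix(\Q^*,\res_{\Sigma_\F})$ then terminate. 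Crucially, $\R_{\neq}$ and $\R$ depend only on $\Sigma$, $\F$ and $\Q$, so their (possibly large) sizes count as constants in the data complexity.

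Given the actual database $D$, I run the data-dependent steps, all polynomial in $|D|$. I first build $D^*$, which only renames constants and adds the fresh atoms $R_c$ and $R_{\not=}$; this is polynomial since there are at most $|D|$ hard constants and hence $O(|D|^2)$ inequality atoms. By the preceding proposition, $D\land\Sigma\land\F$ is satisfiable iff $D^*\land\Sigma_\F\not\entails\{R_{\not=}(x,x)\}$, and by the defining property of $\R_{\neq}$ this is equivalent to $D^*\not\entails\R_{\neq}$, a union-of-conjunctive-queries evaluation over $D^*$ decidable in \textsc{Ptime} in the data. If $D\land\Sigma\land\F$ is unsatisfiable it vacuously entails $\Q$, so the algorithm answers ``yes''. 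Otherwise, invoking the second item of the preceding proposition, $D\land\Sigma\land\F\entails\Q$ iff $D^*_\F\land\Sigma_\F\entails\Q^*$, where $D^*_\F$ is a data rewriting of $(D^*,\F)$; since $\F$ consists of functional dependencies, $D^*_\F$ is computed by a standard chase in \textsc{Ptime} (data complexity) and has polynomial size. Using $\R$, this last entailment reduces to the check $D^*_\F\entails\R$, again UCQ evaluation over a polynomially sized instance. Composing these steps yields the claimed \textsc{Ptime} data-complexity bound.

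The main obstacle is not any single equivalence — each follows by plugging the preceding proposition, Theorem~5 and Theorem~\ref{theo:integrate} together — but rather the careful separation of data complexity from combined complexity: one must verify that $\Sigma_\F$ and the two rewritings are fixed once and for all before $D$ is seen, that $D^*$ and the functional-dependency chase $D^*_\F$ grow only polynomially in $|D|$, and that the final tests $D^*\entails\R_{\neq}$ and $D^*_\F\entails\R$ are genuine fixed-query evaluations. The one genuinely delicate point is the interaction between satisfiability and certain answering: the second item of the preceding proposition holds only under satisfiability, so the satisfiability test must be run first and the unsatisfiable case dispatched separately, as above.
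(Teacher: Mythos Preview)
Your proof is correct and follows precisely the route the paper intends: the corollary is stated without proof and is meant to be read off from the preceding proposition together with Theorem~5 (stickiness $\Rightarrow$ finite query rewriting) and the polynomial chase for functional dependencies. Your explicit separation of the data-independent preprocessing (building $\Sigma_\F$, $\R_{\neq}$ and $\R$) from the data-dependent steps ($D^*$, $D^*_\F$, and the two UCQ evaluations) is exactly the argument the paper leaves implicit, and the handling of the unsatisfiable case via the vacuous-entailment branch is the right way to glue the two bullets of the proposition together.
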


\subsubsection{Free Variables and Equalities}

Recall that a  query $\Q$ is called a \emph{union of conjunctive queries with equalities}, denoted $Q \in \UCQ^=$, when $Q$ is a first-order query of the form 
$$\Q = \{(x_1,\ldots,x_a)| \mbox{${\bigvee}_j$} Q_j\}$$
 where each $x_i$ is a called a \emph{free variable} and each clause $Q_j$ is a finite conjunction of relational atoms and equality atoms (with constants, free variables, and existential variables). Given such a query $\Q$, we may consider the set $\Q^*$ of instances obtained as follows: (1) rename every constant $c$ into a variable $v_c$ and add the atom $R_c(v_c)$ to each clause; (2) for every free variable $x_i$, introduce a predicate $V_i$ and add the atom $V_i(x_i)$ to each clause; (3) introduce a predicate $R_=$ and replace every equality atom $(t = t')$ by the atom $R_=(t,t')$; and (4) let $Q^*$ be the resulting set of clauses (which now consist of instances with variables only). Conversely, given an instances $R^*$ of the extended schema (and for a fixed tuple $(x_1,\ldots,x_n)$ of free variable) we let $R$ the set of relational and equality atoms over the original schema which is obtained by replacing every atom $R_c(u)$ by $(u=c)$ and every atom $V_i(u)$ by $u = x_i$. Given a set $\R^*$ of instances over the extended schema, we finally denote by $\R$ the $\UCQ^=$ query of the form 
$
\R = \{ (x_1,\ldots,x_a) | \bigvee\{ R | R^* \in \R^* \}\}
$.
These definitions are exemplified below:
$$
\begin{array}{l@{\,}l}
\Sigma = & 
\{A(u,v) \to B(u,u)\}\\
\Q = & \{(x_1,x_2)\, |\, B(x_1,x_2) \} \vspace{2mm}\\
\Q^* = & \{\{B(x_1,x_2), V_1(x_1),V_2(x_2)\}\}\\
\R^* = & \Q^* \cup \{\{A(u,v), V_1(u),V_2(u)\}\} \vspace{2mm}\\
\R = & \{(x_1,x_2)\,|\, B(x_1,x_2) \lor (\exists u,v,\, A(u,v) {\land} x_1{=}u {\land} x_2{=}u)\}.  \\
\end{array}
$$

As next formalized, the above technique can be used to generalize the results of the previous sections to the case of non-boolean queries with constants and equalities: 

\begin{proposition} \label{prop:4}
Given $\Sigma$, $\Q$, $\Q^*$,$\R^*$,$\R$ as above where  
$$\R^* \in \Fix(\Q^*,\res_\Sigma)$$
the $\UCQ^=$ query $\R$ is a rewriting of the $\UCQ^=$ query $\Q$. More formally, for all database 
$D$ and all tuples $\bar c$ of constants, the following statements are equivalent:
\begin{itemize}
\item $\bar c$ is an answer of $\R$ in $D$, denoted $\bar c \in \R(D)$
\item $\bar c$ is a certain answer of $\Q$ in $D$ under $\Sigma$, meaning that $\bar c \in Q(D')$ for all instance $D'$ such that $D \land \Sigma \entails D'$.
\end{itemize} 
\end{proposition}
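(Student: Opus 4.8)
The plan is to reduce the non-boolean, equality-bearing certain-answer problem to the boolean query-rewriting result of Theorem 2, whose statement already applies verbatim to the encoded instance query $\Q^*$ and to its rewriting $\R^* \in \Fix(\Q^*,\res_\Sigma)$. The reduction hinges on the observation that the marker atoms $V_i(x_i)$, $R_c(v_c)$ and the equality atoms $R_=$ introduced by the encoding $(\cdot)^*$ act as \emph{hooks}: since $V_i,R_c,R_=,R_{\not=}$ are fresh predicates occurring in no rule of $\Sigma$, no tgd or egd of $\Sigma$ ever fires on a marker atom, so these atoms are inert with respect to $\Sigma$ and persist (up to renaming) through both the chase and resolution. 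This is precisely what lets us decode $\R^*$ back into $\R$ and, conversely, fold a candidate answer into the database.

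Fix a database $D$ and a tuple $\bar c=(c_1,\ldots,c_a)$ of constants. First I would \emph{ground} the free variables by migrating the binding from the query into the database: I define a single instance $D'$ over the extended schema that encodes $D$ (renaming every constant $c$ to a variable $v_c$, adding the markers $R_c(v_c)$ and the atoms $R_{\not=}(v_c,v_{c'})$ for distinct hard constants) and that, in addition, contains the atoms $V_i(v_{c_i})$ for each $i\le a$ together with the reflexive atoms $R_=(v,v)$ needed to interpret equality on its domain. The whole proof then rests on two correspondences established for this \emph{single} $D'$, after which Theorem 2, instantiated with $\D':=D'$, $\Q:=\Q^*$ and $\R:=\R^*$, gives $D'\land\Sigma\entails\Q^*$ iff $D'\entails\R^*$, and chaining the two correspondences yields the claimed equivalence.

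For the \textbf{decoding correspondence} I would show $D'\entails\R^*$ iff $\bar c\in\R(D)$. A renaming witnessing $R^*[\theta]\subseteq D'$ for some disjunct $R^*\in\R^*$ is forced to send each $V_i$-marked variable to $v_{c_i}$, each $R_c$-marked variable to $v_c$, and each $R_=$-atom onto the diagonal, while mapping the relational atoms into the encoding of $D$; decoding $R^*$ to the disjunct $R$ of $\R$, this is exactly a witness that $R$ holds on $D$ under the binding $\bar x\mapsto\bar c$, and the converse reads the same equivalence backwards. For the \textbf{certain-answer correspondence} I would show $D'\land\Sigma\entails\Q^*$ iff $\bar c$ is a certain answer of $\Q$ in $D$ under $\Sigma$. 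This unfolds the certain-answer definition and reuses the constant-encoding property $D\land\Sigma\entails\Q$ iff $D^*\land\Sigma\entails\Q^*$ established above for databases with constants, now carrying the extra binding atoms $V_i(v_{c_i})$: because the marker predicates are inert, the models of $D'\land\Sigma$ are exactly the models of $D\land\Sigma$ decorated with these markers, and satisfaction of $\Q^*$ on such a model translates back to membership $\bar c\in Q$ over the decoded model.

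The main obstacle is the \emph{simultaneous} and faithful treatment of equality and of the unique-name assumption: one must arrange $D'$ so that the homomorphism semantics of the fresh predicates $R_=$ and $R_{\not=}$ mirror genuine equality of answer terms and distinctness of hard constants, and so that this works at once for both correspondences with the \emph{same} instance $D'$. In particular, distinct hard constants must never be collapsed by a renaming, which the $R_{\not=}$ markers enforce, while the intended equalities between free variables, existential variables and soft constants must remain realizable. Once the encoding is set up so that $R_=$ behaves as a reflexive relation respected by the decoding, the remainder is routine bookkeeping over the encoding and decoding maps.
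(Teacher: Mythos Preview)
The paper states Proposition~4 without proof, so there is no authorial argument to compare against directly. Your reduction---pushing the candidate tuple $\bar c$ into the database via the marker atoms $V_i(v_{c_i})$, adding the diagonal $R_=(v,v)$ facts, and then invoking Theorem~2 on the resulting boolean problem---is the natural route and is what the surrounding text (the encoding $(\cdot)^*$ and the worked example) is clearly setting up. The decomposition into a decoding correspondence and a certain-answer correspondence is the right one, and the inertness of the fresh predicates under $\Sigma$ is indeed the reason both correspondences hold.

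One claim is stated a bit too strongly. You write that ``the models of $D'\land\Sigma$ are exactly the models of $D\land\Sigma$ decorated with these markers.'' In the variable-only semantics of the paper, $R_{\not=}$ is an ordinary relation symbol, so nothing prevents a homomorphism witnessing $M'\models D'$ from sending $v_a$ and $v_b$ to the same element; such an $M'$ is a model of $D'\land\Sigma$ that does \emph{not} decode to a model of $D\land\Sigma$ respecting the unique-name assumption. The clean way to close this is to work under the satisfiability hypothesis that Section~3.5.1 makes explicit for the boolean case (Proposition~3): when $D\land\Sigma$ is satisfiable, the chase of $D'$ does not collapse distinct $v_c$'s, so the universal model of $(D',\Sigma)$ decodes faithfully and both of your correspondences go through via Theorem~1/Theorem~2. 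You should state this assumption, or else argue separately that every model of $D'\land\Sigma$ is a homomorphic image of one that respects the constants, so that entailment of $\Q^*$ transfers.
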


\begin{corollary}
We can use the technique of saturated resolution to compute a finite $\UCQ^=$ rewriting for a given $\UCQ^=$ query whenever there exists one (e.g. under stickiness).  
\end{corollary}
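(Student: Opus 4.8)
The plan is to reduce this $\UCQ^=$ statement to the instance-level termination result of Theorem 3 via the encoding of Section~\ref{sec:refinements}, and then read the finite answer back off with Proposition~\ref{prop:4}. Concretely, given the $\UCQ^=$ query $\Q$ I first form its encoding $\Q^*$ and run saturated resolution, producing some $\R^* \in \Fix(\Q^*,\res_\Sigma)$ whose decoding $\R$ is, by Proposition~\ref{prop:4}, \emph{always} a correct $\UCQ^=$ rewriting of $\Q$ under $\Sigma$. Since $\R$ is a finite union exactly when $\R^*$ is a finite set of instances, and since the remark following Theorem 3 guarantees that either \emph{all} elements of $\Fix(\Q^*,\res_\Sigma)$ are finite or none are, the whole corollary reduces to a single claim: if a finite $\UCQ^=$ rewriting of $\Q$ exists, then $\Fix(\Q^*,\res_\Sigma)$ is finite.

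To establish that claim I would invoke Theorem 3, which yields finiteness of $\Fix(\Q^*,\res_\Sigma)$ as soon as there is a finite \emph{instance-level} query rewriting for $(\Sigma,\Q^*)$. So the real work is to transport a finite $\UCQ^=$ rewriting into a finite instance-level rewriting of the encoded query. Let $\R_0$ be a finite $\UCQ^=$ rewriting of $\Q$, and let $\R_\infty = \bigcup_i \I_i$ be the canonical (possibly infinite) rewriting obtained from any complete derivation of $\Q^*$ by $\res_\Sigma$; by Theorem 2 this $\R_\infty$ is an instance-level query rewriting for $(\Sigma,\Q^*)$, and since $\R_\infty \in \Fix(\Q^*,\res_\Sigma)$, Proposition~\ref{prop:4} tells us its decoding is again a correct $\UCQ^=$ rewriting of $\Q$. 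Because any two $\UCQ^=$ rewritings have, by definition, the same certain answers on every database, the decoding of $\R_\infty$ and $\R_0$ are $\UCQ^=$-equivalent; encoding both sides and using that the encoding and decoding of Section~\ref{sec:refinements} are mutually inverse up to $\equiv$ on the extended schema, I would conclude $\R_\infty \equiv \R_0^*$ at the instance level. As $\R_0$ is finite, its encoding $\R_0^*$ is a finite set of instances, and being $\equiv$ to the genuine rewriting $\R_\infty$ it is itself an instance-level query rewriting for $(\Sigma,\Q^*)$. Theorem 3 then forces $\Fix(\Q^*,\res_\Sigma)$ to be finite, which is exactly the reduced claim.

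The main obstacle is the middle step, namely showing that $\UCQ^=$-equivalence of two decoded queries is \emph{reflected} as instance-level $\equiv$ of their encodings. This amounts to a faithfulness property of the encoding $(\cdot)^*$: the auxiliary predicates $R_c$, $V_i$ and $R_=$ of Section~\ref{sec:refinements} must rigidify renamings enough that instance-level containment $\R_1^* \entails \R_2^*$ matches $\UCQ^=$-containment of the decoded queries, rather than being disturbed by spurious instances over the extended schema that arise from no database. I would verify this by checking that each marker atom $V_i(x_i)$ (respectively $R_c(v_c)$) forces any witnessing renaming to map the $i$-th free variable (respectively the constant $c$) correctly, so that a renaming between encoded clauses carries exactly the same information as a homomorphism between the underlying conjunctive queries with equalities, and the two containment notions then coincide. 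Once this faithfulness is in place, the uniqueness of rewritings up to equivalence (immediate from the defining biconditionals in the definition of query rewriting) closes the argument, and the parenthetical ``e.g.\ under stickiness'' follows because the stickiness result of Section 3.3 already guarantees that a finite rewriting exists in that case.
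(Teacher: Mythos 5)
Your architecture (Proposition~\ref{prop:4} for correctness of the decoded rewriting, Theorem~3 for termination, the all-or-nothing remark after Theorem~3 to make the choice of derivation irrelevant, Theorem~5 for the stickiness parenthetical) is the right skeleton, but the bridging step you isolate as ``the main obstacle'' is genuinely false, and not for the reason you try to guard against. The failure is not that the marker atoms insufficiently rigidify renamings; it is that $R_=$ is an \emph{uninterpreted} predicate at the instance level, so one and the same equality can be represented either by an explicit $R_=$ atom or by actual identification of variables, and these two representations are homomorphically incomparable. In particular your premise that encoding and decoding are ``mutually inverse up to $\equiv$ on the extended schema'' fails on the paper's own example from Section~3.5.2: there $\Sigma=\{A(u,v)\to B(u,u)\}$, $\Q^*=\{\{B(x_1,x_2),V_1(x_1),V_2(x_2)\}\}$, and the resolution fixed point contains the clause $C=\{A(u,v),V_1(u),V_2(u)\}$, with markers on \emph{existential} variables. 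Decoding $C$ and re-encoding yields $C'=\{A(u,v),R_=(x_1,u),R_=(x_2,u),V_1(x_1),V_2(x_2)\}$, and neither of $C$, $C'$ maps into the other by any renaming: the $R_=$ atoms of $C'$ have no possible image in $C$, while $V_1(u)$ and $V_2(u)$ in $C$ would force $\theta(u)$ to equal two distinct variables of $C'$. Consequently your key implication --- $\UCQ^=$-equivalence of the decodings entails $\R_\infty\equiv\R_0^*$ at the instance level --- is unobtainable. Worse, $\R_0^*$ need not be an instance-level query rewriting of $(\Sigma,\Q^*)$ at all, so Theorem~3 cannot be invoked on it: in the same example, let $\R_0$ be the correct finite $\UCQ^=$ rewriting $\{(x_1,x_2)\mid B(x_1,x_2)\lor\exists u,v\,(A(u,v)\land x_1{=}u\land x_2{=}u)\}$ and take $\D'=\{C\}$. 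Then $\D'\land\Sigma\entails\Q^*$ (chasing adds $B(u,u)$), but $\D'\not\entails\R_0^*$, since $\D'$ contains no $B$ atom and no $R_=$ atom. So the defining biconditional of Definition~2 fails for $\R_0^*$ even though $\R_0$ is a perfectly good finite $\UCQ^=$ rewriting.

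What the transport step needs, and what your sketch does not supply, is a normalization of the encoded rewriting under the equality theory of $R_=$ before appealing to Theorem~3: close each clause of $\R_0^*$ under contracting $R_=$ atoms (identifying their two arguments) and under the induced re-placement of the $V_i$ and $R_c$ markers along such identifications, and then prove that this (still finite) closure is a genuine instance-level query rewriting of $(\Sigma,\Q^*)$. Only with that lemma in hand does your reduction ``finite $\UCQ^=$ rewriting exists $\Rightarrow$ $\Fix(\Q^*,\res_\Sigma)$ finite'' go through. Note also that the parenthetical case of the corollary needs none of this machinery and is presumably the paper's intended路径: under stickiness, Theorem~5 holds for \emph{every} set of instances, in particular for $\Q^*$ itself over the extended schema, so a finite instance-level rewriting of $(\Sigma,\Q^*)$ exists outright; Theorem~3 then gives termination of saturated resolution, and Proposition~\ref{prop:4} decodes the fixed point into the desired finite $\UCQ^=$ rewriting.
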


\section{Datalog Rewriting}

As announced in the introduction, we now consider a more general notion of rewriting, called \emph{Datalog rewriting}, which will prove to be a unifying paradigm of tractability for Datalog$^\pm$. More precisely, we will show that it captures the class of terminating dependencies (Section 4.1) and the class of guarded tgds (Section 4.2).

\begin{definition} 
Given a set $\Sigma$ of tgds and egds, and a set $\Q$ of instances, a Datalog rewriting for $(\Sigma,\Q)$ is a triple $(\A,\Gamma,G)$ where
\begin{itemize}
\item $\A$ is a set of predicates which do not occur in $\Sigma$ or $\Q$;
\item $\Gamma$ is a finite set of tgds $B{\to}H$ where $\V H \subseteq \V B$;
\item $G$ is an instance of the form $G=\{\mathsf{G}()\}$ where $\mathsf{G} \in \A$; 
\item for all $\A$-free instances $D$, it holds that: 
$$
D \land \Sigma \entails Q \quad \mbox{iff} 
\quad 
D \land \Gamma \entails G.
$$
\end{itemize}
\end{definition}

Note that, in the above definition, each tgd in $\Gamma$ correspond to a standard Datalog rule (also known as a \emph{full tgd}). Since $\Gamma$ is required to be finite, $\Gamma$ corresponds to a standard Datalog program. A predicate of $\A$ will be called an \emph{auxiliary} predicate and $\A$ corresponds intuitively to an \emph{intentional} schema. 
An $\A$-\emph{free instance} is defined as an instance in which no predicate of $\A$ occurs. That is, an $\A$-free instance corresponds intuitively to an \emph{extensional} database. The instance $G$ is finally known as the \emph{goal} of the Datalog program $(\A,\Gamma,G)$ and the predicate $\mathsf G$, of arity 0, is known as the \emph{goal predicate}. The following proposition finally summarizes the basic  properties of Datalog rewritings: 

\begin{proposition}   \ 
\begin{itemize}
\item If a Datalog rewriting exists for $(\Sigma,\Q)$, the following problem is in \textsc{Ptime}: given $D$, does $\	D \land \Sigma \entails \Q$? 
\item If there is a finite query rewriting for $(\Sigma,\Q)$ then there is also a Datalog rewriting for $(\Sigma,\Q)$.
\item There are some pairs $(\Sigma,\Q)$ for which a Datalog rewriting exists while no finite query rewriting exists. 
\end{itemize}
\end{proposition}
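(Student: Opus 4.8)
The plan is to dispatch the three items separately, the first two being routine and the third carrying the real content. For the first item, I would reduce query answering to the evaluation of a fixed Datalog program and invoke the classical polynomial data-complexity bound for Datalog (see e.g.\ \cite{AbHV95}). Given a Datalog rewriting $(\A,\Gamma,G)$ and an input $D$ (which is $\A$-free, being over the original schema), the defining property gives $D\land\Sigma\entails\Q$ iff $D\land\Gamma\entails G$, i.e.\ iff the goal atom $\mathsf{G}()$ lies in the least model of $\Gamma$ over $D$. Since every rule of $\Gamma$ is a full tgd ($\V H\subseteq\V B$), the chase of $D$ by $\Gamma$ invents no variable, so every derivable atom uses a predicate of $\Gamma$ and variables from $\V D$; their number is at most polynomial in $|D|$ for fixed $(\A,\Gamma,G)$, the fixpoint is reached after polynomially many rounds, and each round is polynomial, which gives the \textsc{Ptime} bound.

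For the second item, I would compile the finite query rewriting into a single Datalog program whose goal fires exactly when one rewriting instance matches. Writing $\R=\{R_1,\dots,R_m\}$ for the finite query rewriting, so that $D\entails\R$ iff $D\land\Sigma\entails\Q$, I take a fresh $0$-ary predicate $\mathsf{G}$, set $\A=\{\mathsf{G}\}$, $G=\{\mathsf{G}()\}$, and $\Gamma=\{R_i\to\mathsf{G}()\mid 1\le i\le m\}$. Each rule is a full tgd (its head has no variables, so $\V{H}=\emptyset$) and $\Gamma$ is finite. For an $\A$-free $D$, no rule of $\Gamma$ produces any atom other than $\mathsf{G}()$, so the atom $\mathsf{G}()$ is forced iff some $R_i$ maps homomorphically into $D$, i.e.\ iff $D\entails\R$; hence $D\land\Gamma\entails G$ iff $D\land\Sigma\entails\Q$, which is exactly what a Datalog rewriting requires.

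For the third item, I would reuse the pair of Proposition~\ref{prop:egdnotrew}, namely $\Sigma=\{A(x,y),A(x,y')\to y{=}y'\}$ and $\Q=\{R(z,z)\}$, for which that proposition already rules out any finite query rewriting; it then suffices to construct a Datalog rewriting. The guiding observation is that chasing an egd amounts to computing the least \emph{congruence} $\approx$ on variables that is closed under the dependency, and that $D\land\Sigma\entails\Q$ holds exactly when this congruence forces some $R$-atom to have two $\approx$-equal arguments. I would therefore introduce a binary auxiliary predicate $E$ together with the base rule $A(x,y),A(x,y')\to E(y,y')$, the symmetry and transitivity rules for $E$, the congruence rule $A(u,v),A(u',v'),E(u,u')\to E(v,v')$, and the goal rule $R(x,y),E(x,y)\to\mathsf{G}()$, all of which are full tgds. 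The step I expect to be delicate is proving that the least fixpoint of $E$ coincides with the variable identifications produced by the egd chase: the subtlety is that merging two second arguments can equate two first arguments and thereby unlock further merges, and one must check that the congruence rule together with transitivity accounts for exactly these cascades. Establishing both inclusions between the chase equivalence and the relation $E$ is where the genuine work lies; once correctness of $E$ is granted, the construction is clearly a Datalog rewriting and the separation is complete.
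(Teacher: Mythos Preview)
Your arguments for the first two items match the paper's proof essentially verbatim: the paper also invokes the polynomial data complexity of the chase for full tgds, and also compiles a finite query rewriting $\R$ into the Datalog program $(\{\mathsf G\},\{R\to G\}_{R\in\R},G)$.

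For the third item your construction has a genuine gap. Take $D=\{R(a,a)\}$: then $D\entails\Q$ directly, hence $D\land\Sigma\entails\Q$. But your program derives no $E$-atom at all (there are no $A$-atoms in $D$), so the goal rule $R(x,y),E(x,y)\to\mathsf G()$ cannot fire and $\mathsf G()$ is not obtained. The problem is that your $E$ lacks reflexivity on variables that never occur as the second argument of an $A$-atom, so the ``direct match'' case $R(a,a)\in D$ is missed. The repair is easy---add a second goal rule $R(x,x)\to\mathsf G()$, or add reflexivity axioms such as $R(x,y)\to E(x,x)$ and $R(x,y)\to E(y,y)$---but as written the rewriting is incorrect, and the delicate congruence argument you flag would still not save it.

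Even once repaired, your route for the third item is considerably heavier than the paper's. The paper takes $\Sigma=\{R(x,y),R(y,z)\to R(x,z)\}$ and $\Q=\{R(x,x)\}$: since $\Sigma$ is \emph{already} a set of full tgds, $(\{\mathsf G\},\,\Sigma\cup\{R(x,x)\to G\},\,G)$ is trivially a Datalog rewriting, while the absence of a finite query rewriting is the classical fact that the existence of an $R$-cycle is not first-order definable. Your choice has the merit of reusing Proposition~\ref{prop:egdnotrew} for the negative half, but you then pay for the positive half with a congruence-closure correctness proof; the paper's choice makes the positive half a one-liner and outsources the negative half to a textbook locality argument.
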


\begin{proof}[Sketch]
The first point follows from the following observation: when $\Gamma$ is a set of full tgds, we can compute a data rewriting $\U$ for $(D,\Gamma)$ in polynomial time (for a fixed $\Gamma$) using the chase, and we can then test in polynomial time (for a fixed instance $G$) whether $\U \entails G$. For the second point, 
given a finite query rewriting $\R$ for $(\Sigma,\Q)$ and with letting  
$G = \{\mathsf{G}()\}$ for some fresh predicate $\mathsf{G}$, we can observe that $(\{\mathsf G\},\{R \to G\}_{R \in \R},G)$ is a Datalog rewriting for $(\Sigma,\Q)$. Finally, for $\Sigma{=}\{R(x,y),R(y,z){\to}R(x,z)\}$ and $\Q=\{R(x,x)\}$, the pair 
$(\{\mathsf G\}, \Sigma \cup \{R(x,x) \to G\},G)$ is  a Datalog rewriting of $(\Sigma,\Q)$ while there is no finite (first-order) query rewriting for $(\Sigma,\Q)$. 
\end{proof}

\subsection{From Termination To Datalog}

This section revisits the criterion of \emph{oblivious termination} which was introduced in \cite{brunoPODS} and 
 presented in \cite{CGLMP10} as a language of the Datalog$^\pm$ family. As discussed in \cite{CGLMP10}, there are alternative criteria of termination that can be considered (see \cite{Termin} for the current the state of the art). Note however that oblivious termination captures the case of \emph{weakly acyclic} \cite{FKMP05} sets of tgds and arbitrary sets  of egds, and the results presented in this section can be extended to the classes discussed in \cite{Termin}.  In a nutshell, oblivious termination is based on (1) a technique of simulation that encodes the egds by means of tgds and (2) a standard notion of skolemization which generates a set of rules with function symbols (that is, a \emph{logic program}). As observed in \cite{brunoPODS}, this logic program enjoys a technical \emph{bounded depth} property. In turn, we will show in this section that this bounded depth property ensures the existence of a Datalog rewriting. \\

\newcommand\E{{\mathsf E}}

{\bf Simulation.}  Given a set $\Sigma$ of tgds and egds, we say that $\Sigma'$ is a \emph{substitution-free} simulation of $\Sigma$, denoted $\Sigma' \in \mathsf{Sim}_\E(\Sigma)$, when $\Sigma'$ is a set of tgds obtained from $\Sigma$ using the simulation technique from \cite{brunoPODS} (which, unlike alternative techniques, avoid the use of substitution axioms). More precisely, we let $\Sigma' \in \mathsf{Sim}_\E(\Sigma)$ when $\E$ is a binary predicate which does not occur in $\Sigma$ and $\Sigma'$ can be computed with the following non-deterministic algorithm:  
\begin{center}
\begin{tabular}{l}
Start from $\Sigma'=\Sigma$.\\
Add the following tgds to $\Sigma'$: \\
\qquad  $\E(x,y) \to \E(y,x)$ \\
\qquad  $\E(x,y),\E(y,z) \to \E(x,z)$ \\
For all predicates $R$ occurring in $\Sigma$ (of arity $n$) \\
\quad Add the following tgd to $\Sigma'$:\\ 
\qquad  $R(x_1,\ldots,x_n) \to \E(x_1,x_1),\ldots,\E(x_n,x_n)$ \\
Repeat (until fixed point) \\
\quad If there is a tgd $B{\to}H$ or an egd $B{\to}x{=}y$ in $\Sigma'$ and\\
\quad  a variable $x \in \V B$ that occurs more than once in $B$\\
\quad\quad Let $x'$ be a fresh variable \\
\quad\quad Replace one occurrence of $x$ by $x'$ in $B$\\
\quad\quad Add the atom $\E(x,x')$ to $B$\\   
For all egds $r:B \to x=y$ in $\Sigma'$\\
\quad Replace $r$ by the tgd $B \to \E(x,y)$\\ 
Return $\Sigma'$.
\end{tabular}
\end{center}

{\bf Skolemization.} Given a set of tgds $\Sigma$, we denote by $P_\Sigma$ the logic program which is obtained by skolemizing $\Sigma$ is a standard way. That is,  for all $B(X,Y)\to H(X,Z)$ in $\Sigma$, the program $P_\Sigma$ contains a rule $B(X,Y){\to}H'(X)$ where $H'$ is obtained from $H$ by replacing every variable $z \in Z$ by a term $f(\bar x)$ where $f$ is fresh function symbol and the tuple $\bar x$ is a fixed ordering of $X = \{\bar x\}$. Given an instance $I$ and such a logic program $P_\Sigma$ we then denote by $P_\Sigma(I)$ the fixed point of $I$ and $P_\Sigma$ (also known as the minimal Herbrand model).

\begin{definition}  \ 
\begin{itemize}
\item We say that a set $\Sigma$ of tgds \emph{ensures oblivious termination} iff, for all finite instance $I$, $P_\Sigma(I)$ is finite.
\item Given a set $\Sigma$ of tgds and egds, we say that $\Sigma$ ensures oblivious termination iff there exists $\Sigma' \in \mathsf{Sim}_\E(\Sigma)$ such that $\Sigma'$ ensures oblivious termination.   
\end{itemize}
\end{definition}

{\bf Bounded Depth.} Given a skolem term $t$ with variables and function symbols, we define the \emph{depth} $d(t)$ of $t$ in a standard way. More precisely, given a variable $x$ we let $d(x) = 1$, and given a term $t=f\ang{t_1,\ldots,t_n}$, we let $d(t) = 1 + \mathsf{max}_{i \le n}\big(d(t_i)\big)$. Given a set of tgds $\Sigma$, an instance $D$ and an integer $k$ we denote by $P_\Sigma^k(D)$ the set of atoms with skolem terms which is obtained by applying (inductively) all the rules $B(X,Y) \to H'(X)$
 of $P_\Sigma$, but only for the valuations $\theta$ of $X \cup Y$ such that, for all $u \in X\cup Y$, the depth of $\theta(u)$ is at most $k$. It can be checked that $P_\Sigma^k(I)$ is finite and well-defined whenever $\Sigma$, $I$ and $k$ are finite. In particular, the order of application of the rules does not matter. Note here that $P^k_\Sigma(I)$ is a skolem instance (with skolem terms) rather than a standard instance (with variables only) but he definitions from Section 3.1 can nonetheless be extended in a natural way. In particular, given a set $\Q$ of instances , we let $P^k_\Sigma(I) \entails Q$ iff, for all $Q \in \Q$, there is a mapping $\theta$ from $\V Q$ to the terms of $P^k_\Sigma(I)$ such that $Q[\theta] \subseteq P^k_\Sigma(I)$. 

\begin{definition}
Given a set $\Sigma$ of tgds and a set $\Q$ of instances, we say that $(\Sigma,\Q)$ has \emph{bounded depth} iff there exists a finite integer $k$ (depending only on $(\Sigma,Q)$) such that, for all instances $D$, the following statements are equivalent: 
\begin{itemize}
\item $D \land \Sigma \entails \Q$ 
\item $P^k_\Sigma(D) \entails \Q$
\end{itemize}   
\end{definition}

The following result was established in \cite{brunoPODS}:

\begin{lemma}
If $\Sigma$ is a finite set of tgds ensuring oblivious termination, there exists $k$ (depending only on $\Sigma$) such that, for all instances $D$, $P_\Sigma(D)= P^k_\Sigma(D)$. 
Therefore, for all sets $\Q$ of instances, $(\Sigma,\Q)$ has bounded depth.  
\end{lemma}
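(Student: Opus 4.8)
The plan is to reduce the uniform depth bound to a single \emph{critical instance} and then transport that bound to every database by exploiting the fact that the skolem chase is functorial with respect to homomorphisms. First I would fix the critical instance $I^*$ consisting of a single variable $*$ together with one atom $R(*,\ldots,*)$ for every predicate $R$ occurring in $\Sigma$. Since $\Sigma$ is finite, $I^*$ is finite, so the hypothesis of oblivious termination guarantees that $P_\Sigma(I^*)$ is finite. I would then let $k$ be the maximal depth $d(t)$ over all skolem terms $t$ appearing in $P_\Sigma(I^*)$; by construction $k$ depends only on $\Sigma$.

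The heart of the argument is to show that this single $k$ already bounds the depth of every term produced from an arbitrary database $D$. For each $D$, the map $h$ sending every variable to $*$ is a homomorphism $h\colon D \to I^*$, because $I^*$ contains every atom over $*$. I would extend $h$ to skolem terms by setting $\hat h(f(t_1,\ldots,t_n)) = f(\hat h(t_1),\ldots,\hat h(t_n))$ and argue, by induction on the length of the derivation, that $\hat h$ is a homomorphism from $P_\Sigma(D)$ into $P_\Sigma(I^*)$; this is just the monotonicity of the minimal-Herbrand-model operator of the skolemized Horn program. The crucial point is that $\hat h$ relabels only the leaf variables and keeps every function symbol in place, so $d(\hat h(t)) = d(t)$. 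Consequently a term of depth exceeding $k$ in $P_\Sigma(D)$ would map to a term of the same depth in $P_\Sigma(I^*)$, contradicting the choice of $k$; hence every term in $P_\Sigma(D)$ has depth at most $k$.

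With the depth bound in hand, I would conclude $P_\Sigma(D) = P^k_\Sigma(D)$ as follows. The inclusion $P^k_\Sigma(D) \subseteq P_\Sigma(D)$ is immediate, since $P^k_\Sigma$ merely restricts the admissible valuations. For the reverse inclusion, any atom of $P_\Sigma(D)$ is produced by some rule $B \to H'$ under a valuation $\theta$ whose matched body $B[\theta]$ already lies in $P_\Sigma(D)$; thus every $\theta(u)$ with $u \in \V B$ occurs in $P_\Sigma(D)$ and so has depth at most $k$, which means $\theta$ is an admissible valuation for $P^k_\Sigma$. An induction on derivation length then reproduces the whole of $P_\Sigma(D)$ inside $P^k_\Sigma(D)$. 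The final ``therefore'' follows at once: since $P_\Sigma(D)$ is a universal model of $D \land \Sigma$ (the skolem-chase form of the equivalence $(1)\Leftrightarrow(2)$ of Theorem 1), we get $D \land \Sigma \entails \Q$ iff $P_\Sigma(D) \entails \Q$ iff $P^k_\Sigma(D) \entails \Q$, and the same $k$ works for every $\Q$, which is precisely bounded depth.

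The step I expect to be the main obstacle is establishing that the induced map $\hat h$ is genuinely a homomorphism between the two fixed points \emph{and} that it preserves term depth exactly. This relies on the observation that each skolemized head term $f(\bar x)$ is determined solely by the images of the frontier variables, so the derivations over $D$ and over $I^*$ stay in lockstep up to the leaf relabelling induced by $h$; any sloppiness here (for instance collapsing two frontier tuples that $h$ identifies) could spuriously decrease depth and break the uniformity of $k$. The reduction to the single critical instance $I^*$ is exactly what converts the per-database finiteness provided by oblivious termination into a bound independent of $D$.
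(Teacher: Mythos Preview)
The paper does not supply its own proof of this lemma; it simply records it as a result ``established in \cite{brunoPODS}''. Your argument via the critical instance $I^*$ and the depth-preserving homomorphism $\hat h$ is correct and is, in fact, precisely the technique used in that reference, so there is nothing to compare at the level of strategy.

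One small point worth tightening: you build $I^*$ only over the predicates \emph{occurring in $\Sigma$}, but an arbitrary $D$ may contain atoms over other predicates of the ambient schema $\sigma$, in which case the map $h$ collapsing all variables to $*$ is not literally a homomorphism $D\to I^*$. The repair is immediate --- either take $I^*$ over the whole (finite) schema $\sigma$, or observe that atoms whose predicate is absent from $\Sigma$ can never be matched by a rule body and hence contribute no skolem terms beyond depth $1$ --- but you should say so explicitly. Your closing worry about $\hat h$ ``spuriously decreasing depth'' is unfounded: since $\hat h$ acts as the identity on the function-symbol skeleton and only relabels leaf variables, $d(\hat h(t))=d(t)$ holds on the nose, regardless of any collapsing at the leaves.
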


We next present the main result of this section.

\begin{theorem} \label{theo:bdepth}
Given a set $\Sigma$ of tgds and a set $\Q$ of instances, if $(\Sigma,\Q)$ has bounded depth (and if we know the bound $k$), we can compute a Datalog rewriting for $(\Sigma,\Q)$. 
\end{theorem}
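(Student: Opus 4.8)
The plan is to build the Datalog rewriting $(\A,\Gamma,G)$ by turning the bounded-depth skolem computation into a set of full tgds over an extended schema that simulates skolem terms of depth at most $k$. The key idea is that, since $(\Sigma,\Q)$ has bounded depth with known bound $k$, query answering reduces to computing $P^k_\Sigma(D)$ and testing $P^k_\Sigma(D) \entails \Q$. Because $k$ is fixed, the skolem terms that can ever appear in $P^k_\Sigma(D)$ have depth at most $k$, and hence there are only finitely many ``term shapes'' (skolem term trees) up to the choice of their leaf variables. The first step is therefore to introduce, for each predicate $R$ of $\Sigma$ (of arity $a_R$) and each tuple of term shapes of depth $\le k$ for its arguments, a fresh auxiliary predicate in $\A$ that records an atom $R(t_1,\ldots,t_{a_R})$ by storing only the leaf variables of the $t_i$ in a flat tuple. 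Since the set of such shapes is finite (for fixed $k$, $\Sigma$), this yields a finite auxiliary schema.

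Next I would translate each skolemized rule $B(X,Y) \to H'(X)$ of $P_\Sigma$ into a family of full Datalog tgds over $\A$. Concretely, for every way of matching the body atoms of $B$ against auxiliary predicates encoding body atoms whose skolem arguments have depth $\le k$, and provided the head atoms $H'(X)$ use skolem terms that remain of depth $\le k$, one emits a full tgd whose body is the conjunction of the corresponding flat auxiliary atoms and whose head is the flat auxiliary atom(s) encoding $H'(X)$. The depth-$\le k$ restriction in the definition of $P^k_\Sigma$ is exactly what guarantees we never need an auxiliary predicate for a term of depth $> k$, so the rule set stays finite; it is also what makes these rules \emph{full} (no value invention), because the skolem terms in the head are determined by and built from the body variables rather than introduced fresh. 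One must also include ``lifting'' rules that inject the input database $D$ (all of whose terms are depth-$1$ variables) into the auxiliary predicates for the trivial shapes, so that $D \land \Gamma$ correctly starts the simulation. I would then add, for each $Q \in \Q$, a rule whose body encodes ``$\Q$ has a homomorphism into the simulated instance'' — i.e.\ for each way the atoms of $Q$ can be mapped onto depth-$\le k$ term shapes consistent with a unifying renaming, a full tgd firing the nullary goal predicate $\mathsf G$. Setting $G=\{\mathsf{G}()\}$ completes the triple.

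The correctness argument would establish a bijection-like correspondence between the least fixed point of $\Gamma$ on $D$ (restricted to the auxiliary predicates) and the depth-$k$ skolem instance $P^k_\Sigma(D)$: each auxiliary atom in the Datalog fixed point decodes to a unique skolem atom of $P^k_\Sigma(D)$, and conversely. Given this correspondence, $D \land \Gamma \entails G$ holds iff some goal-rule fires iff $P^k_\Sigma(D) \entails \Q$, which by the bounded-depth hypothesis is equivalent to $D \land \Sigma \entails \Q$, as required by the definition of Datalog rewriting. Since $\A$ is disjoint from the predicates of $\Sigma$ and $\Q$, and every $D$ of interest is $\A$-free, the equivalence is stated for exactly the right class of instances.

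The main obstacle I anticipate is the bookkeeping around skolem term shapes and the faithful flattening of nested terms into flat auxiliary tuples while preserving variable identifications (joins). A skolem term like $f\ang{x}$ shared across two head atoms forces those atoms to agree on the encoded subterm, and when such terms reappear in a later rule body the Datalog encoding must re-impose these equalities through shared variables in the flat tuples; getting the shape-indexed predicates and the matching conditions right so that joins on skolem subterms are simulated correctly by joins on their flat leaf-variable encodings is the technically delicate part. A secondary subtlety is bounding the blowup: the number of shapes of depth $\le k$ is finite but can be large in $k$ and in $\Sigma$, so one should note that $\Gamma$ has size depending only on $(\Sigma,\Q,k)$ and is thus a fixed Datalog program, which is all that Proposition~7's \textsc{Ptime} data-complexity guarantee requires.
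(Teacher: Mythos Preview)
Your proposal is correct and follows essentially the same construction as the paper: encode depth-$\le k$ skolem atoms via shape-indexed auxiliary predicates (the paper writes $R_s(\bar x)$ where $s$ records the term shapes and $\bar x$ the leaf variables), instantiate each skolemized rule once per admissible shape assignment to obtain finitely many full tgds, and add goal rules for each $Q\in\Q$. One minor slip worth fixing: you filter out rules whose head terms exceed depth $k$, but the paper's definition of $P^k_\Sigma$ bounds only the \emph{body} valuation, so head terms may reach depth $k{+}1$ (the paper's worked example for $k=2$ explicitly emits heads with depth-$3$ shapes such as $f\ang{f\ang{1,2},f\ang{1,3}}$); these atoms are dead ends for further rule firings but can still be targets of the query homomorphism, so your auxiliary schema must include shapes up to depth $k{+}1$ on the head side.
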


\begin{proof}[Sketch]
The key idea of the proof is to use fresh predicate symbols to simulate the effect of the function symbols of $P^k_\Sigma$. Intuitively, every atom $R(\bar t)$ with skolem terms can indeed be simulated with a standard atom $R_s(\bar x)$, with variables only, where $s$ encodes the ``shape" of each term, while $\bar x$ corresponds to the variables that occur in $\bar t$. For instance:
$$
R(x,f\ang{x,y},g\ang{y,f\ang{x,z}})  \approx  R_{1,f\ang{1,2},g\ang{2,f\ang{1,3}}}(x,y,z) \\
$$
For a fixed bound $k$, for every rule $B(X,Y) \to H'(X)$ in $P_\Sigma$ and for every valuation $\theta$ of $X \cup Y$ such that $\max \{d(\theta(v)) | v \in X \cup Y\} \le k$ we can then translate every skolem atom in the rule $B[\theta] \to H[\theta]$ into a standard atom. For instance, if $k = 2$ and $P_\Sigma$ contains only one binary skolem function $f$, we can replace the rule 
$$
A(x,y) \to B(x,y,f(x,y))
$$   
by a set of Datalog rules containing, among others, the rules:
$$
\begin{array}{l} 
A(x,y) \to B_{1,2,f\ang{1,2}}(x,y)\\
A_{1,f\ang{1,1}}(x) \to B_{1,f\ang{1,1},f\ang{1,f\ang{1,1}}}(x)\\
A_{1,f\ang{1,2}}(x,y) \to B_{1,f\ang{1,2},f\ang{1,f\ang{1,2}}}(x,y)\\
A_{f\ang{1,2},f\ang{1,3}}(x,y,z) \to B_{f_{1,2},f\ang{1,3},f\ang{f_{1,2},f\ang{1,3}}}(x,y)\\
\ldots
\end{array}
$$
Consider now $\Q$  of the form $\Q=\{\mathsf G()\}$ such that $(\Sigma,\Q)$ has bounded depth $k$. Let $\A^k$ be the set predicates $R_s$ where $s$ encodes a shape of depth $\le k$. Let $\Gamma^k$ be set of Datalog rules resulting from the above construction. We can check in this case that $(\A,\Sigma^k,\{\mathsf G()\})$ is a Datalog rewriting for $(\Sigma,\Q)$. In the general case, (when $\Q$ is not already of the form $\{\mathsf G()\}$), 
we may introduce a fresh 0-ary predicate $G$, consider all the tgds $r_Q : Q \to G()$ where $Q \in \Q$, and consider all the valuations $\theta$ of $X_Q$ such that 
$\mathsf{max}\{d(\theta(x))| x \in X_Q \} \le k$. We can then encode each of these rules with a Datalog rule over $\A^k \cup \{\mathsf G\}$, and we can conclude as in the previous case. 
\end{proof}

\begin{corollary}
There is an algorithm that, given a set $\Sigma$ of tgds and egds ensuring oblivious termination, and given a set $\Q$ of instances, computes a Datalog rewriting for $(\Sigma,\Q)$.  
\end{corollary}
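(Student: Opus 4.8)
The plan is to reduce the claim to the pure-tgd case already settled by Theorem~\ref{theo:bdepth}, using the substitution-free simulation as a bridge between egds and tgds. First I would eliminate the egds: since $\Sigma$ ensures oblivious termination, the hypothesis provides --- and the $\mathsf{Sim}_\E$ procedure computes --- a set of tgds $\Sigma' \in \mathsf{Sim}_\E(\Sigma)$ that itself ensures oblivious termination. Recall that $\Sigma'$ replaces each egd $B \to x{=}y$ by a tgd $B \to \E(x,y)$ and forces $\E$ to be a congruence through the reflexivity rules $R(\bar x) \to \E(x_i,x_i)$ together with the symmetry and transitivity rules. Relying on the correctness of this encoding (from \cite{brunoPODS}) --- intuitively, the $\E$-quotient of a universal $\Sigma'$-model is a universal $\Sigma$-model --- I would relax the query to match: let $\Q^\E$ be obtained from $\Q$ by renaming the repeated occurrences of each variable to fresh variables tied together by $\E$-atoms (and by replacing each equality atom $t{=}t'$ with $\E(t,t')$), exactly as the simulation linearises rule bodies. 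The key identity to establish is that, for every instance $D$ over the original schema,
$$
D \land \Sigma \entails \Q \quad\mbox{iff}\quad D \land \Sigma' \entails \Q^\E .
$$

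Second, I would obtain a depth bound. As $\Sigma'$ is a finite set of tgds ensuring oblivious termination, the preceding lemma (from \cite{brunoPODS}) yields a finite $k$, depending only on $\Sigma'$, such that $(\Sigma',\Q^\E)$ has bounded depth $k$. This $k$ is effectively computable, for instance by running the skolem chase of $\Sigma'$ on a single critical instance until it stabilises and reading off the maximal term depth reached.

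Third, I would apply Theorem~\ref{theo:bdepth} to $(\Sigma',\Q^\E)$ with the now-known bound $k$, producing a Datalog rewriting $(\A',\Gamma,G)$. No extra work is needed to handle $\E$: the symmetry, transitivity and reflexivity rules are full tgds, so the shape-encoding of Theorem~\ref{theo:bdepth} simply compiles them into Datalog rules that recompute $\E$ on the terms of each admissible shape. Since neither $\E$ nor the shape predicates occur in the original $\Sigma$ or $\Q$, setting $\A = \A' \cup \{\E\}$ turns every original-schema $D$ into an $\A$-free instance, and chaining the two equivalences gives, for all such $D$,
$$
D \land \Sigma \entails \Q \ \mbox{iff}\ D \land \Sigma' \entails \Q^\E \ \mbox{iff}\ D \land \Gamma \entails G ,
$$
so that $(\A,\Gamma,G)$ is a Datalog rewriting for $(\Sigma,\Q)$. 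All three steps are effective, which yields the desired algorithm.

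The hard part will be the first step: checking that the substitution-free simulation is faithful not merely for chase termination but for \emph{certain} query answering, and that the $\E$-relaxation $\Q^\E$ is both sound and complete. Soundness asks that every $\E$-join witnessed in a $\Sigma'$-model correspond to a genuine equality after quotienting, whereas completeness asks that the reflexivity rules populate $\E$ over the whole active domain, so that even the matches of $\Q$ requiring no merging are recovered; the symmetry and transitivity rules are precisely what make $\E$ a full congruence and thus let $\E$ stand in for equality without loss. Once this correspondence is in place, the remaining steps are routine invocations of the cited lemma and of Theorem~\ref{theo:bdepth}.
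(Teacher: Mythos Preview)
Your plan is correct and matches the argument the paper leaves implicit: the corollary is stated without proof, but the intended route is exactly to pass to some $\Sigma' \in \mathsf{Sim}_\E(\Sigma)$ (which exists and ensures oblivious termination by definition), invoke the preceding lemma to obtain a computable depth bound for $(\Sigma',\Q^\E)$, and then apply Theorem~\ref{theo:bdepth}. Your explicit identification of the $\E$-relaxed query $\Q^\E$ and of the need to check that the substitution-free simulation is faithful for \emph{certain answers} (and not merely for chase termination) fills in precisely the details the paper omits; the only minor point to watch is that the reflexivity rules in $\mathsf{Sim}_\E$ are added only for predicates occurring in $\Sigma$, so you should either assume $\Q$ (and $D$) live over that signature or add the missing reflexivity rules explicitly.
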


\subsection{From Guardedness To Datalog} 

In this section, we consider the class of \emph{guarded} tgds. Intuitively, we will say that a tgd is guarded when there is an atom in the body (called a \emph{guard}) that contains all the \emph{universal} variables. In turn, a variable is called universal iff it occurs both in the body and the head. Note that the variables that occur only in the body are not taken into account in this definition of guardedness. Therefore, the class of guarded tgds contains, as a special case, the tgds $B \to \mathsf{G}()$ where $B$ is an arbitrary instance and $\mathsf{G}$ is a 0-ary predicate (for example, a goal predicate) because such tgds have no universal variable.  
Another example of guarded tgd is
$$
A(x,y,z),B(i,x,y),B(j,y,z) \to \exists k, B(k,x,z)
$$ 
where the set of universal variables is $\{x,z\}$ and the atom $A(x,y,z)$ is a guard. In contrast, the tgd  
$$
B(i,x,y),B(j,y,z) \to \exists k, B(k,x,z)
$$ 
in \emph{not} guarded since there is no atom in the body that contains both $x$ and $z$. 

\begin{definition}
A tgd of $B \to H$ is \emph{guarded} iff there is a atom $G \in B$ such that $(\V B \cap \V H) \subseteq \V G$.
\end{definition}

\subsubsection{The Case of $\beta$-Guardedness}

A special class of guarded tgds was considered in \cite{CaGK08,CGLMP10}, called $\beta$-guarded tgds in this section, that complies with the following definition:

\begin{definition}
A tgd $B \to H$ is called $\beta$-guarded iff there exists an atom $G\in B$ such $\V B \subseteq \V G$. 
\end{definition}

Note that a $\beta$-guarded tgd is (only) a special case of guarded tgds since the requirement $\V B \subseteq \V G$ is stronger than $(\V B \cap \V H) \subseteq \V G$. For example, the tgd 
$$
A(x,y,z),B(i,x,y),B(j,y,z) \to \exists k, B(k,x,z)
$$
is  guarded but not $\beta$-guarded. While $\beta$-guardedness is slightly less general, it was shown in \cite{CaGK08} that the class of $\beta$-guarded tgds remains a very natural class to consider. In particular, it was shown in \cite{CaGK08} that we only need $\beta$-guardedness (as opposed to general guardedness) to cover interesting classes of ontologies, including languages from the DL-lite family \cite{ACDL*05}. Another important advantage of $\beta$-guardedness is that it ensures an useful property, established in \cite{CaGK08}, called the \emph{bounded guard-depth property}. This property (defined for $\beta$-guarded tgds only) proves indeed very relevant here as it coincides in fact with the general property of \emph{bounded depth} which was discussed in the previous section. Combining the results in \cite{CaGK08} with this observation, we obtain the following results:

\begin{lemma} \label{lem:gdep} 
If $\Sigma$ is a finite set of $\beta$-guarded tgds and $\Q$ is a finite set of instance, then $(\Sigma,\Q)$ has bounded depth $k$ for some computable $k$ that depends both on $\Sigma$ and $\Q$.
\end{lemma}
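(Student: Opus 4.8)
The plan is to reduce Lemma~\ref{lem:gdep} to Theorem~\ref{theo:bdepth} by exhibiting a finite bound $k$, depending only on $\Sigma$ and $\Q$, such that $D \land \Sigma \entails \Q$ iff $P^k_\Sigma(D) \entails \Q$. In other words, I want to show that for $\beta$-guarded tgds the \emph{bounded guard-depth property} established in \cite{CaGK08} is precisely an instance of the \emph{bounded depth} property of the previous section. Since the latter is phrased in terms of the skolemized program $P_\Sigma$ and skolem-term depth $d(\cdot)$, the first step is to make the translation between the two vocabularies explicit: the ``guard-depth'' used in the chase-based analysis of \cite{CaGK08} (the length of the longest chain of newly invented nulls guarding one another) must be shown to coincide, up to a fixed additive or multiplicative constant, with the skolem-term depth $d(t)$ appearing in the definition of $P^k_\Sigma$.

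Concretely, I would proceed as follows. First, recall the result of \cite{CaGK08} that for $\beta$-guarded tgds the chase is complete for query answering up to a depth that depends only on the number of distinct null-introducing ``types'' (governed by the arity of the guards and the width of $\Q$). Second, observe that a $\beta$-guarded tgd $B \to H$ has the property that \emph{every} body variable occurs in the guard $G$; hence, after skolemization, every function term generated by firing the rule takes as arguments only terms that were already ``witnessed together'' in a single guard atom. This locality is the engine of the bound: it forces any skolem term occurring in a match for $\Q$ to have been built along a bounded-length guarded path, so its depth $d(t)$ is bounded by a function of $\Sigma$ and $\Q$ alone. Third, I would transfer the bounded guard-depth conclusion of \cite{CaGK08} through the correspondence between chase derivations and the minimal Herbrand model $P_\Sigma(D)$: the two are essentially the same object once egds are simulated away via $\mathsf{Sim}_\E$, so a guard-depth bound on the relevant chase atoms yields a skolem-depth bound $k$ on the atoms of $P_\Sigma(D)$ that can participate in a homomorphism from any $Q \in \Q$. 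Setting $k$ to this value gives $P^k_\Sigma(D) \entails \Q$ whenever $D \land \Sigma \entails \Q$, and the converse direction is immediate from soundness ($P^k_\Sigma(D) \subseteq P_\Sigma(D)$ and $P_\Sigma(D)$ is a model of $D \land \Sigma$).

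The main obstacle I anticipate is the bookkeeping in matching the two notions of depth precisely, i.e. verifying that the guard-depth bound of \cite{CaGK08} really does cap $\max\{d(\theta(v))\}$ for the valuations $\theta$ witnessing a query match, rather than merely capping some coarser measure. In particular one must be careful that the $\beta$-guardedness hypothesis (every body variable, not just every universal variable, lies in the guard) is genuinely used: for merely guarded tgds the body may contain ``side'' variables outside the guard, and those can in principle seed skolem terms whose arguments were never jointly guarded, breaking the depth bound. Thus the second step above, showing that $\beta$-guardedness forces all skolem arguments to come from a single guard atom, is where the real content lies, and it is exactly the point that distinguishes this lemma from the general guarded case treated later in the section. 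Once that locality is secured, the reduction to Theorem~\ref{theo:bdepth} is mechanical, and the computability of $k$ follows because it is a concrete function of the arities in $\Sigma$ and the size of $\Q$.
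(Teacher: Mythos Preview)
Your approach matches the paper's: the paper does not actually prove this lemma but simply asserts that the \emph{bounded guard-depth property} established in \cite{CaGK08} for $\beta$-guarded tgds ``coincides in fact with the general property of bounded depth'' defined here, and then states Lemma~\ref{lem:gdep} as a direct consequence. Your outline of the translation between guard-depth in the chase and skolem-term depth in $P_\Sigma$ is a reasonable expansion of that one-line claim.

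Two small corrections. First, your opening phrase ``reduce Lemma~\ref{lem:gdep} to Theorem~\ref{theo:bdepth}'' has the dependency inverted: Theorem~\ref{theo:bdepth} takes bounded depth as a \emph{hypothesis}, so Lemma~\ref{lem:gdep} is what feeds the subsequent corollary, not something you derive from the theorem. Second, the detour through $\mathsf{Sim}_\E$ is unnecessary here, since by hypothesis $\Sigma$ consists only of tgds and there are no egds to simulate away.
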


\begin{corollary}[of Theorem~\ref{theo:bdepth}]
There is an algorithm that, given a set $\Sigma$ of $\beta$-guarded tgds and a set $\Q$ of instances, computes a Datalog rewriting for $(\Sigma,\Q)$.  
\end{corollary}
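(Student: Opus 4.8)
The plan is to obtain the statement as a direct composition of Lemma~\ref{lem:gdep} and Theorem~\ref{theo:bdepth}, so that the corollary requires no new construction: the two results are designed precisely to be chained, and the only genuine content is to check that the output of the former is exactly the input expected by the latter, and that both are effective.

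First I would invoke Lemma~\ref{lem:gdep}. Since $\Sigma$ is a finite set of $\beta$-guarded tgds and $\Q$ is a finite set of instances, this lemma guarantees that $(\Sigma,\Q)$ has bounded depth $k$, and, crucially, that this $k$ is \emph{computable} from $\Sigma$ and $\Q$ rather than merely existing. The first stage of the algorithm is therefore to run the procedure underlying Lemma~\ref{lem:gdep} to obtain the concrete integer $k$. Second, with $k$ now in hand, I would feed $\Sigma$, $\Q$ and $k$ into the algorithm of Theorem~\ref{theo:bdepth}, which states precisely that when $(\Sigma,\Q)$ has bounded depth and the bound $k$ is known, one can compute a Datalog rewriting $(\A,\Gamma,G)$ for $(\Sigma,\Q)$. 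Since the hypothesis of Theorem~\ref{theo:bdepth} is exactly what the first stage establishes, the theorem applies verbatim and its output is the required Datalog rewriting. The composite of the two effective procedures is the desired algorithm, and its correctness is inherited directly from the correctness statements of Lemma~\ref{lem:gdep} and Theorem~\ref{theo:bdepth}.

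The single point that demands care, and the step I would treat as the main obstacle, is the alignment of the two notions of depth. Lemma~\ref{lem:gdep} rests on the \emph{bounded guard-depth} property established in \cite{CaGK08} for $\beta$-guarded tgds, whereas Theorem~\ref{theo:bdepth} consumes the general bounded depth property of the previous section, phrased in terms of the skolem instances $P^k_\Sigma(D)$. I would therefore make explicit the identification, already signalled in the surrounding discussion, that for $\beta$-guarded tgds these two properties coincide, so that the integer $k$ delivered by Lemma~\ref{lem:gdep} bounds the depth of the skolem terms in $P^k_\Sigma$ exactly as Theorem~\ref{theo:bdepth} requires. Once this coincidence is recorded, no further argument is needed and the corollary follows.
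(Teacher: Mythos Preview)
Your proposal is correct and matches the paper's approach exactly: the paper gives no explicit proof for this corollary, labeling it simply ``[of Theorem~\ref{theo:bdepth}]'' immediately after Lemma~\ref{lem:gdep}, so the intended argument is precisely the chaining you describe. Your final paragraph on aligning guard-depth with the general bounded-depth notion is already absorbed into the statement of Lemma~\ref{lem:gdep} itself (the paper notes just before it that the two properties coincide), so no additional verification is needed at this point.
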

 
As already discussed, a $\beta$-guarded tgd is only a special case of guarded tgd, and the work of \cite{BaGO10} suggests that there is no trivial reduction from the class of guarded tgds to the class of $\beta$-guarded tgds. This is why we consider an alternative approach, in the following section, for the more general case. 

\subsubsection{From Guardedness To Flatness}

In this section, we provide a proof (sketch), based on a technical notion of \emph{flatness}, for the following result:

\begin{theorem} \label{theo:gflat}
For every finite set $\Sigma$ of guarded tgds and set $\Q$ of instances, there is a Datalog rewriting for $(\Sigma,\Q)$.  
\end{theorem}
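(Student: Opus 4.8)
The plan is to reduce the general guarded case to the $\beta$-guarded case handled in the previous section, by a syntactic transformation that adds a guard-carrying "annotation" atom to each body so that the body variables become covered by a single atom, while recording the extra information needed to preserve equivalence. The obstacle that makes guardedness harder than $\beta$-guardedness is that a guarded tgd $B \to H$ only requires some atom $G \in B$ to cover the \emph{universal} variables $(\V B \cap \V H)$, not all of $\V B$. The body-only variables (those in $\V B \setminus \V H$) may be scattered across several atoms with no single atom covering them, so the bounded-depth argument of Lemma~\ref{lem:gdep} does not apply directly. The notion of \emph{flatness} announced in the section title is, I expect, the device that absorbs these stray body-only variables: intuitively, we want to "flatten" the body so that the fresh nulls produced during a chase interact only through a bounded interface, recovering a guard for the entire relevant portion of the derivation.

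First I would make precise what \emph{flatness} means. The most natural candidate is a transformation $\Flat$ that, given a guarded set $\Sigma$, produces a set of $\beta$-guarded tgds $\Sigma^\flat$ over an extended schema together with a correspondence between derivations. The key structural fact about guarded chases (from \cite{CaGK08,BaGO10}) is that the chase forest has \emph{bounded treewidth}: every atom generated by chasing a guarded tgd lives in a "guarded set" whose variables are bounded by the guard. I would exploit this by introducing, for each guarded tgd $B(X,Y) \to H(X,Z)$ with guard $G$, auxiliary predicates that pack the relevant neighborhood of $G$ into a single atom, so that the body-only variables $Y$ are no longer free-floating but are carried along inside a guard atom. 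The aim is to arrange that in $\Sigma^\flat$ every tgd body has an atom covering \emph{all} its variables, i.e. is $\beta$-guarded, while the semantics of $(\Sigma,\Q)$ is faithfully simulated.

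Next I would establish the equivalence: for all $\Sigma$-free databases $D$, we have $D \land \Sigma \entails \Q$ iff $D^\flat \land \Sigma^\flat \entails \Q^\flat$, where $D^\flat$ and $\Q^\flat$ are the corresponding translations on the extended schema. Soundness and completeness here should follow from a step-by-step correspondence between a chase derivation under $\Sigma$ and a chase derivation under $\Sigma^\flat$, using Theorem~1 to pass between entailment and the chase, and using the bounded-treewidth property to guarantee that the flattening only ever needs to track a bounded amount of information (so that $\Sigma^\flat$ and $\Q^\flat$ remain finite). Once this is in place, $\Sigma^\flat$ is a finite set of $\beta$-guarded tgds, so by Lemma~\ref{lem:gdep} the pair $(\Sigma^\flat,\Q^\flat)$ has bounded depth $k$ for a computable $k$, and Theorem~\ref{theo:bdepth} yields a Datalog rewriting $(\A,\Gamma,G)$ for $(\Sigma^\flat,\Q^\flat)$. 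Composing the Datalog rules that realize the translation $D \mapsto D^\flat$ (these are full tgds, hence legitimate Datalog rules) with $\Gamma$ finally produces a Datalog rewriting for the original $(\Sigma,\Q)$.

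The \textbf{hard part} will be defining $\Flat$ so that it simultaneously (i) turns every body into a $\beta$-guarded one, (ii) faithfully preserves entailment in both directions, and (iii) stays finite. The tension is between packing enough of the guarded neighborhood into the auxiliary atoms to achieve $\beta$-guardedness and keeping the arity (and hence the number) of auxiliary predicates bounded by a function of $\Sigma$ and $\Q$ alone. I expect the treewidth bound of guarded derivations to be exactly what caps the required arity, and the main technical labor to lie in verifying the derivation-by-derivation simulation, especially checking that stray body-only variables can always be resolved back through a guard atom without loss. This is presumably why the authors flag this result as "the most technical and arguably the most important contribution."
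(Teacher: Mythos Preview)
Your plan and the paper's argument diverge substantially, and in particular your guess at what ``flatness'' means is not what the authors intend.

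You propose to reduce guarded tgds to $\beta$-guarded tgds by introducing auxiliary predicates that pack body-only variables into a single guard atom, and then to invoke Lemma~\ref{lem:gdep}. The paper explicitly rejects this route: just before the subsection it remarks that the work of \cite{BaGO10} ``suggests that there is no trivial reduction from the class of guarded tgds to the class of $\beta$-guarded tgds,'' and this is stated as the reason for taking an alternative approach. Your proposal does not supply the missing construction; it only names the desiderata (i)--(iii) and defers the difficulty, so as written it is a strategy rather than a proof, and one the authors regard as at least non-obvious.

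What the paper actually does is quite different. ``Flatness'' is not a syntactic transformation toward $\beta$-guardedness; it is a restriction on the \emph{chase}. A chase step $I \to J$ using a tgd $B(X,Y)\to H(X,Z)$ is called \emph{flat with respect to $D$} when the image of the universal variables $X$ lies entirely in $\V D$. The pair $(\Sigma,\Q)$ has the \emph{flat chase property} when flat derivations already suffice for completeness. The paper then observes (Lemma~4) that the flat chase property is exactly bounded depth with $k=1$, so Theorem~\ref{theo:bdepth} applies directly. The technical core is Lemma~\ref{lem:5}: from a guarded $\Sigma$ one builds an \emph{equivalent} finite $\Sigma'$ (on the same schema, no auxiliary predicates) such that $(\Sigma',\Q)$ has the flat chase property. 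This $\Sigma'$ is obtained by \emph{composing} tgds: whenever a non-flat chase step uses a tgd $r_2$ whose guard was produced by an earlier step via $r_1$, one forms a derived tgd $r_3$ that fuses the two, pushing the non-flatness one step earlier. Closing under this operation gives an infinite set $\Sigma^\infty$ with the flat chase property (Fact~3); guardedness ensures a bounded ``guard width'' is preserved under composition (Fact~2), which in turn bounds the left-core bodies up to isomorphism (Fact~4), so a finite truncation $\Sigma^{(k)}$ already suffices (Fact~5).

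In short: you aim for $\beta$-guardedness via schema extension and then depth $k$ from Lemma~\ref{lem:gdep}; the paper aims for depth $k=1$ directly via tgd composition on the original schema, with finiteness coming from a guard-width invariant rather than from treewidth of the chase forest. Your route may be completable, but it is not the paper's, and the concrete construction you would need is precisely the step the authors flag as non-trivial.
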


The key idea of the proof can be summarized as follows: when $\Sigma$ is a set of guarded tgds, there exists an equivalent set of tgds $\Sigma'$ which enjoys the \emph{flat chase property}. This property meaning intuitively that it is sufficient  to chase the tgds $B(X,Y){\to}H(X,Z)$ of $\Sigma'$ for the renamings of $X$ that map $X$ to the variables of the original instance $D$ (which intuitively correspond to constant values). In turn, the flat chase property can be linked with the property of bounded depth (for the depth $k=1$) and Theorem~\ref{theo:bdepth} can be used again to prove the existence of a Datalog rewriting. \\

{\bf Flat Chase.} Consider a triple $(D,I,J)$ of instances and a set $\Sigma$ of tgds. 
When $I \chase_\Sigma\ J$ we say that the chase step is \emph{flat} with respect to $D$, denoted $I \flat{D}_{\Sigma}\ J$, when there is a tgd $B(X,Y){\to}H(X,Z)$ in $\Sigma$ and two substitutions $\theta_X$ and $\theta_Y$ such that:
\begin{itemize}
\item $B[\theta_X,\theta_Y] \subseteq I$ and $J = I \cup H[\theta_X,\theta_Z]$ for some $I$-fresh renaming $\theta_Z$; and
\item in addition, for all $x\in  X$, it holds that $\theta_X(x) \in \V{D}$.
\end{itemize} 
Given an instance $U$, we finally let $U \in \Flat(D,\Sigma)$ when there exists a finite derivation of the form
$$
D = I_1 \flat{D}_{\Sigma}\ I_2 \flat{D}_{\Sigma}  \ldots \flat{D}_{\Sigma}\  I_n = U.  
$$

\begin{definition}
Given a set $\Sigma$ of tgds and a set $\Q$ of instances, we say that $(\Sigma,\Q)$ has the 
\emph{flat chase property} iff, for all instance $D$ the following statements are equivalent: 
\begin{itemize}
\item $D \land \Sigma \entails Q$
\item $\exists U \in \Flat(D,\Sigma),\ U \entails Q$
\end{itemize}  
\end{definition}

\begin{lemma}
If $(\Sigma,\Q)$ has the flat chase property, then $(\Sigma,\Q)$ has bounded depth (for the depth $k=1$) and therefore, there exists a Datalog rewriting for $(\Sigma,\Q)$. 
\end{lemma}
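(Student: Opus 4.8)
The plan is to establish the two assertions in order: flat chase property $\Rightarrow$ bounded depth with $k=1$, and then bounded depth $\Rightarrow$ existence of a Datalog rewriting. The second step is immediate from Theorem~\ref{theo:bdepth}, which already turns any pair of bounded depth (with a known bound) into a Datalog rewriting, and here the bound $k=1$ is given explicitly. So the whole argument reduces to showing that, for every instance $D$, the equivalence $D \land \Sigma \entails Q \iff P^1_\Sigma(D) \entails Q$ holds.

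First I would set up a correspondence between flat chase derivations from $D$ and the skolem instance $P^1_\Sigma(D)$, sending each null introduced by a flat step to the skolem term $f\ang{\bar v}$ that skolemization attaches to the corresponding existential variable, where $\bar v$ is the image under $\theta_X$ of the universal variables $X$. The key structural observation is that a flat step fires $B(X,Y)\to H(X,Z)$ only when $\theta_X(x)\in\V{D}$ for every $x\in X$; hence each $\bar v$ consists only of variables of $D$ (terms of depth $1$), so every null created along a flat derivation corresponds to a skolem term of depth exactly $2$. Moreover a universal variable can never bind to a previously created null, since such a null lies outside $\V{D}$; consequently no skolem term of depth $>2$ is ever produced. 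This is exactly the regime of $P^1_\Sigma(D)$: firing the skolemized rules with depth-$1$ valuations on the arguments that feed the skolem functions yields precisely the head terms $f\ang{\bar v}$ with $\bar v\subseteq\V{D}$.

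With the correspondence in place I would prove the two implications. For soundness the flatness is not even needed: $P^1_\Sigma(D)\subseteq P_\Sigma(D)$, and since every atom of the minimal Herbrand model $P_\Sigma(D)$ is entailed by $D\land\Sigma$, any match of $Q$ in $P^1_\Sigma(D)$ yields $D\land\Sigma\entails Q$. For completeness I would invoke the flat chase property: if $D\land\Sigma\entails Q$ then some $U\in\Flat(D,\Sigma)$ satisfies $U\entails Q$, and it remains to check that the skolem image of $U$ embeds into $P^1_\Sigma(D)$, so that $P^1_\Sigma(D)\entails Q$ as well. Putting the two directions together gives bounded depth for $k=1$, and a final appeal to Theorem~\ref{theo:bdepth} produces the Datalog rewriting.

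The main obstacle is precisely this completeness embedding, and the delicate point is the treatment of the body-only variables $Y$. In a flat step these are unconstrained and may match nulls of depth $2$ created earlier in the derivation, whereas the definition of $P^1_\Sigma$ bounds the depth of \emph{all} arguments in a valuation, including those assigned to $Y$. The reason this should nonetheless cause no loss is that the skolem function attached to each $z\in Z$ depends only on the universal variables $X$ and never on $Y$, so the depth of a freshly produced term is governed entirely by $\theta_X$; thus the only depth that must be bounded to stay inside the $k=1$ regime is that of $\theta_X(X)$, which is forced to $1$ by the flatness condition $\theta_X(X)\subseteq\V{D}$. Making this rigorous --- i.e. showing that each match on $Y$ used along a flat derivation can be replayed inside $P^1_\Sigma(D)$ without inflating the produced depth, and that the renaming between chase nulls and skolem terms stays consistent throughout --- is the technical core that the full proof must carry out, and it is where I would concentrate the induction on derivation length.
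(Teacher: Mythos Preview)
The paper states this lemma without proof, so there is no argument of its own to compare against; your overall plan is the natural one. You correctly isolate the only nontrivial step (embedding a flat derivation into $P^1_\Sigma(D)$) and you correctly flag the obstacle concerning the body-only variables $Y$. The trouble is that the way you then talk yourself past this obstacle does not work, and in fact the obstacle is genuine for $k=1$ under the paper's own definitions.

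Take $\Sigma=\{\,A(x)\to\exists z\,B(x,z);\ B(x,y)\to C(x)\,\}$, $\Q=\{\{C(x)\}\}$ and $D=\{A(a)\}$. For this $\Sigma$ every chase step from any $D$ is already flat (the universal variable of each rule is always matched to an element of $\V D$), so $(\Sigma,\Q)$ has the flat chase property, and the flat chase from $D$ derives $C(a)$. On the skolem side, $P^1_\Sigma(D)$ can fire the first rule at $x=a$ to obtain $B(a,f\ang{a})$, but it \emph{cannot} fire $B(x,y)\to C(x)$: the only available match sends $y$ to $f\ang{a}$, which has depth~$2$, while the definition of $P^k_\Sigma$ bounds the depth of \emph{every} variable in $X\cup Y$. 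Hence $P^1_\Sigma(D)\not\entails\Q$ and bounded depth with $k=1$ fails. Your rescue --- that the skolem functions depend only on $X$, so the depth of what is \emph{produced} stays small --- is true but beside the point: the problem is not the depth of the head terms but whether the \emph{body} can be matched within the depth constraint on $Y$, and here it cannot.

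The repair is simply to take $k=2$. Every term occurring anywhere along a flat derivation has skolem depth at most $2$ (it is either an element of $\V D$ or a fresh null $f\ang{\bar v}$ with $\bar v\subseteq\V D$), so every body match used by the flat chase is a valuation of depth $\le 2$ and can be replayed inside $P^2_\Sigma(D)$; soundness is immediate from $P^2_\Sigma(D)\subseteq P_\Sigma(D)$. With this corrected bound your appeal to Theorem~\ref{theo:bdepth} goes through unchanged. The ``$k=1$'' in the lemma thus appears to be an off-by-one slip relative to the paper's convention that variables already have depth $1$.
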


The proof of Theorem~\ref{theo:gflat} finally relies on Lemma~\ref{lem:5} below.

\begin{lemma} \label{lem:5}
For all finite set $\Sigma$ of guarded tgds and all finite set $\Q$ of instances, there exists a finite set $\Sigma'$ of tgds such that $\Sigma \equiv \Sigma'$ and $(\Sigma',\Q)$ has the flat chase property. 
\end{lemma}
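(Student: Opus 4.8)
The plan is to compile $\Sigma$ into an equivalent set $\Sigma'$ by \emph{composing} guarded tgds along branches of the guarded chase, so that every chase step which, in $\Chase(D,\Sigma)$, would fire on a \emph{generated} (null-bearing) guard atom is instead absorbed into a single tgd that fires directly on the original data. I would first recall the tree-likeness of the guarded chase: for guarded $\Sigma$ and any $D$, the instance $\Chase(D,\Sigma)$ admits a tree decomposition of width bounded by the maximal arity $w$, rooted at a bag containing $\V{D}$, in which every non-root bag is the set of terms of some guard atom created by a single chase step. The guardedness condition $(\V B \cap \V H) \subseteq \V G$ guarantees that whenever a tgd fires its frontier $X$ lies entirely inside one guard atom, so each step sits at one node and only spawns children below it; in particular a step fired on an original guard (over $\V{D}$) is already \emph{flat} in the sense of $\Flat(D,\Sigma)$.

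Next I would bound the depth that is relevant to $\Q$. Since $\Q$ is finite, each $Q \in \Q$ has boundedly many atoms, so any homomorphism $h$ with $Q[h] \subseteq U$ for some $U \in \Chase(D,\Sigma)$ can be analysed by a \emph{squid}-style decomposition over the tree decomposition: a dense ``head'' of the match falls inside a single bag $b$, and the remaining atoms form ``tentacles'' descending branches of length at most $|Q|$. Because there are only finitely many guarded bag-\emph{types} (bounded width plus a finite schema), a pumping argument lets me assume the head bag $b$ lies at depth at most some computable $d_0$: if a type repeats along the root-to-$b$ path, the deterministic (up to isomorphism) subtree below the lower occurrence can be relocated to the higher one, moving the whole match upward. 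Hence any $Q$-match in the chase can be confined to a fragment of depth at most $d_{\max} = d_0 + |Q|$.

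I then build $\Sigma'$ as $\Sigma$ augmented with a finite family of \emph{summary} tgds $r_T : B_T \to H_T$, one for each guarded derivation \emph{template} $T$ that starts from original atoms and unfolds the chase down to depth $\le d_{\max}$: here $B_T$ is the bounded set of (variable-only) atoms over $\V{D}$ that trigger $T$, and $H_T$ is the union of all atoms generated along $T$, the subtree's nulls becoming fresh existential head variables $Z$. Bounded width and bounded depth make the number of templates finite up to isomorphism, so $\Sigma'$ is finite. Crucially each $r_T$ is obtained by composing tgds of $\Sigma$, so by soundness of the chase any $M \entails \Sigma$ satisfying $B_T$ must satisfy $H_T$; thus $\Sigma \entails r_T$ and $\Sigma \equiv \Sigma \cup \{r_T\}_T = \Sigma'$, leaving models and certain answers unchanged.

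Finally I would verify the flat chase property for $(\Sigma',\Q)$. Soundness is immediate, since a flat step is a restricted chase step, so $U' \in \Flat(D,\Sigma')$ implies $D \land \Sigma' \entails U'$ and hence $U' \entails Q \Rightarrow D \land \Sigma \entails Q$. For completeness, if $D \land \Sigma \entails Q$ then by Theorem~1 some $U \in \Chase(D,\Sigma)$ satisfies $U \entails Q$, and by the depth bound the witnessing match lives in a depth-$d_{\max}$ fragment; the corresponding template rule $r_T$ fires with body matched against original atoms of $D$, so its frontier maps entirely into $\V{D}$ and the step is flat. Firing every applicable $r_T$ over $D$ therefore reproduces (up to homomorphism) the whole depth-$d_{\max}$ portion of $\Chase(D,\Sigma)$, yielding some $U' \in \Flat(D,\Sigma')$ with $U' \entails Q$. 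I expect the main obstacle to be the depth-bounding step: making the pumping/squid argument precise for \emph{general} (not merely $\beta$-) guarded tgds, so that the head of a match can always be relocated to bounded depth and its tentacles re-rooted without losing the match. This is exactly where general guardedness is harder than $\beta$-guardedness, and it is what forces $\Sigma'$ to depend on $\Q$.
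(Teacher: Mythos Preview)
Your route is genuinely different from the paper's, and the obstacle you flag at the end is exactly where your argument currently breaks; the paper does \emph{not} go via any chase-depth bound for general guarded tgds.

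The paper's proof is purely syntactic. It introduces a binary \emph{derivation} operation on tgds: from a (carefully refined) $r_1:B_1(X_1,Y_1)\to H_1(X_1,Z_1)$ and a (refined) $r_2:G_2,B'_2\to H_2$ with $\emptyset\neq G_2\subseteq H_1$ and the variables of the side part $B'_2$ disjoint from $Z_1\cup Y_1$, one forms $r_3:(B_1\cup B'_2)\to(H_1\cup H_2)$. Closing $\Sigma$ under this operation yields an (infinite) $\Sigma^{\infty}$ which has the flat chase property almost by construction: the first non-flat step in any $\Sigma$-derivation can always be absorbed into the earlier flat step that produced its guard atom, using some $r_3\in\Sigma^{\infty}$. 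The finiteness of $\Sigma'$ then comes from a \emph{guard-width} bound: every derived $r_3$ admits a decomposition of its body into a single guard atom plus blocks of size at most $\max(\gw(\Sigma),\lw(\Sigma))$ that meet only in the guard's variables. Over a fixed schema this leaves only finitely many possible body shapes (``left-cores''), so a finite subfamily $\Sigma^{(k)}$ suffices once $k$ dominates the query size and these body sizes. No tree-depth or pumping is used.

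Your pumping argument, as written, is the one that works for $\beta$-guarded tgds, where each body is contained in a single guard atom and hence the subtree below a bag is determined by the atoms over that bag. For general guarded tgds this fails: the non-guard body atoms of a rule may be matched via the $Y$-variables against atoms lying in \emph{other} branches of the chase tree (or arbitrarily far up), so two bags with identical local type can have non-isomorphic subtrees because different side conditions are satisfiable from their respective positions. To repair the argument you would have to enrich the ``bag type'' to record, for each guarded tuple, which compound guarded consequences are derivable---and computing that closure is essentially what the paper's $\Sigma^{\infty}$ does by other means. A related problem hits your template construction: even if a $Q$-match sits at bounded depth, the firings that produce its atoms may consume side atoms that are themselves generated (not over $\V D$), forcing you to include the firings that produced \emph{those}, and so on; nothing in your outline bounds the size of this dependency closure, so neither the finiteness of $B_T$ nor of the family $\{r_T\}$ is secured.
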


\begin{proof}[Sketch]
 Given a tgd $r: B(X,Y){\to}H(X,Z)$, a refinement of $r$ is a tgd $r'=B'{\to}H'$ where $B'=B[\theta_X,\theta_Y]$ and $H[\theta_X,\theta_Z]$ for some renamings $\theta_X$, $\theta_Y$ ,$\theta_Z$ where $\theta_Z$ is a $B'$-fresh renaming of $Z$. We say that $r'$ is a \emph{careful} refinement of $r$ when, in addition, $\theta_Y$ is a $H'$-fresh renaming. Given a set of tgds $\Sigma$, we let $\Ref(\Sigma)$ (resp. $\CRef(\Sigma)$) the sets of tgds corresponding to a refinement (resp. careful refinement) of some tgd of $\Sigma$. Given a tgd $r:B \to H$ we say that $r$ is of the \emph{split form}   
$$
r:G(X,U),B'(X',U',V) \to H(X,Z) 
$$
when $X$ and $Z$ are defined as usual, 
$G$ is a subset of $B$ satisfying $X \subseteq \V{G}$, 
 $U$ is the set of remaining variables in $G$, $B'$ is the set of atoms in $B{\setminus}G$, $V$ is the set of variables 
 that occur only in $B'$, and finally $(X',Y')= (X \cap \V{B'},Y \cap \V{B'})$.

Given two sets $\Sigma_1$ and $\Sigma_2$ of tgds, we say that a tgd $r_{3}$ is derived from $(\Sigma_1,\Sigma_2)$ when there exists a tgd $r_1 \in \CRef(\Sigma_1)$ and a tgd $r_2 \in \Ref(\Sigma_2)$ such that:
\begin{itemize}
\item $r_1: B_1(X_1,Y_1) \to H_1(X_1,Z_1)$; 
\item $r_2: G_2(X_2,U_2), B'_2(X'_2,U'_2,V_2) \to H_2(X_2,Z_2)$\vspace{2mm}\\ 
\begin{tabular}{ll}
where : &1.\  $G_2$ is non-empty and contained in $H_1$  \\
&2.  $(X'_2 \cup U'_2 \cup V_2)$ is disjoint from $(Z_1 \cup Y_1)$ \\
&3.  $Z_2$ is disjoint from $(X_1 {\cup} Y_1 {\cup} Z_1)$; and 
\end{tabular}
\item $r_{3} = (B_1 \cup B'_2) \to (H_1 \cup H_2)$. 
\end{itemize}
It follows here from 1. and 2. that $X'_2 \cup U'_2 \subseteq X_1$ while 
 $V_2$ is disjoint $Y_1$. Therefore, the tgd $r_{3}$ is of the slit form   
$$
r_3 : G_3(X_3,U_3), B'_3(X'_3,\emptyset,V_3) \to H_3(X_3,Z_3)
$$
where $G_3 = B_1$, $X_3 = X_1$, $U_3 = Y_1$, $B'_3 = B'_2$, $X'_3 = X'_2 \cup U'_2$, $V_3 = V_2$, $H_3 = H_1 \cup H_2$, and $Z_3 = Z_1 \cup Z_2$. 
\setlength{\leftmargini}{9pt}
\begin{itemize}
\item[] {\bf Fact 1.} When a tgd $r_{3}$ is derived from $(\Sigma_1,\Sigma_2)$, it holds that $\Sigma_1 \cup \Sigma_2 \entails r_{3}$
\item[] {\emph{Proof Sketch.}}
Using the previous notations, we have
$$
(B_1 \cup B'_2) \chase_{\{r_1\}} (B_1 \cup B'_2 \cup H_1) \chase_{\{r_2\}} (B_1 \cup B'_2 \cup H_1 \cup H_2) 
$$
and it follows that $\{r_1,r_2\} \entails r_{3}$. 
\end{itemize}
Given an integer $k$ we say that a tgd $r$ is $k$-guarded if $r$ is of the form 
$$
r:G(X,U),B'(X',U',V) \to H(X,Z) 
$$
for some instance $G$ with only one atom (that is, a guard atom), and there exists a set of instances $(B^i)_{i \le n}$  called a $(G,k)$-\emph{decomposition} of $B$  such that:(1) $B' = \cup_{i\le n} (B^i)$, (2) for each $i \le n$, $B^i$ has at most $(k-1)$ atoms; and (3) for all $i \not=j$, it holds that $\V{B^i} \cap \V{B^j} \subseteq \V{G}$.  We define the guard width of a guarded tgd $r$, denoted $\gw(r)$ as the smallest $k$ such that $r$ is $k$-guarded. In contrast, we define the left width of a tgd $r$, denoted $\lw(r)$, as the number of atom in the body of $r$. Note that, every guarded tgd is such that $\gw(r) \le \lw(r)$. Given a set of tgds $\Sigma$, we finally let $\gw(\Sigma) = \mathsf{max}\{\gw(r), r \in \Sigma\}$ and $\lw(\Sigma) = \mathsf{max}\{\gw(r), r \in \Sigma\}$.
\begin{itemize}
\item[] {\bf Fact 2.} 
When a tgd $r_{3}$ is derived from $(\Sigma_1,\Sigma_2)$, it holds that $\gw(r_3) \le \mathsf{max}(\gw(\Sigma_1),\lw(\Sigma_2))$. 
\item[] {\emph{Proof Sketch.}} It can be checked that $\lw(r') \le \lw(r)$ whenever $r'$ is a refinement of $r$ and $\gw(r') \le \gw(r)$ whenever $r'$ is a careful refinement of $r$. Let $r_1 \in \CRef(\Sigma_1)$ be a $k$-guarded tgd of form 
$$
r_1: G_1(X_1,U_1), B'_1(X'_1,U'_1,V_1) \to H_1(X_1,Z_1)
$$
and a $(G_1,k)$-decomposition $B'_1 = \cup_{i\le n} B^i_1$.  
Let $r_2 \in \Ref(\Sigma_2)$ such that $\lw(r_2) \le k$ and $r_2$ is of the form 
$$
r_2: G_2(X_2,U_2), B'_2(X'_2,U'_2,V_2) \to H_2(X_2,Z_2).
$$
Finally, let $r_3$ be the tgd derived from $r_1$ and $r_2$, and recall that $r_3$ is of the form 
$$
r_3 : G_3(X_3,U_3), B'_3(X'_3,\emptyset,V_3) \to H_3(X_3,Z_3)
$$
where $G_3 = G_1 \cup B'$, $X_3 = X_1$, $U_3 = U_1 \cup V_1$, and $B'_3 = B'_2$. 
Since $\lw(r_2) \le k$ and $G_2$ is non-empty, we have $|B'_2|\le (k-1)$. For all $i \le n$, we have
 $\V{B^i_1} \cap \V{B'_2} \subseteq X_1 \subseteq \V{G_1}$. Therefore, the set of instances $\{B^i_1\}_{i\le n} \cup \{B'_2\}$ is a $(G_1,k)$ decomposition of the body of $r_3$ and $r_3$ is $k$-guarded.
\end{itemize}  
Given a set of tgds $\Sigma$, we define the flatening of $\Sigma$, denoted $\Sigma^{\infty}$, as the minimal set of tgds $\Sigma'$ such that  $\Sigma \subseteq \Sigma'$ and $\Sigma'$ contains all the tgds that can be derived from $(\Sigma',\Sigma)$.  
\begin{itemize}
\item[] {\bf Fact 3.} For all set $\Sigma$ of tgds and all set $\Q$ of instances, $(\Sigma^{\infty},\Q)$ has the flat chase property. 
\item[] {\emph{Proof Sketch.}} 
Consider an instance $D$ and suppose that 
$D \land \Sigma \entails \Q$. By completeness of the chase, we know that there exists a derivation
$$
D = I_0 \chase_\Sigma\ I_1 \chase_\Sigma\ \ldots \chase_\Sigma\ I_n 
$$ 
such that $I_n \entails \Q$.  If this derivation is flat, and since $\Sigma \subseteq \Sigma^{\infty}$, 
we have $I_n \in \Flat(D,\Sigma^{\infty})$ and $I_n \entails \Q$. 
Suppose now that the derivation is not flat and consider the first integer $j \le n$ such that 
 $I_{j}$ is obtained from $I_{j-1}$ with a chase step which is not flat. We can the check that there exists a tgd $r_2: 
B_2(X_2,Y_2) \to H_2(X_2,Z_2)$  in $\Ref(\Sigma_2)$ such that 
$B_2 \subseteq I_{j-1}$, $I_{j}= I_{j-1} \cup H_2$ and $X_2 \not\subseteq \V D$. Consider a guard atom $\mathsf{G}(\bar v)$ for $r_2$. Since $X \subseteq \{\bar v\}$ and $X \not \subseteq \Var (D)$, there exits some $i \le j$ such that 
$\mathsf{G}(\bar v) \in (I_{i}\setminus I_{i-1})$. Let $G_2$ be the set of all the atoms of $B_2$ that belong to $(I_{j}\setminus I_{j-1}$. Write $r_2$ under the form  
$$
r_2: G_2(X_2,U_2), B'_2(X'_2,U'_2,V_2) \to H_2(X_2,Z_2).
$$
and observe that $G_2$ is non-empty (since it contains $\mathsf{G}(\bar v)$). Considering $N_j = \V{I_{j}\setminus I_{j-1}}$, since the derivation is flat from the step $1$ to the step $j$, we can observe that the only atom of $I_{i-1}$ containing a variable of $N_j$ are the atoms of $I_{j}{\setminus} I_{j-1}$. Therefore, $(U'_2 \cup V_2)$ is disjoint from $N_j$. 
By definition of careful refinements, we can now consider a tgd $r_1 \in \CRef(\Sigma)$ of the form  
$$
r_1: B_1(X_1,Y_1) \to H_1(X_1,Z_1)
$$
where $Y_1$ is set of variables disjoint from $(X'_2,U'_2,V_2)$, $B_1[\theta_{Y_1}] \subseteq I_{i-1}$ for some renaming $\theta_{Y_1}$ of $Y_1$ and $I_{i} = I_{i-1} \cup H_1$. Let $r_3 = B_3 \to H_3$ be the tgd such that 
$B_3 = B_1 \cup B'_2$ and $H_3 = H_1 \cup H_2$, and observe that $r_3 \in \Sigma^{\infty}$. 
Let $\theta'$ be an injective renaming of $Z_1 \cup Z_2$ into a set of fresh variables (disjoint from $\V{I_n}$) and for all $k \le j$, let $I'_k = I_k[\theta']$. We can observe that we have
$$
D = I_0  \chase_\Sigma\ \ldots \chase_\Sigma\ I_{j-1} \chase_{r_3} I'_j \chase_\Sigma \ldots \chase_\Sigma\ I'_n
$$ 
where the step $I_{j-1} \chase_{r_3} I'_j$ is now a flat step. 

We can finally generalize the above construction to show (by induction on $n$) that every chase derivation of $D$ by $\Sigma$ (leading to $I_n \entails \Q$) can be transformed into a flat derivation of $D$ by $\Sigma^{\infty}$ (leading to $I'_n \entails \Q$)
\end{itemize}
Given a tgd $r:B(X,Y){\to}H(X,Z)$, we define a left-core projection of $r$ as a minimal (for $\subseteq$) set of atoms $B' \subseteq B$ such that $B' = B[\theta_Y]$ for some renaming $\theta_Y$ of $Y$. Given a set of tgds $\Sigma$, we let 
$LC(\Sigma)$ be the set of instances $B'$ such that $B'$ is a left-core projection of some tgd $r \in \Sigma$. 
\begin{itemize}
\item[] {\bf Fact 4.} Given a finite schema $\sigma_0$, an integer $k_0$, and infinite set $\Sigma$ of $k_0$-guarded tgds over $\sigma_0$, the set 
$LC(\Sigma)$ is finite up to isomorphism. 
\item[] {\emph{Proof Sketch.}}  On a fixed schema $\sigma_0$, there are (up to isomorphism) only a finite number of atoms that can be used as a guard $G$ for a decomposition $(B^i)_{i \le n}$. For a fixed integer $k_0$ and for every instance $B = G \cup \bigcup_i B^i$ corresponding to the left-core projection of some tgd, there 
 is only a finite number of possible blocks $B^i$ satisfying $|B^i|\le {k-1}$ that can be used in the composition, up to bijective renaming of $\V{B_i}\setminus G$. It follows that each $B \in LC(\Sigma)$ is finite and that $LC(\Sigma)$ is finite up to isomorphism. 
\end{itemize}
Given a set of tgds $\Sigma$ and an integer $k$, we that that a tgd $r:B \to H$ is a $k$-tgd of $\Sigma$, denoted $\Sigma^{(k)}$ when there exists a tgd $B'\to H'$ in the flatening $\Sigma^\infty$ of $\Sigma$ such that: (1) $H$ is a subset of $H'$ containing at most $k$ atoms;  and (2) $B$ is a left-core projection of $B' \to H$. As a corollary of Fact 4, it can be observed that $\Sigma^{(k)}$ is finite (up to isomorphism) whenever $\Sigma$ is a finite set of guarded tgds. 
\begin{itemize}
\item[] {\bf Fact 5.}
For all sets $\Sigma$ of guarded tgds, all sets $\Q$ of instances and all integer $k$ such that 
$$k \ge \mathsf{max} \{|B|, B \in LC(\Sigma^{\infty})\cup \mathcal Q \}$$  
$(\Sigma^{(k)},\Q)$ enjoys the flat chase property. 
\item[] {\emph{Proof Sketch.}} The property can be shown by adapting the proof of Fact 3. Indeed, a flat derivation by $\Sigma^{\infty}$ can be modified in two ways that preserves completeness:(1) one may replace the body of a tgd by a left-core of this tgd; and (2) one may select, for each tgd $B{\to}H$, the atoms from $H$ which will be used later in the derivation (as to enforce than $|H| \le k$). 
\end{itemize}
We can finally use Fact 5 to conclude the proof of Lemma~\ref{lem:5} and Theorem~\ref{theo:gflat}.
\end{proof} 

\section{Conclusion and Future Work}

This paper presented contributions along two axes:\\ 

{\bf More General Classes.}  Considering a complete resolution procedure extends the possible fields of applications of Datalog$^\pm$.  Resolution can be used, first, to generalize existing tractability criteria (such as stickiness) but it can also be used, in practice, without any syntactic assumption.  A technique of integration or simulation however proves often useful (and sometimes necessary) to handle egds. Datalog rewriting finally covers the three main paradigms of Datalog$^\pm$: stickiness, termination and guardedness. There are also alternative paradigms of tractability that were considered in \cite{BLMS09} which have not been discussed here (to simplify the discussion) but are yet to compare with the class of Datalog-rewritable dependencies. In particular, a natural question (left as future work) is the following: is Datalog rewriting always possible under  \emph{bounded-tree width}\cite{Cour90}?  \\    

{\bf More Efficient Algorithms.} This paper introduced several algorithms and heuristics  that can be of clear practical use. In particular,  the resolution can certainly (at least, in some contexts) be more efficient than the chase. The technique of Datalog Rewriting can also prove useful in practice. In particular, one may consider Magic Set or any other standard optimisation of Datalog to improve efficiency. Nonetheless, there is still a long road ahead because the algorithms of Datalog Rewriting proposed in Section 4 are (for the moment) very much non-optimal. An interesting and challenging question that remains is the following: how to compute \emph{in practice} a Datalog rewriting of reasonable size?      

\bibliographystyle{abbrv}

{\small

\bibliography{rew}

}

\end{document}